\numberwithin{equation}{section}
\theoremstyle{plain}
\newtheorem{prop}{Proposition}[section]
\newtheorem{cor}[prop]{Corollary}
\newtheorem{lem}[prop]{Lemma}
\theoremstyle{remark}
\newtheorem{remark}[prop]{Remark}
\newtheorem{example}[prop]{Example}
\newcommand{\Ex}{\mathbb{E}}
\newcommand{\Cor}{\mathrm{cor}}
\newcommand{\R}{\mathbb{R}}
\newcommand{\N}{\mathbb{N}}
\newcommand{\Ic}{\mathcal{I}}
\newcommand{\Bc}{\mathcal{B}}
\newcommand{\Cc}{\mathcal{C}}
\newcommand{\Fc}{\mathcal{F}}
\newcommand{\Rc}{\mathcal{R}}
\newcommand{\Uc}{\mathcal{U}}
\newcommand{\Pc}{\mathcal{P}}
\newcommand{\Mc}{\mathcal{M}}
\newcommand{\Lc}{\mathcal{L}}
\newcommand{\Sc}{\mathcal{S}}
\newcommand{\Jc}{\mathcal{J}}
\renewcommand{\Pr}{\mathrm{P}}
\newcommand{\p}{\overset{\Pr}{\to}}
\newcommand{\vect}{\mathrm{vec}}
\newcommand{\diag}{\mathrm{diag}}
\newcommand{\supp}{\operatorname{supp}}
\newcommand{\1}{\mathbf{1}}
\newcommand{\eps}{\varepsilon}
\DeclareMathOperator*{\argsup}{arg\,sup}
\DeclareMathOperator*{\arginf}{arg\,inf}
\begin{document}

\begin{frontmatter}

\title{Copula-like inference for discrete bivariate distributions with rectangular supports}
\runtitle{Copula-like inference for discrete bivariate distributions with rectangular supports}

\begin{aug}
  \author[I]{\fnms{Ivan} \snm{Kojadinovic}\ead[label=eI]{ivan.kojadinovic@univ-pau.fr}}
  \and
  \author[I,T]{\fnms{Tommaso} \snm{Martini}\ead[label=eT]{tommaso.martini@unito.it}}
  \address[I]{CNRS / Universit\'e de Pau et des Pays de l'Adour / E2S UPPA \\ Laboratoire de math\'ematiques et applications IPRA, UMR 5142 \\ B.P. 1155, 64013 Pau Cedex, France \\ \printead{eI}}
  \address[T]{Department of Mathematics G.\ Peano, Università degli Studi di Torino \\
    Via Carlo Alberto 10, 10123 Torino, Italy \\ \printead{eT}}

\end{aug}

\begin{abstract}
  After reviewing a large body of literature on the modeling of bivariate discrete distributions with finite support, \cite{Gee20} made a compelling case for the use of $I$-projections in the sense of \cite{Csi75} as a sound way to attempt to decompose a bivariate probability mass function (p.m.f.) into its two univariate margins and a bivariate p.m.f.\ with uniform margins playing the role of a discrete copula. From a practical perspective, the necessary $I$-projections on Fréchet classes can be carried out using the iterative proportional fitting procedure (IPFP), also known as Sinkhorn's algorithm or matrix scaling in the literature. After providing conditions under which a bivariate p.m.f.\ can be decomposed in the aforementioned sense, we investigate, for starting bivariate p.m.f.s with rectangular supports, nonparametric and parametric estimation procedures as well as goodness-of-fit tests for the underlying discrete copula. Related asymptotic results are provided and build upon a differentiability result for $I$-projections on Fréchet classes which can be of independent interest. Theoretical results are complemented by finite-sample experiments and a data example.
\end{abstract}

\begin{keyword}[class=MSC2010]
\kwd[Primary ]{62H17}
\kwd[; secondary ]{62G05, 62F03}
\end{keyword}

\begin{keyword}
 \kwd{asymptotic validity}
 \kwd{differentiability of $I$-pro\-ject\-ions on Fréchet classes}
 \kwd{goodness-of-fit tests}
 \kwd{iterative proportional fitting}
 \kwd{matrix scaling}
 \kwd{maximum pseudo-likelihood estimation}
 \kwd{method-of-moments estimation}
 \kwd{Sinkhorn's algorithm}
\end{keyword}

\tableofcontents

\end{frontmatter}


\section{Introduction}

Let $(X,Y)$ be a discrete random vector where $X$ (resp.\ $Y$) can take $r$ (resp.~$s$) distinct values for some strictly positive integers $r$ and $s$. As the approach proposed in this work depends only on the values of the probability mass function (p.m.f.) of $(X,Y)$, without loss of generality, we shall consider, for notational convenience only, that $(X,Y)$ takes its values in $I_{r,s} = I_r \times I_s$, where $I_r = \{1,\dots,r\}$ and $I_s = \{1,\dots,s\}$.

As is common in statistics, the p.m.f.\ $p$ of $(X,Y)$ is unknown and we instead have at hand $n$ (not necessarily independent) copies $(X_1,Y_1),\dots,(X_n,Y_n)$ of $(X,Y)$ that can be used to carry out inference on $p$. Classical approaches to model $p$ when $r$ and $s$ are not too large are log-linear models \cite[see, e.g.,][]{Agr13,Kat14,Rud18} and association models \citep[see, e.g.,][]{Goo85,Kat14}. We shall explore a different class of (semi-parametric or parametric) models in this work hinging upon the possibility of decomposing the unknown p.m.f.\ $p$ into its two univariate margins and a bivariate p.m.f.\ with uniform margins playing the role of a discrete copula. A strong case for such a decomposition was indeed recently made by \cite{Gee20} in a nice essay on dependence between discrete random variables. The underlying central ingredient is the concept of $I$-projection \citep[in the sense of][]{Csi75} on a Fréchet class of p.m.f.s as such projections turn out to preserve the \emph{odds ratio matrix} which encodes the dependence between $X$ and $Y$ in the considered discrete setting; see \citet[Section 2.4]{Agr13}, \citet[p 43]{Kat14}, \citet[p 123]{Rud18} or \citet[Section~4]{Gee23}. In practice, the $I$-projection of a p.m.f.\ on a Fréchet class can be carried out using the iterative proportional fitting procedure (IPFP), also known as Sinkhorn's algorithm or matrix scaling in the literature. The latter procedure takes the form of an algorithm (actually of several equivalent algorithms) whose aim is to adjust the elements of an input matrix so that it satisfies specified row and column sums. Since its use by \cite{Kru37} for the calculation of telephone traffic, it turned out to play a major role in a surprisingly large number of different scientific contexts \cite[see, e.g.,][for a nice review]{Ide16}.

A first contribution of this work, which can be of independent interest, is the statement of a result on the differentiability of an $I$-projection on a Fréchet class with respect to the starting p.m.f. The latter result will be necessary to study the asymptotics of the inference procedures proposed in the statistical part of this article. 

A second contribution \citep[related to the results given in][Section~6]{Gee20} is the statement of a corollary of known properties of $I$-projections which establishes that, under certain conditions on the bivariate p.m.f.\ $p$, the latter can be decomposed into its two univariate margins and a bivariate p.m.f.\ with uniform margins, called a \emph{copula p.m.f.}\ by \cite{Gee20}. Similar to the modeling of continuous multivariate distributions using copulas \citep[see, e.g.,][and the references therein]{HofKojMaeYan18} which exploits a well-known theorem of \cite{Skl59}, the obtained decomposition suggests to first separately model the margins and the copula p.m.f.\ and then ``glue'' the resulting estimates back together using an appropriate $I$-projection to obtain a parametric or a semi-parametric estimate of~$p$.

Building upon the aforementioned decomposition of bivariate p.m.f.s, a third contribution of this work is the study of nonparametric and parametric estimation procedures as well as goodness-of-fit tests for the underlying copula p.m.f. Note that these investigations are carried out under the assumption of strict positivity of the starting unknown p.m.f.~$p$. Indeed, $I$-projections may not always exist and, to guarantee the existence of the aforementioned copula-like decomposition of $p$, we shall thus additionally assume in the statistical part of this work that $p_{ij} > 0$ for all $(i,j) \in I_{r,s}$. Interestingly enough, the considered assumption could be regarded as the discrete analog of the assumption of strict positivity (on the interior of the unit square) of the copula density frequently made when modeling multivariate continuous distributions using copulas. Notice that, as shall be explained in more detail in our concluding remarks, the assumption of rectangular support for $p$ could be replaced by alternative conditions provided certain practical difficulties are dealt with. Fortunately, many applications seem to be compatible with the assumption of rectangular support.

This paper is organized as follows. In the second section, after reviewing the main properties of $I$-projections and the IPFP, we state differentiability results for $I$-projections on Fréchet classes. In Section~\ref{sec:sklar:like}, we provide conditions under which a copula-like decomposition of bivariate p.m.f.s based on $I$-projections on Fréchet classes exists. The fourth section is devoted to the nonparametric and parametric estimation of copula p.m.f.s and provides related asymptotic results as well as finite-sample findings based on simulations. Goodness-of-fit tests for copula p.m.f.s are studied next, both theoretically and empirically. A data example is provided in Section~\ref{sec:data:example} before concluding remarks are stated in the last section. All the proofs are relegated to a sequence of appendices.

\section{$I$-projections on Fréchet classes and the IPFP}

After introducing the notation, we recall the concept of $I$-projection due to \cite{Csi75} and its connection with the IPFP when the $I$-projection is carried out on a Fréchet class. We end this section with the statement of differentiability results for $I$-projections on Fréchet classes which might be of independent interest.

\subsection{Notation}

Let $\R^{r \times s}$ be the set of all $r \times s$ real matrices. Note the difference between $\R^{r \times s}$ and $\R^{rs}$, the latter denoting the set of all $rs$-dimensional real vectors. The element at row $i \in I_r = \{1,\dots,r\}$ and column $j \in I_s = \{1,\dots,s\}$ of a matrix $x \in \R^{r \times s}$ will be denoted by $x_{ij}$. Furthermore, for the row and column sums of~$x$, we use the classical notation
$$
x_{i+} = \sum_{j=1}^s x_{ij}, \qquad i \in I_r,  \qquad \text{and} \qquad x_{+j} = \sum_{i=1}^r x_{ij}, \qquad j \in I_s,
$$
respectively. Moreover, as we continue, let
\begin{multline}
  \label{eq:Gamma}
  \Gamma = \{x \in \R^{r \times s} : x_{ij} \geq 0, \, x_{i+} > 0, \,x_{+j} > 0 \text{ for all } (i,j) \in I_{r,s} \\ \text{ and }  \sum_{(i,j) \in I_{r,s}} x_{ij} = 1\}.
\end{multline}

In the context of this work, the set $\Gamma$ will equivalently represent the set of all bivariate p.m.f.s on $I_{r,s}$ whose univariate margins are strictly positive. As explained in the introduction,  since the forthcoming developments depend only on $r$, $s$ and p.m.f.\ values, restricting attention to p.m.f.s on $I_{r,s}$ is done without loss of generality. Indeed, any p.m.f.\ of interest defined on the Cartesian product of two sets of reals of cardinalities $r$ and $s$, respectively, can be ``relocated'' on~$I_{r,s}$.

As we continue, to distinguish bivariate p.m.f.s from other real matrices, we will always use lowercase letters for the former and uppercase letters for the latter. Also, given a bivariate p.m.f.\ $x$ in $\Gamma$, its first and second univariate margins will be denoted by $x^{[1]}$ and $x^{[2]}$, respectively. With our notation, we thus have
$$
x^{[1]}_i = x_{i+}, \qquad i \in I_r, \qquad \text{and} \qquad x^{[2]}_j = x_{+j}, \qquad j \in I_s.
$$

Next, let $a$ and $b$ be fixed univariate p.m.f.s on $I_r$ and $I_s$, respectively, such that $a_i > 0$ for all $i \in I_r$ and $b_j > 0$ for all $j \in I_s$ and let
\begin{equation}
  \label{eq:Gamma:a:.:b}
  \Gamma_{a,\cdot} = \{x \in \Gamma : x^{[1]} = a \} \quad (\text{resp.\ } \Gamma_{\cdot,b} = \{x \in \Gamma : x^{[2]} = b \})
\end{equation}
be the subset of $\Gamma$ in~\eqref{eq:Gamma} containing bivariate p.m.f.s whose first (resp.\ second) margin is $a$ (resp. $b$). Then
\begin{equation}
  \label{eq:Gamma:a:b}
  \Gamma_{a,b} = \{x \in \Gamma :  x^{[1]} = a \text{ and } x^{[2]} = b \} = \Gamma_{a,\cdot}\cap\Gamma_{\cdot,b}
\end{equation}
is the \emph{Fréchet class} of all bivariate p.m.f.s whose first margin is $a$ and second margin is $b$.

\subsection{$I$-projections on Fréchet classes}

To introduce the concept of $I$-projection due to \cite{Csi75}, we need to define the \emph{information divergence} (also known for instance as the \emph{Kullback--Leibler divergence} or \emph{relative entropy}) of a bivariate p.m.f.\ $y \in \Gamma$ with respect to a bivariate p.m.f.\ $x \in \Gamma$. The latter is defined by
\begin{equation}
\label{eq:KL:divergence}
D(y \| x) = \begin{cases}
\displaystyle \sum_{ (i,j) \in I_{r,s}} y_{ij} \log{\frac{y_{ij}}{x_{ij}}}  &\text{ if }\supp{(y)} \subset \supp{(x)},\\
\infty&\text{ otherwise,}
\end{cases}
\end{equation}
where, for any $x \in \Gamma$, $\supp(x) = \{(i,j) \in I_{r,s} : x_{ij} > 0 \}$, and with the convention that $0 \log 0 = 0$ and $0 \log \frac{0}{0} = 0$.

The following result is a direct consequence of Theorem 2.1 of \cite{Csi75} and the remark after Theorem 2.2 in the same reference.

\begin{prop}[Existence of $I$-projections]
\label{prop:I:proj}
  Let $x \in \Gamma$ and let $\Sc$ be a closed convex subset of $\Gamma$. Suppose furthermore that there exists a p.m.f.\ $y \in \Sc$ such that $\supp{(y)} \subset \supp{(x)}$. Then, there exists a unique $y^* \in \Sc$ such that  $y^*=\arginf_{y \in \Sc} D(y \|x)$. Moreover, we have that $\supp{(y)} \subset \supp{(y^*)}$ for every $y \in \Sc$ such that $\supp{(y)} \subset \supp{(x)}$.
\end{prop}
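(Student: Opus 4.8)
The plan is to establish the three assertions—existence, uniqueness, and the support (maximality) property—separately, regarding $y \mapsto D(y \| x)$ as an extended-real-valued objective on $\Sc$. As the authors note, the entire statement is packaged in Theorem~2.1 of \cite{Csi75} together with the remark after its Theorem~2.2, so one legitimate route is simply to check that its hypotheses hold here; below I sketch instead the direct argument, which is short because $\Sc$ sits inside the probability simplex.

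For \textbf{existence}, I would first observe that $D(\cdot \| x)$ is lower semicontinuous on $\Gamma$: each summand $y_{ij} \mapsto y_{ij}\log(y_{ij}/x_{ij})$ is continuous on $[0,\infty)$ when $x_{ij} > 0$ (using $0\log 0 = 0$) and equals $+\infty$ as soon as $y_{ij} > 0$ when $x_{ij} = 0$, hence is lower semicontinuous, and a finite sum of such functions is again lower semicontinuous. Since $\Sc$ is closed in $\R^{r \times s}$ and contained in the simplex, it is compact, and the feasibility hypothesis—existence of some $y \in \Sc$ with $\supp(y) \subset \supp(x)$, i.e.\ with $D(y \| x) < \infty$—ensures that $\inf_{y \in \Sc} D(y \| x)$ is finite. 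A minimizing sequence then has a subsequence converging to some $y^* \in \Sc$, and lower semicontinuity gives $D(y^* \| x) \le \inf_{y \in \Sc} D(y \| x)$, so the infimum is attained.

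For \textbf{uniqueness}, the key is strict convexity of $D(\cdot \| x)$ on the set where it is finite. Writing $D(y \| x) = \sum_{(i,j) \in I_{r,s}} y_{ij}\log y_{ij} - \sum_{(i,j) \in I_{r,s}} y_{ij}\log x_{ij}$ exhibits it as a strictly convex negative-entropy term (built from the strictly convex map $t \mapsto t\log t$) plus a linear term. If two distinct minimizers $y_1^*$ and $y_2^*$ existed, both would have finite divergence, their midpoint would lie in $\Sc$ by convexity, and strict convexity would force $D(\tfrac12(y_1^* + y_2^*) \| x) < \tfrac12 D(y_1^* \| x) + \tfrac12 D(y_2^* \| x)$, contradicting minimality; hence $y^*$ is unique.

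The \textbf{support property} is where I expect the real work, and it is the step I would flag as the main obstacle, because it rests on the unbounded directional derivative of $D(\cdot\|x)$ at the boundary of the simplex. Fix $y \in \Sc$ with $\supp(y) \subset \supp(x)$ and suppose, for contradiction, that some pair $(i_0,j_0)$ satisfies $y_{i_0 j_0} > 0$ while $y^*_{i_0 j_0} = 0$. Consider $y_\lambda = (1-\lambda)y^* + \lambda y$, which remains in $\Sc$ by convexity and has $\supp(y_\lambda) \subset \supp(x)$, so $g(\lambda) := D(y_\lambda \| x)$ is finite on $[0,1]$ and differentiable on $(0,1)$. Differentiating term by term, every index with $y^*_{ij} > 0$ contributes a derivative that stays bounded as $\lambda \downarrow 0$, whereas the index $(i_0,j_0)$ contributes $y_{i_0 j_0}\bigl(\log(\lambda\, y_{i_0 j_0}/x_{i_0 j_0}) + 1\bigr)$, which tends to $-\infty$. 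Thus $g'(\lambda) \to -\infty$, so $g(\lambda) < g(0) = D(y^* \| x)$ for all small $\lambda > 0$, contradicting the minimality of $y^*$. Therefore $\supp(y) \subset \supp(y^*)$ for every feasible $y$, which completes the plan.
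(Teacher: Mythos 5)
Your three-part argument is correct, but it is worth noting that the paper does not prove this proposition at all: it simply invokes Theorem~2.1 of Csisz\'ar (1975) and the remark following his Theorem~2.2, which are stated for general probability measures and rely on a parallelogram-type identity for the I-divergence rather than on compactness (which is unavailable in that general setting). Your route instead exploits the finiteness of $I_{r,s}$: lower semicontinuity plus compactness of $\Sc$ for existence, strict convexity of $t \mapsto t\log t$ for uniqueness, and the $-\infty$ one-sided derivative of $\lambda \mapsto D((1-\lambda)y^* + \lambda y \,\|\, x)$ at $\lambda = 0$ for the support-maximality claim. Each step checks out; in particular, in the last step the sets $\supp(y_\lambda) \subset \supp(y^*) \cup \supp(y) \subset \supp(x)$ keep $g$ finite and continuous on $[0,1]$ and differentiable on $(0,1)$, and since $g'(\lambda) \to -\infty$ there is a $\lambda_0 > 0$ with $g' < 0$ on $(0,\lambda_0)$, whence the mean value theorem gives $g(\lambda) < g(0)$ and the desired contradiction. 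The one point you should make explicit is the reading of ``closed subset of $\Gamma$'': $\Gamma$ itself is not closed in $\R^{r \times s}$ (the marginal positivity constraints are strict), so your compactness step requires $\Sc$ to be closed in the ambient space rather than merely relatively closed in $\Gamma$. This is the standard reading and it holds for the sets actually used in the paper (e.g.\ $\Gamma_{a,b}$ with strictly positive $a$ and $b$ is cut out by closed conditions), so it is a clarification rather than a gap. What your approach buys is a short, self-contained and elementary proof tailored to the finite discrete case; what the citation buys is generality and brevity.
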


The bivariate p.m.f.\ $y^*$ in the statement of the previous proposition is usually referred to as the \emph{$I$-projection} of the starting bivariate p.m.f.\ $x$ on $\Sc$.

As explained in the introduction, the approach proposed by \cite{Gee20} relies on $I$-projections. Let $a$ and $b$ be fixed univariate p.m.f.s on $I_r$ and $I_s$, respectively, such that $a_i > 0$ for all $i \in I_r$ and $b_j > 0$ for all $j \in I_s$, and let $\Gamma_{a,b}$ in~\eqref{eq:Gamma:a:b} be the Fréchet class of all bivariate p.m.f.s whose first margin is $a$ and second margin is $b$. Note that $\Gamma_{a,b}$ is a closed and convex subset of $\Gamma$. As we continue, we shall denote the $I$-projection of a p.m.f.\ $x \in \Gamma$ on $\Gamma_{a,b}$, if it exists, by
\begin{equation}
  \label{eq:I:proj:Frechet}
  \Ic_{a,b}(x) = \arginf_{y \in \Gamma_{a, b}} D(y \| x).
\end{equation}

The following proposition, which, for instance, immediately follows from Corollary~3.3 in \cite{Csi75}, gives a more explicit form for the $I$-projection of $x \in \Gamma$ on the Fréchet classes $\Gamma_{a, b}$ when the latter contains a p.m.f.\ whose support is equal to that of $x$.
\begin{prop}
  \label{prop:I:proj:diag}
  Let $x \in \Gamma$. If there exists $y \in \Gamma_{a, b}$ such that $\supp{(y)} = \supp{(x)}$, then there exists two diagonal matrices $D_1 \in \R^{r \times r}$ and $D_2 \in \R^{s \times s}$ such that $\Ic_{a,b}(x) = y^* = D_1 \, x \, D_2$ and $\supp{(y^*)} = \supp{(x)}$. In this case, following \cite{Pre80}, $x$ and $y^*$ are said to be diagonally equivalent.
\end{prop}

As we can see from Propositions~\ref{prop:I:proj} and~\ref{prop:I:proj:diag}, the main practical issue before attempting to project a p.m.f.\ $x \in \Gamma$ on $\Gamma_{a,b}$ is thus to test for the existence of a bivariate p.m.f.\ $y \in \Gamma_{a, b}$ such that $\supp{(y)} \subset \supp{(x)}$. Necessary and sufficient conditions were for instance recently restated in Theorem~1 of \cite{BroLeu18} but seem to date back to Corollary 3 in \cite{Bac65}.

\begin{prop}[Testing for the existence of $I$-projections]
  \label{prop:cond:I:proj}
  Let $x \in \Gamma$ be a bivariate p.m.f.\ on $I_{r,s}$ and let $P_x$ be the corresponding probability measure on $I_{r,s}$. Furthermore, let $P_a$ (resp.\ $P_b$) be the probability measures on $I_r$ (resp.\ $I_s$) corresponding to the target univariate p.m.f.\ $a$ (resp.\ $b$). Then, there exists $y \in \Gamma_{a, b}$ such that $\supp{(y)} \subset \supp{(x)}$ if and only if
  \begin{equation}
    \label{eq:supp:cond}
    P_a(R) \leq  P_b(I_s \setminus C) \text{ for all } R \subset I_r \text{ and } C \subset I_s \text{ such that } P_x(R \times C) = 0,
  \end{equation}
and there exists $y \in \Gamma_{a, b}$ such that $\supp{(y)} = \supp{(x)}$ if and only if all the inequalities in~\eqref{eq:supp:cond} are strict.
\end{prop}

\subsection{The iterative proportional fitting procedure}
\label{sec:IPFP}

To carry out an $I$-projection on $\Gamma_{a,b}$ in practice, one can rely on the IPFP. The latter, also known as Sinkhorn's algorithm or matrix scaling in the literature, is a procedure whose aim is to adjust the elements of a matrix so that it satisfies specified row and column sums. We refer the reader for instance to \cite{Puk14}, \cite{Ide16} and \cite{BroLeu18} for a review of the procedure, its variations and its many contexts of application.

In principle, the IPFP could be applied to any input matrix in $\R^{r \times s}$ with nonnegative elements whose row and column sums are strictly positive. As this work is of a probabilistic nature, the IPFP will be described only for input matrices $x \in \Gamma$ that can be interpreted as bivariate p.m.f.s. In this context, the aim of the procedure is to adjust an input bivariate p.m.f.\ $x \in \Gamma$ so that it has $a$ as first margin and $b$ as second margin.

The IPFP with target margins $a$ and $b$ consists of applying repeatedly two transformations. The first one corresponds to the map $\Rc_a$ from $\Gamma$ in~\eqref{eq:Gamma} to $\Gamma_{a,\cdot}$ in~\eqref{eq:Gamma:a:.:b} defined by
\begin{equation}
  \label{eq:Rc:a}
\Rc_a(x) = \begin{bmatrix}
\frac{a_1 x_{11}}{x_{1+}}  & \dots & \frac{a_1 x_{1s}}{ x_{1+}} \\
\vdots &  & \vdots \\
\frac{a_r x_{r1}} {x_{r+}}  & \dots & \frac{a_r x_{rs}}{x_{r+}} \\
\end{bmatrix}, \qquad x \in \Gamma,
\end{equation}
while the second one corresponds to the map $\Cc_b$ from $\Gamma$ to $\Gamma_{\cdot,b}$ in~\eqref{eq:Gamma:a:.:b} defined by
\begin{equation}
  \label{eq:Cc:b}
\Cc_b(x) = \begin{bmatrix}
\frac{b_1 x_{11}}{ x_{+1}}  & \dots & \frac{b_s x_{1s}}{ x_{+s}} \\
\vdots &  & \vdots \\
\frac{b_1 x_{r1}}{ x_{+1}}  & \dots & \frac{b_s x_{rs}}{x_{+s}} \\
\end{bmatrix}, \qquad x \in \Gamma.
\end{equation}
Given an input matrix $x \in \Gamma$, the first (resp.\ second) map rescales the rows (resp.\ columns) of $x$ such that $\Rc_a(x)$ (resp.\ $\Cc_b(x)$) has first (resp.\ second) univariate margin equal to $a$ (resp.\ $b$).

Let $N \in \N$. The $(2N+1)$th step of the IPFP with target margins $a$ and $b$ can then be expressed as
\begin{equation}
  \label{eq:2N+1step:IPFP}
\Ic_{2N+1,a,b}(x) =  \Rc_a \circ (\Cc_b \circ \Rc_a)^{(N)}(x), \qquad x \in \Gamma,
\end{equation}
while the $2N$th step, $N \geq 1$, can be expressed as
\begin{equation}
  \label{eq:2Nstep:IPFP}
\Ic_{2N,a,b}(x) = (\Cc_b \circ \Rc_a)^{(N)}(x), \qquad x \in \Gamma,
\end{equation}
where the superscript $(N)$ denotes composition $N$ times and composition 0 times is taken to be the identity. We will say that the IPFP converges for a starting bivariate p.m.f.\ $x \in \Gamma$ if the sequence $\{ \Ic_{N,a,b}(x) \}_{N \geq 1}$ converges. 

In practice, if the IPFP of the bivariate p.m.f.\ $x$ converges, for some user-defined $\eps > 0$, $\lim_{N \rightarrow \infty} \Ic_{N,a,b}(x)$ is approximated by $\Ic_{N,a,b}(x)$, where $N$ (which depends on $x$) is an integer such that
\begin{equation}
  \label{eq:IPFP:criterion}
\| \Ic_{N,a,b}(x) - \Ic_{N-1,a,b}(x) \|_{1,1} \leq \eps,
\end{equation}
where $\|x\|_{1,1} = \sum_{(i,j) \in I_{r,s}} |x_{ij}|$. Results on the number of iterations required for the previous condition to be satisfied are discussed for instance in \cite{KalLarRicSim08}, \cite{Puk14} and \cite{ChaKha21}.

The next proposition, which, for instance, immediately follows from \citet[Theorem~3]{Puk14} and \citet[Theorems~2 and~3]{BroLeu18}, gives the connection between $I$-projections on Fréchet classes and the IPFP.

\begin{prop}(Link between $I$-projections on Fréchet classes and the IPFP)
  \label{prop:conv:IPFP}
  Let $x \in \Gamma$. The sequence $\{ \Ic_{N,a,b}(x) \}_{N \geq 1}$ converges if and only if there exists $y \in \Gamma_{a, b}$ such that $\supp{(y)} \subset \supp{(x)}$ and, if $\lim_{N \rightarrow \infty} \Ic_{N,a,b}(x)$ exists, it is equal to $\Ic_{a,b}(x)$ in~\eqref{eq:I:proj:Frechet}.
\end{prop}

In other words, the IPFP of a bivariate p.m.f.\ $x$ converges (to the $I$-projection of $x$ on the corresponding Fréchet class) if and only if the $I$-projection of $x$ on the corresponding Fréchet class exists.

\subsection{Differentiability results for $I$-projections on Fréchet classes}

Theorem 3.3 in \citet{GieRef17} implies a form of continuity of an $I$-projection on a Fréchet class when it exists (see Lemma~\ref{lem:continuity} in Appendix~\ref{proof:prop:diff:I:proj} for a precise statement). As shall become clearer in Section~\ref{sec:est}, to investigate the asymptotics of the statistical inference procedures proposed later in this work, we will also need an $I$-projection on a Fréchet class, when it exists, to be differentiable in a related sense. The latter property does not seem to have been investigated in the literature although it can be connected to the work \cite{JimPinAlbMor11} as explained in Remark~\ref{rem:JimEtAl} below.

The aim of this section is to state a differentiability result, roughly speaking, with respect to certain strictly positive submatrices of the input matrix. To make the statement precise, we first need to introduce additional notation.

For any $S \subset I_{r,s}$, $S \neq \emptyset$, let $\vect_S$ be the map from $\R^{r \times s}$ to $\R^{|S|}$ which, given a matrix $y \in \R^{r \times s}$, returns a column-major vectorization of $y$ ignoring the elements $y_{ij}$ such that $(i,j) \not \in S$.  With some abuse of notation, we can write
\begin{equation}
\label{eq:vect:S}
\vect_S(y) = ( y_{ij} )_{(i,j) \in S}, \qquad y \in \R^{r \times s}.
\end{equation}
Let $\vect_S^{-1}$ be the map from $\R^{|S|}$ to $\R^{r \times s}$ which, given a vector $v$ in $[0,1]^{|S|}$ expressed (with some abuse of notation) as $( v_{ij} )_{(i,j) \in S}$, returns a matrix in $\R^{r \times s}$ whose element at position $(i,j) \in S$ is equal to $v_{ij}$ and whose element at position $(i,j) \not \in S$ is left unspecified. Then, for any two matrices $y,y' \in \R^{r \times s}$, write $y \overset{S}{=} y'$ if $y_{ij} = y'_{ij}$ for all $(i,j) \in S$. Furthermore, recall the definitions of $\Gamma$ in~\eqref{eq:Gamma} and $\Gamma_{a,b}$ in~\eqref{eq:Gamma:a:b}, and, for any $A,B \subset T \subset I_{r,s}$, $A,B \neq \emptyset$, let
\begin{align}
  \label{eq:Gamma:T}
  \Gamma_T &= \{y \in \Gamma : \supp{(y)} = T \}, \\
  \label{eq:Lambda:B:T}
  \Lambda_{B,T} &= \{w \in (0,1)^{|B|} : \text{ there exists } y \in \Gamma_T \text{ s.t.\ } \vect_B^{-1}(w) \overset{B}{=} y  \}, \\
  \label{eq:Gamma:ab:T}
  \Gamma_{a,b,T} &= \{y \in \Gamma_{a,b} : \supp{(y)} = T \}, \\
  \label{eq:Lambda:ab:A:T}
  \Lambda_{a,b,A,T} &= \{z \in (0,1)^{|A|} : \text{ there exists } y \in \Gamma_{a,b,T} \text{ s.t.\ } \vect_A^{-1}(z) \overset{A}{=} y  \}.
\end{align}
Finally, for any function $G$ from $\R^k \times \R^m$ to $\R^l$ differentiable at $(u,v) \in \R^k \times \R^m$, we shall use the notation
$$
\partial_1 G(u, v) = \begin{bmatrix}
\frac{\partial G_1(w,v)}{\partial w_1} \Big |_{w = u} & \dots & \frac{\partial G_1(w,v)}{\partial w_k} \Big |_{w = u} \\
\vdots &  & \vdots \\
\frac{\partial G_l(w,v)}{\partial w_1} \Big |_{w = u} & \dots & \frac{\partial G_l(w,v)}{\partial w_k} \Big |_{w = u} \\
\end{bmatrix}
$$
and
$$
\partial_2 G(u, v) = \begin{bmatrix}
\frac{\partial G_1(u,w)}{\partial w_1} \Big |_{w = v} & \dots & \frac{\partial G_1(u,w)}{\partial w_m} \Big |_{w = v} \\
\vdots &  & \vdots \\
\frac{\partial G_l(u,w)}{\partial w_1} \Big |_{w = v} & \dots & \frac{\partial G_l(u,w)}{\partial w_m} \Big |_{w = v} \\
\end{bmatrix}.
$$

The following proposition, proven in Appendix~\ref{proof:prop:diff:I:proj}, is notationally complex but hinges on the following simple fact: a bivariate p.m.f.\ with support $T$ can be expressed in terms of $|B| = |T| - 1$ elements and a p.m.f.\ with support $T$ and univariate margins $a$ and $b$ can be expressed in terms of $|A| < |B|$ elements so that the $I$-projection of a p.m.f.\ in $\Gamma_T$ on $\Gamma_{a,b,T}$ can be regarded as a map from an open subset of $(0,1)^{|B|}$ to $(0,1)^{|A|}$. Looking at the aforementioned $I$-projection in such a way allows us to consider it as a map related to an unconstrained optimization problem and, eventually, to apply the implicit function theorem \citep[see, e.g.,][Theorem 17.6, p 450]{Fit09}. 

\begin{prop}
\label{prop:diff:I:proj}
Let $T \subset I_{r,s}$, $|T| > 1$, such that $\Gamma_{a,b,T}$ in \eqref{eq:Gamma:ab:T} is nonempty, and assume that there exists a nonempty subset $A \subsetneq T$ such that:
\begin{enumerate}
\item for any $y \in \Gamma_{a,b,T}$, the elements $y_{ij}$, $(i,j) \in T \setminus A$, can be recovered from the elements $y_{ij}$, $(i,j) \in A$, using the $r + s$ constraints $y_{i+} = a_i$, $i \in I_r$, and $y_{+j} = b_j$, $j \in I_s$,
\item the set $\Lambda_{a,b,A,T}$ in~\eqref{eq:Lambda:ab:A:T} is an open subset of $\R^{|A|}$.
\end{enumerate}
Let $B$ be any subset of $T$ such that $|B| = |T| - 1$ and let $d$ be the one-to-one map from $\Lambda_{B,T}$ in~\eqref{eq:Lambda:B:T} to $\Gamma_T$ in~\eqref{eq:Gamma:T} which, given $w \in \Lambda_{B,T}$, returns the unique $y \in \Gamma_T$ obtained by recovering the only unspecified element of $\vect_B^{-1}(w) \in \Gamma_T$ using the constraint $\sum_{(i,j) \in T} y_{ij} = 1$. Then, the map $\vect_A \circ \Ic_{a,b} \circ d$ from $\Lambda_{B,T}$ to $\Lambda_{a,b,A,T}$, where $\Ic_{a,b}$ is defined in~\eqref{eq:I:proj:Frechet}, is differentiable at any $\vect_B(x)$, $x \in \Gamma_T$, with Jacobian matrix at $\vect_B(x)$ equal to
\begin{equation*}
  \label{eq:Jacobian}
- [\partial_1 \partial_1 H(\vect_A(\Ic_{a,b}(x)) \| \vect_B(x)) ]^{-1} \partial_2  \partial_1 H(\vect_A(\Ic_{a,b}(x)) \| \vect_B(x)).
\end{equation*}
In the above centered display, $H$ is the map from $\Lambda_{a,b,A,T} \times \Lambda_{B,T}$ to $[0,\infty)$ defined by
\begin{equation}
  \label{eq:H:def}
H(z \| w ) = D(c(z) || d(w)), \qquad z \in \Lambda_{a,b,A,T}, w \in \Lambda_{B,T},
\end{equation}
where $D$ is defined in~\eqref{eq:KL:divergence} and $c$ is the one-to-one map from $\Lambda_{a,b,A,T}$ in~\eqref{eq:Lambda:ab:A:T} to $\Gamma_{a,b,T}$ in \eqref{eq:Gamma:ab:T} which, given $z \in \Lambda_{a,b,A,T}$, returns the unique $y \in \Gamma_{a,b,T}$ obtained by recovering the unspecified elements of $\vect_A^{-1}(z) \in \Gamma_{a,b,T}$ using the $r + s$ constraints $y_{i+} = a_i$, $i \in I_r$ and $y_{+j} = b_j$, $j \in I_s$.
\end{prop}

\begin{remark}
  \label{rem:JimEtAl}
  Following the suggestions of a Referee, while revising this work, we explored the connections of the statistical results to be stated in Section~\ref{sec:est} with \emph{minimum divergence estimation} (see Remark~\ref{rem:min:div:est} below for more details) and found that Lemma~1 in \cite{JimPinAlbMor11} is connected with Proposition~\ref{prop:diff:I:proj} above. Some additional thinking reveals however that the aforementioned lemma cannot be used to obtain a differentiability result for an $I$-projection on a Fréchet class. Furthermore, its proof seems incomplete as the assumptions considered in \cite{JimPinAlbMor11} do not seem to guarantee a key matrix invertibility required to apply the implicit function theorem \citep[see~(17.17) in][Theorem 17.6, p 450]{Fit09}. \qed
\end{remark}

\section{Copula-like decomposition of bivariate p.m.f.s}
\label{sec:sklar:like}

Using the results given in the previous section, we shall now state a copula-like decomposition for a bivariate p.m.f.\ on $I_{r,s}$ and strictly positive univariate margins. Let $u^{[1]}$ (resp.\ $u^{[2]}$) be the univariate p.m.f.\ of the uniform distribution on $I_r$ (resp.\ $I_s$) and let $\Gamma_{\text{unif}} = \Gamma_{u^{[1]},u^{[2]}}$ be the Fréchet class of all bivariate p.m.f.s on $I_{r,s}$ whose univariate margins are $u^{[1]}$ and $u^{[2]}$, respectively. The $I$-projection of a p.m.f.\ $x \in \Gamma$ on $\Gamma_{\text{unif}}$, if it exists, will be denoted by
\begin{equation}
  \label{eq:I:proj:unif}
  \Uc(x) = \Ic_{u^{[1]},u^{[2]}}(x),
\end{equation}
where $\Ic_{u^{[1]},u^{[2]}}$ is defined as in~\eqref{eq:I:proj:Frechet} with $a=u^{[1]}$ and $b=u^{[2]}$.

The following result, proven in Appendix~\ref{proof:prop:sklar:like}, will play, in the next section, a role analog to Sklar's theorem when modeling continuous multivariate distributions using copulas.

\begin{prop}[Copula-like decomposition of bivariate p.m.f.s]
\label{prop:sklar:like}
Let $p$ be a bivariate p.m.f.\ on $I_{r,s}$ with strictly positive univariate margins $p^{[1]}$ and $p^{[2]}$. Then, the following two assertions are equivalent:
\begin{enumerate}
\item There exists a bivariate p.m.f.\ $v$ on $I_{r,s}$ with uniform margins such that $\supp{(v)} = \supp{(p)}$.
\item There exists a unique bivariate p.m.f.\ $u$ on $I_{r,s}$ with uniform margins such that
\begin{equation}
  \label{eq:sklar:like}
  p = \Ic_{p^{[1]},p^{[2]}}(u).
\end{equation}
Furthermore, the unique bivariate p.m.f.\ $u$ on $I_{r,s}$ with uniform margins in~\eqref{eq:sklar:like} is given by $u = \Uc(p)$, where $\Uc$ is defined in~\eqref{eq:I:proj:unif}.
\end{enumerate}
\end{prop}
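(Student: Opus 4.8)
The plan is to prove the implications $1.\Rightarrow 2.$ and $2.\Rightarrow 1.$ separately, the main engine being the second assertion of Proposition~\ref{prop:conv:IPFP}: whenever an input $x\in\Gamma$ and some element of $\Gamma_{a,b}$ share the same support, $\Ic_{a,b}(x)$ exists, coincides with the \emph{unique} p.m.f.\ in $\Gamma_{a,b}$ that is diagonally equivalent to $x$, and has support $\supp(x)$. Throughout I also use that, by the first assertion of the same proposition, $\Ic_{a,b}(x)$ is the I-projection of $x$ onto $\Gamma_{a,b}$, so that it always satisfies $\supp(\Ic_{a,b}(x))\subset\supp(x)$ (finite I-divergence) and carries the maximal support among elements of $\Gamma_{a,b}$ whose support is contained in $\supp(x)$.

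For $1.\Rightarrow 2.$, I would first feed the witness $v$ of assertion~$1$ into the second assertion of Proposition~\ref{prop:conv:IPFP}, applied to the input $p$ with uniform target margins: since $v\in\Gamma_{\text{unif}}$ and $\supp(v)=\supp(p)$, the limit $u:=\Uc(p)=\Ic_{u^{[1]},u^{[2]}}(p)$ from \eqref{eq:IPFP:unif} exists, lies in $\Gamma_{\text{unif}}$, satisfies $\supp(u)=\supp(p)$, and is of the form $u=D_1 p D_2$ for diagonal $D_1,D_2$. I would then apply the same assertion a second time, now to the input $u$ with target margins $p^{[1]},p^{[2]}$: the required witness is $p$ itself, because $p\in\Gamma_{p^{[1]},p^{[2]}}$ and $\supp(p)=\supp(u)$. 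Since $p=D_1^{-1}u\,D_2^{-1}$ is diagonally equivalent to $u$, the uniqueness built into that assertion forces $\Ic_{p^{[1]},p^{[2]}}(u)=p$, which is exactly \eqref{eq:sklar:like} with $u=\Uc(p)$. This settles the existence part of assertion~$2$.

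The two remaining points, the uniqueness of $u$ in assertion~$2$ and the converse $2.\Rightarrow 1.$, both reduce to a single support identity: every $u\in\Gamma_{\text{unif}}$ with $\Ic_{p^{[1]},p^{[2]}}(u)=p$ satisfies $\supp(u)=\supp(p)$. Indeed, granting this, such a $u$ is diagonally equivalent to $p$ and has uniform margins, hence equals the unique uniform-margin diagonal rescaling $\Uc(p)$ produced above, which gives uniqueness; and for $2.\Rightarrow 1.$ it then suffices to take $v:=u$, which lies in $\Gamma_{\text{unif}}$ with $\supp(v)=\supp(p)$. As the inclusion $\supp(p)\subset\supp(u)$ is automatic, the whole content is the reverse inclusion, i.e.\ the absence of any proper \emph{shrinkage} of the support when the IPFP with margins $p^{[1]},p^{[2]}$ is applied to a uniform-margin input compatible with assertion~$1$.

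I expect this no-shrinkage statement to be the main obstacle, and I would treat it through the feasibility conditions of Proposition~\ref{prop:cond:IPFP}. A proper shrinkage $\supp(p)\subsetneq\supp(u)$ occurs precisely when some inequality in \eqref{eq:supp:cond} for the input $u$ is tight, i.e.\ there is a rectangle $R\times C$ with $P_u(R\times C)=0$ and $P_{p^{[1]}}(R)=P_{p^{[2]}}(I_s\setminus C)$; such tightness confines $\supp(p)$ to the block pattern $[R\times(I_s\setminus C)]\cup[(I_r\setminus R)\times C]$. If assertion~$1$ holds, the uniform-margin p.m.f.\ $v$ supported on $\supp(p)$ must respect the same block pattern, which forces the rectangle to be uniform-balanced, namely $|R|/r=(s-|C|)/s$; a mass count on the rows $R$ of $u$ (whose total mass $|R|/r$ must land in the columns $I_s\setminus C$, already saturated by those rows) then shows that $u$ itself vanishes on the complementary block, so that $\supp(u)$ obeys the same block pattern and the problem decouples into two strictly smaller IPFP problems of the same type. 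I would therefore argue by induction on the number of blocks (equivalently on $r+s$), the base case being the indecomposable one, where all inequalities in \eqref{eq:supp:cond} are strict and Proposition~\ref{prop:cond:IPFP} yields $\supp(u)=\supp(p)$ directly. For the converse, when assertion~$1$ fails any preimage $u$ necessarily has $\supp(u)\supsetneq\supp(p)$, and since $D(\cdot\,\|u)$ restricted to $\Gamma_{p^{[1]},p^{[2]}}$ only sees the entries of $u$ on $\supp(p)$, modifying $u$ on the cells of $\supp(u)\setminus\supp(p)$ along a margin-preserving direction leaves the I-projection $p$ unchanged and produces a second uniform-margin preimage, contradicting the postulated uniqueness. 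Organizing this block/decoupling analysis cleanly, and in particular verifying that such a margin-preserving modification genuinely exists whenever the support is not preserved, is the delicate part of the argument.
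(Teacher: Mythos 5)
Your first half (Assertion~1 $\Rightarrow$ existence in Assertion~2) is essentially the paper's argument: apply Assertion~2 of Proposition~\ref{prop:conv:IPFP} to the input $p$ with uniform targets to get $u=\Uc(p)$ diagonally equivalent to $p$ with $\supp(u)=\supp(p)$, then apply it again to the input $u$ with targets $p^{[1]},p^{[2]}$, using $p$ itself as the witness. One caveat: the step ``the uniqueness built into that assertion forces $\Ic_{p^{[1]},p^{[2]}}(u)=p$'' is not literally contained in Proposition~\ref{prop:conv:IPFP}; what you need is that the element of $\Gamma_{p^{[1]},p^{[2]}}$ diagonally equivalent to $u$ with support $\supp(u)$ is \emph{unique}, which the paper imports as Property~1 of \cite{Pre80}. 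That uniqueness should be cited or proved, not attributed to the proposition.

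The second half of your proposal takes a much harder road than necessary, and the hard road is precisely the part you leave open. You declare that ``the whole content is the reverse inclusion'' $\supp(u)\subseteq\supp(p)$ and propose to establish it for an \emph{arbitrary} uniform-margin preimage via tight rectangles in~\eqref{eq:supp:cond}, a block decomposition, an induction on blocks, and (for the converse) a margin-preserving perturbation off $\supp(p)$. But for the $u$ actually named in Assertion~2, namely $u=\Uc(p)$, this inclusion is exactly as automatic as the one you concede: by Assertion~1 of Proposition~\ref{prop:conv:IPFP}, $u=\arginf_{y\in\Gamma_{\text{unif}}}D(y\|p)$, so $D(u\|p)<\infty$ and $\supp(u)\subseteq\supp(p)$; symmetrically, $p=\Ic_{p^{[1]},p^{[2]}}(u)=\arginf_{y\in\Gamma_{p^{[1]},p^{[2]}}}D(y\|u)$ gives $\supp(p)\subseteq\supp(u)$. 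Those two finiteness observations are the paper's entire proof of $2\Rightarrow 1$, and combined with the Pretzel uniqueness above they settle the identification $u=\Uc(p)$. Your block machinery is only relevant if one insists on directly excluding preimages $u'$ with $\supp(u')\supsetneq\supp(p)$ (a point the uniqueness claim arguably touches but the paper does not dwell on), and as written it has a genuine hole: the ``margin-preserving modification'' supported on $\supp(u)\setminus\supp(p)$ need not exist, since a nonzero matrix with vanishing row and column sums supported on a prescribed cell set exists only when that set contains a cycle of the associated bipartite graph, and $\supp(u)\setminus\supp(p)$ can be acyclic; the induction on blocks (normalization of the sub-p.m.f.s, handling of several simultaneous tight rectangles) is also only sketched. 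In short, the existence half is fine modulo the citation, but the uniqueness/converse half both misses the short I-divergence argument the proposition is built on and does not close the long argument it substitutes for it.
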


  \begin{example}
In the setting of the previous proposition, let us illustrate the fact that, if Assertion~1 does not hold, then~\eqref{eq:sklar:like} does not hold. Assume that $p$ is the p.m.f.\ on $I_{3,3}$ represented by the matrix
$$
p=\begin{bmatrix}
  \frac{1}{7} & \frac{1}{7} & \frac{1}{7} \\
  \frac{1}{7} & 0 & \frac{1}{7} \\
  \frac{1}{7} & 0 & \frac{1}{7} \\
\end{bmatrix}.
$$
With $u^{[1]} = u^{[2]}$ the uniform p.m.f.\ on $I_3$, note that~\eqref{eq:supp:cond} is satisfied for $x = p$, $a = u^{[1]}$ and~$b = u^{[2]}$ with equality holding for $R=\{1, 3\}$ and $C=\{2\}$. Therefore, according to Proposition~\ref{prop:cond:I:proj}, there exists $v \in \Gamma_{\text{unif}}$ with $\supp{(v)} \subset \supp{(p)}$, but there does not exist any $v \in \Gamma_{\text{unif}}$ such that $\supp{(v)} = \supp{(p)}$. Hence, from Proposition~\ref{prop:I:proj}, we know that $u = \Uc(p)$ exists. From Proposition~\ref{prop:conv:IPFP}, we can compute $u$ by using the IPFP via~\eqref{eq:2N+1step:IPFP} and~\eqref{eq:2Nstep:IPFP}. It can be verified that, for any $N \in \N$ (that is, after a row scaling step of the IPFP),
$$
\Rc_{u^{[1]}} \circ (\Cc_{u^{[2]}} \circ \Rc_{u^{[1]}})^{(2N)}(p) = \frac{1}{3}\begin{bmatrix}
  \frac{1}{3(N+1)} & \frac{1+3N}{3(N+1)} & \frac{1}{3(N+1)} \\
  \frac{1}{2} & 0 & \frac{1}{2} \\
  \frac{1}{2} & 0 & \frac{1}{2} \\
\end{bmatrix},
$$
where $\Rc_{u^{[1]}}$ and $\Cc_{u^{[2]}}$ are defined as in~\eqref{eq:Rc:a} and~\eqref{eq:Cc:b}, respectively, with $a = b = u^{[1]} = u^{[2]}$, and that, for any $N > 0$ (that is, after a column scaling step of the IPFP),
$$
(\Cc_{u^{[2]}} \circ \Rc_{u^{[1]}})^{(2N)}(p) = \frac{1}{3}\begin{bmatrix}
  \frac{1}{1 + 3N} & 1 & \frac{1}{1+3N} \\
  \frac{3N}{2(1+3N)} & 0 & \frac{3N}{2(1+3N)} \\
  \frac{3N}{2(1+3N)} & 0 & \frac{3N}{2(1+3N)} \\
\end{bmatrix}.
$$
Letting $N$ tend to $\infty$ for both sequences, we obtain
$$
u = \Uc(p) = \begin{bmatrix}
  0 & \frac{1}{3} & 0 \\
  \frac{1}{6} & 0 & \frac{1}{6} \\
  \frac{1}{6} & 0 & \frac{1}{6} \\
\end{bmatrix}
$$
and we see that $\supp{(u)} \subsetneq \supp{(p)}$. Furthermore, $\Ic_{p^{[1]},p^{[2]}}(u)$ does not exist as a consequence of Proposition~\ref{prop:cond:I:proj} since~\eqref{eq:supp:cond} with $x = u$, $a = p^{[1]}$ and~$b = p^{[2]}$ is contradicted for $R=\{1\}$ and $C=\{1,3\}$. Thus, the decomposition in~\eqref{eq:sklar:like} does not hold.  \qed
\end{example}

When it exists, the unique p.m.f.\ on $I_{r,s}$ with uniform margins $u$ in~\eqref{eq:sklar:like} was called the \emph{copula p.m.f.}\ of $p$ by \citet{Gee20}. Akin to Sklar's theorem, we see that in~\eqref{eq:sklar:like} the copula p.m.f.\ $u$ of $p$ and the marginal p.m.f.s $p^{[1]}$ and $p^{[2]}$ are ``glued together'' via an $I$-projection (on $\Gamma_{p^{[1]},p^{[2]}}$) to obtain the bivariate p.m.f.~$p$. As a consequence of Proposition~\ref{prop:I:proj:diag}, $u$ and $p$ share the same \emph{odds ratio matrix}; see \citet[Section 2.4]{Agr13}, \citet[p 43]{Kat14} or \citet[p 123]{Rud18}. According to~\citet[Section 6]{Gee20} \citep[see also][Section~4]{Gee23}, the latter indicates that both bivariate p.m.f.s share the same dependence properties. It can actually be verified, again from Proposition~\ref{prop:I:proj:diag}, that any bivariate p.m.f.\ $y \in \Gamma_{a, b}$ such that $\supp{(y)} = \supp{(u)}$ (assuming that it exists) obtained by an $I$-projection of $u$ on $\Gamma_{a, b}$ would keep the odds ratio matrix constant. As explained in \cite{Gee20}, the unique copula p.m.f.\ $u$ of $p$ is merely a natural (yet arbitrarily chosen) representative of the equivalence class of all bivariate p.m.f.s with the same odds ratio matrix, that is, with the same dependence properties.

\begin{remark}
  In \citet[Definition 6.1]{Gee20}, copula p.m.f.s were defined as bivariate p.m.f.s on $\{(i/(r+1), j/(s+1)):(i,j) \in I_{r,s}\} \subset [0,1]^2$, by analogy with bivariate copulas which are defined on $[0,1]^2$. We have chosen to define them as bivariate p.m.f.s on $I_{r,s}$ for simplicity. \qed
\end{remark}

In the same way that a continuous bivariate distribution can be summarized in terms of dependence by a moment of the underlying copula \cite[such as Spearman's rho or Kendall's tau -- see, e.g.,][Chapter~2 and the references therein]{HofKojMaeYan18}, when the decomposition in~\eqref{eq:sklar:like} holds, a bivariate p.m.f.\ $p$ can be summarized by a moment of its copula p.m.f.\ $u$. In the spirit of Spearman's rho, as a possible summary of $u$, \cite{Gee20} proposed to consider Pearson's linear correlation $\Cor(U,V)$ when $(U,V)$ has p.m.f.\ $u$. Because of the analogy with Yule's colligation coefficient \citep{Yul12} for 2 by 2 contingency tables, \citet[Section 6.6]{Gee20} suggested to call the resulting quantity \emph{Yule's coefficient}. Specifically, let $\Upsilon$ be the map from $\Gamma_{\text{unif}}$ to $[-1,1]$ defined, for any $v \in \Gamma_{\text{unif}}$, by
\begin{equation}
  \label{eq:Upsilon}
\Upsilon(v) = 3 \sqrt{\frac{(r-1)(s-1)}{(r+1)(s+1)}} \left(\frac{4}{(r-1)(s-1)} \sum_{(i,j) \in I_{r,s}} (i-1)(j-1) \, v_{ij} - 1 \right).
\end{equation}
Yule's coefficient of $p$ (or $u$) as proposed by \cite{Gee20} is then simply
\begin{equation}
  \label{eq:Yule}
  \rho = \Cor(U,V) =  \Upsilon \circ \Uc(p) =  \Upsilon(u).
\end{equation}
Recall that $p$ is the p.m.f.\ of $(X,Y)$. It is important to note that $\rho$ does not coincide with $\Cor(X,Y)$ in general.

Many other moments of $u$ could be used. As possible alternatives to $\rho$, we shall consider coefficients based on Goodman's and Kruskal's gamma \citep{GooKru54} and Kendall's tau $b$ \citep[see, e.g.,][]{KenGib90} of $(U,V)$ when $(U,V)$ has p.m.f.\ $u$. The following proposition, proven in Appendix~\ref{proof:prop:sklar:like}, gives the expressions of Goodman's and Kruskal's gamma and Kendall's tau $b$ for bivariate p.m.f.s on $I_{r,s}$ with uniform margins.

\begin{prop}
  \label{prop:G:T}
For any p.m.f.\ $v \in \Gamma_{\text{unif}}$, Goodman's and Kruskal's gamma of $v$ can be expressed as
\begin{equation}
  \label{eq:G}
  G(v) = \frac{2\kappa(v) - 1 + 1/r + 1/s - \|v\|_2^2}{1 - 1/r - 1/s + \|v\|_2^2},
\end{equation}
and Kendall’s tau b of $v$ can be expressed as
\begin{equation}
  \label{eq:T}
  T(v) = \frac{\sqrt{r s} \{2\kappa(v) - 1 + 1/r + 1/s - \|v\|_2^2\}}{\sqrt{(r-1)(s-1)}},
\end{equation}
where
\begin{align*}
  \kappa(v) & = 2 \sum_{\substack{i \in \{2,\dots,r\} \\ j \in \{1,\dots,s-1\}}} \sum_{\substack{i' \in \{1,\dots,i-1\} \\ j' \in \{j+1,\dots,s\}}} v_{ij} v_{i'j'}. 
\end{align*}
\end{prop}

By analogy with Yule's coefficient in~\eqref{eq:Yule}, we define the \emph{gamma coefficient} of $p$ (or $u$) and the \emph{tau coefficient} of $p$ (or $u$) to be
\begin{equation}
  \label{eq:gamma:tau}
  \gamma = G \circ \Uc(p) =  G(u) \qquad \text{and} \qquad \tau = T \circ \Uc(p) =  T(u),
\end{equation}
respectively, where the maps $G$ and $T$ are defined in~\eqref{eq:G} and~\eqref{eq:T}, respectively. It is important to keep in mind that these coefficients coincide with Goodman's and Kruskal's gamma and Kendall's tau $b$ of $p$ only when $p$ has uniform margins.

It is easy to verify that all three coefficients $\rho$, $\gamma$ and $\tau$ are equal to $1$ (resp.\ $-1$) when $r = s$ and the copula p.m.f.\ $u$ is a diagonal (resp.\ anti-diagonal) matrix. As explained in \citet[Sections 5 and~6]{Gee20}, in this case, $u$ corresponds to the upper (resp.\ lower) Fréchet bound for a square copula p.m.f.

\section{Estimation of copula p.m.f.s}
\label{sec:est}

Let us go back to the setting considered in the introduction. As explained therein, we are interested in modeling the bivariate p.m.f.\ $p$ of a discrete random vector $(X,Y)$ with (rectangular) support $I_{r,s}$, that is, $\supp{(p)} = I_{r,s}$. To do so, we have at our disposal $n$ (not necessarily independent) copies $(X_1,Y_1)$, \dots, $(X_n,Y_n)$ of $(X,Y)$. Note that there exists a trivial bivariate p.m.f.\ with uniform margins with support equal to $I_{r,s}$: it is the independence copula p.m.f.\ $\pi$ defined by $\pi_{ij} = 1 / (rs)$, $(i,j) \in I_{r,s}$. We can therefore immediately apply Proposition~\ref{prop:sklar:like} to obtain the decomposition of $p$  given in~\eqref{eq:sklar:like}.

By analogy with estimation approaches classically used in the context of copula modeling for continuous random variables \citep[see, e.g.,][Chapter 4 and the references therein]{HofKojMaeYan18}, \eqref{eq:sklar:like} suggests to proceed in three steps to form a parametric or semi-parametric estimate of $p$. For instance, to obtain a parametric estimate, one could proceed as follows:
\begin{enumerate}
\item Estimate the univariate margins $p^{[1]}$ and $p^{[2]}$ of $p$ parametrically; let $p^{[1,\alpha^{[n]}]}$ and $p^{[2,\beta^{[n]}]}$ be the resulting estimates.
\item Estimate the copula p.m.f.\ $u$ parametrically; let $u^{[\theta^{[n]}]}$ be the resulting estimate.
\item Form  a parametric estimate of $p$ in~\eqref{eq:sklar:like} via an $I$-projection as
  $$
  p^{[\alpha^{[n]},\beta^{[n]},\theta^{[n]}]} = \Ic_{p^{[1,\alpha^{[n]}]}, p^{[2,\beta^{[n]}]}}(u^{[\theta^{[n]}]}).
  $$
\end{enumerate}

\begin{remark}
  With the notation of Section \ref{sec:IPFP} and given $x \in \Gamma$ with $\supp{(x)} = I_{r,s}$, Theorem 6.2 of \citet{GieRef17} implies that the IPFP of $x$ depends continuously on $x$ and on the underlying target marginal p.m.f.s $a$ and~$b$. Under a natural assumption of strict positivity of the parametric model for $u$ (see Section~\ref{sec:par:est} below), from Proposition~\ref{prop:conv:IPFP} and the continuous mapping theorem, this implies that $p^{[\alpha^{[n]},\beta^{[n]},\theta^{[n]}]} = \Ic_{p^{[1,\alpha^{[n]}]}, p^{[2,\beta^{[n]}]}}(u^{[\theta^{[n]}]})$ is a consistent estimator of $p$ if $p^{[1,\alpha^{[n]}]}$, $p^{[2,\beta^{[n]}]}$ and $u^{[\theta^{[n]}]}$ are consistent estimators of $p^{[1]}$, $p^{[2]}$ and $u$, respectively. \qed
\end{remark}

The first estimation step above can be carried out using classical approaches in statistics. The aim of this section is to address the second step. However, as the parametric estimation of $u$ will turn out to be strongly related to a natural nonparametric estimator of $u$, we first investigate the latter. As we continue, all convergences are as $n \to \infty$ unless otherwise stated.

\subsection{Nonparametric estimation}

A straightforward approach to obtain a nonparametric estimator of $u$ is to use the plugin principle. Since a natural estimator of $p$ is $\hat p^{[n]}$ defined by
\begin{equation}
  \label{eq:hat:pn}
  \hat p^{[n]}_{ij} = \frac{1}{n} \sum_{k=1}^n \1(X_k = i, Y_k = j), \qquad (i,j) \in I_{r,s},
\end{equation}
a meaningful estimator of $u$ would simply be $\Uc(\hat p^{[n]})$, where $\Uc$ is defined in~\eqref{eq:I:proj:unif}. However, some thought reveals that this estimator may not always exist when $n$ is small. Indeed, the fact that $\supp{(p)} = I_{r,s}$ obviously does not imply that $\supp{(\hat p^{[n]})} = I_{r,s}$ for all $n$. Furthermore, there is no guarantee that there exists a bivariate p.m.f.\ $v^{[n]}$ on $I_{r,s}$ with uniform margins such that $\supp{(v^{[n]})} \subset \supp{(\hat p^{[n]})}$, which is necessary to ensure that $\Uc(\hat p^{[n]})$ exists. A way to solve this issue is to consider a smooth version of $\hat p^{[n]}$ in~\eqref{eq:hat:pn}. Although many solutions could be considered \citep[see, e.g.,][]{Sim95}, we opt for one of the simplest ones and consider the estimator $p^{[n]}$ of $p$ defined by
\begin{equation}
  \label{eq:pn}
  p^{[n]}_{ij} = \frac{1}{n+1} \left\{ \sum_{k=1}^n \1(X_k = i, Y_k = j) + q_{ij} \right\}, \qquad (i,j) \in I_{r,s},
\end{equation}
where $q$ is a p.m.f.\ on $I_{r,s}$ with $\supp{(q)} = I_{r,s}$. The latter can be equivalently rewritten in terms of $\hat p^{[n]}$ in~\eqref{eq:hat:pn} and $q$ as
\begin{equation}
  \label{eq:pn:q}
p^{[n]} = \frac{n}{n+1} \hat p^{[n]} + \frac{1}{n+1} q.
\end{equation}
In our numerical experiments, we considered two possibilities for $q$: the independence copula p.m.f.\ $\pi$ and, following the suggestion of a Referee, the p.m.f.\ $\hat p^{[n,1]} \hat p^{[n,2], \top}$ which has copula p.m.f.\ $\pi$ and the same margins as $\hat p^{[n]}$ in~\eqref{eq:hat:pn}. Note that the latter choice is only meaningful when both $\hat p^{[n,1]}$ and $\hat p^{[n,2]}$ are strictly positive (of course, if this is not the case, a practitioner could decide to reduce the cardinalities $r$ or $s$ of the marginal supports). As in our simulations, when applicable, both choices for $q$ led to very similar results, for simplicity, $q$ will be taken equal to $\pi$ as we continue.

The estimator of $u$ that we consider is then
\begin{equation}
  \label{eq:un}
  u^{[n]} = \Uc(p^{[n]})
\end{equation}
and, by analogy with classical copula modeling, could be called the \emph{empirical copula p.m.f.}

Denoting convergence in probability with the arrow $\p$, the consistency of $u^{[n]}$ can be immediately deduced from the consistency of $\hat p^{[n]}$ in~\eqref{eq:hat:pn}, the continuity of the $I$-projection of a strictly positive bivariate p.m.f.\ on a Fréchet class (see Lemma~\ref{lem:continuity} in Appendix~\ref{proof:prop:diff:I:proj}) and the continuous mapping theorem.

\begin{prop}[Consistency of $u^{[n]}$]
  \label{prop:consistency:un}
  Assume that $\hat p^{[n]} \p p$ in $\R^{r \times s}$, where $\hat p^{[n]}$ is defined in~\eqref{eq:hat:pn}. Then, $u^{[n]} \p u$ in $\R^{r \times s}$.
\end{prop}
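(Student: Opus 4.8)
The goal is to show that $u^{[n]} = \Uc(p^{[n]}) \p \Uc(p) = u$, and the natural route is the continuous mapping theorem. The plan is therefore to verify its two ingredients: first that the (deterministically smoothed) empirical p.m.f.\ $p^{[n]}$ converges in probability to $p$, and second that the map $\Uc$ is continuous at the point $p$. Once both are in hand, the conclusion follows immediately.

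First I would establish that $p^{[n]} \p p$ in $\R^{r \times s}$. This is the easy part and follows directly from the representation in~\eqref{eq:pn:pi}, namely $p^{[n]} = \tfrac{n}{n+1}\, \hat p^{[n]} + \tfrac{1}{n+1}\, \pi$. Since the scalar weights $\tfrac{n}{n+1} \to 1$ and $\tfrac{1}{n+1} \to 0$ are deterministic, and $\hat p^{[n]} \p p$ by assumption while $\pi$ is fixed, the elementary calculus of convergence in probability (or Slutsky's lemma applied coordinatewise) gives $p^{[n]} \p p$.

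Next I would argue that $\Uc$ is continuous at $p$. Here I would invoke \citet[Theorem 5.4]{GieRef17}, which asserts continuous dependence of the limit of the IPFP on its input p.m.f.\ at any $x \in \Gamma$ for which there exists $y \in \Gamma_{\text{unif}}$ with $\supp{(y)} = \supp{(x)}$. In the present setting $x = p$ with $\supp{(p)} = I_{r,s}$, and the independence copula p.m.f.\ $\pi$ defined by $\pi_{ij} = 1/(rs)$ satisfies $\pi \in \Gamma_{\text{unif}}$ together with $\supp{(\pi)} = I_{r,s} = \supp{(p)}$. Hence the required support condition holds, and $\Uc = \Ic_{u^{[1]},u^{[2]}}$ is continuous at~$p$.

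Combining the two steps through the continuous mapping theorem then yields $u^{[n]} = \Uc(p^{[n]}) \p \Uc(p) = u$, which is the claim. There is no genuine obstacle in this argument: the only point requiring a moment's care is the verification of the support hypothesis of \citet[Theorem 5.4]{GieRef17}, but this is immediate from the standing rectangular-support assumption $\supp{(p)} = I_{r,s}$ combined with the full support of the independence copula p.m.f.\ $\pi$. The role of the smoothing in~\eqref{eq:pn}, which is what makes $u^{[n]} = \Uc(p^{[n]})$ well defined for every finite $n$, is not needed for the limiting argument itself and is therefore not an issue for the proof of consistency.
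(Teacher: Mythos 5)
Your argument is correct and is essentially the paper's own: the paper likewise deduces the result from the consistency of $\hat p^{[n]}$, the continuity of the IPFP at $p$ guaranteed by Theorem~5.4 of \cite{GieRef17}, and the continuous mapping theorem. Your extra care in passing from $\hat p^{[n]}$ to the smoothed $p^{[n]}$ via~\eqref{eq:pn:pi} and in checking the support hypothesis through the independence copula p.m.f.\ $\pi$ only makes explicit what the paper leaves implicit.
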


The next proposition, proven in Appendix~\ref{proof:prop:asym:un}, gives the limiting distribution of $\sqrt{n} (u^{[n]} - u)$ in $\R^{r \times s}$. It is mostly a consequence of Proposition~\ref{prop:diff:I:proj} and the delta method \cite[see, e.g.,][Theorem 3.1]{van98}.

\begin{prop}[Limiting distribution of $\sqrt{n} (u^{[n]} - u)$]
  \label{prop:asym:un}
Assume that $\sqrt{n} (\hat p^{[n]} - p) \leadsto Z_p$ in $\R^{r \times s}$, where $\hat p^{[n]}$ is defined in~\eqref{eq:hat:pn}, the arrow $\leadsto$ denotes weak convergence and $Z_p$ is a random element of $\R^{r \times s}$. Then
$$
\sqrt{n} (u^{[n]} - u) = \Uc'_p( \sqrt{n} (\hat p^{[n]} - p) ) + o_P(1),
$$
where
\begin{itemize}
\item $\Uc_p'$ is the map from $\R^{r \times s}$ to $\R^{r \times s}$ defined by
\begin{equation*}
  \label{eq:Uc:prime}
  \Uc_p'(h)=\vect^{-1} ( J_{u,p} \vect(h) ), \qquad h \in \R^{r \times s},
\end{equation*}

\item $\vect$ is the operator defined as in~\eqref{eq:vect:S} with $S = I_{r,s}$,
\item $J_{u,p}$ is the $rs \times rs$ matrix given by
  \begin{equation}
    \label{eq:J:u:p}
    J_{u,p} = - K L_u^{-1} M_p N,
  \end{equation}
\item $K$ is the $rs \times (r-1)(s-1)$ matrix given by
\begin{equation*}
\begin{bmatrix}
  Q  & 0 & \dots & 0 \\
  0 & Q & \dots & 0 \\
  \vdots & \vdots & \ddots & \vdots \\
  0 & 0 & \dots & Q \\
  -Q & -Q & \dots & -Q \\
\end{bmatrix}
\text{ where }
Q =
 \begin{bmatrix}
   1  & 0 & \dots & 0 \\
   0 & 1 & \dots & 0 \\
   \vdots & \vdots & \ddots & \vdots \\
   0 & 0 & \dots & 1 \\
   -1 & -1 & \dots & -1 \\
 \end{bmatrix}
 \in \R^{r \times (r-1)},
\end{equation*}
\item $L_u$ is the  $(r-1)(s-1) \times (r-1)(s-1)$ matrix whose element at row $k + (r-1)(l-1)$, $(k,l) \in I_{r-1,s-1}$, and column $i + (r-1)(j-1)$, $(i,j) \in I_{r-1,s-1}$, is given by
\begin{equation}
  \label{eq:Lu}
   \frac{\1(i = k, j = l)}{u_{kl}}  +  \frac{\1(j = l)}{u_{rl}} + \frac{\1(i = k)}{u_{ks}} + \frac{1}{u_{rs}},
\end{equation}
\item $M_p$ is the  $(r-1)(s-1) \times (rs - 1)$ matrix whose element at row $k + (r-1)(l-1)$, $(k,l) \in I_{r-1,s-1}$, and column $i + r(j-1)$, $(i,j) \in I_{rs} \setminus \{(i,j)\}$, is given by
\begin{equation}
  \label{eq:Mp}
 - \frac{\1(i = k, j = l)}{p_{kl}} + \frac{\1(i = r, j = l)}{p_{rl}} + \frac{\1(i = k, j = s)}{p_{ks}} + \frac{1}{p_{rs}}, 
\end{equation}
\item and $N$ is the  $(rs - 1) \times rs$ matrix given by
  $$
   \begin{bmatrix}
   1  & 0 & \dots & 0 & 0\\
   0 & 1 & \dots & 0 & 0\\
   \vdots & \vdots & \ddots & \vdots \\
   0 & 0 & \dots & 1 & 0\\
 \end{bmatrix}.
 $$
\end{itemize}

Consequently,
$$
\sqrt{n} (u^{[n]} - u) \leadsto \Uc'_p(Z_p) \text{ in } \R^{r \times s}.
$$
Moreover, when $(X_1,Y_1),\dots,(X_n,Y_n)$ are independent copies of $(X,Y)$, $\vect(Z_p)$ is a $rs$-dimensional centered normal random vector with covariance matrix $\Sigma_p = \diag(\vect(p))  - \vect(p) \vect(p)^\top$ and $\vect(\Uc'_p(Z_p))$ is a $rs$-dimensional centered normal random vector with covariance matrix $J_{u,p} \Sigma_p J_{u,p}^\top$. 
\end{prop}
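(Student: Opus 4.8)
The plan is to reduce the problem to the smoothed estimator $p^{[n]}$, apply the differentiability result of Proposition~\ref{prop:diff:IPFP} through the delta method after writing $\Uc$ in the reparametrized form used there, and finally identify the resulting Jacobian with the product $-K L_u^{-1} M_p N$. First I would dispose of the difference between $\hat p^{[n]}$ and $p^{[n]}$: from~\eqref{eq:pn:pi} one has $\sqrt n (p^{[n]} - p) = \tfrac{n}{n+1}\sqrt n(\hat p^{[n]} - p) + \tfrac{\sqrt n}{n+1}(\pi - p)$, so that $\sqrt n(p^{[n]} - p) = \sqrt n(\hat p^{[n]} - p) + o_P(1) \leadsto Z_p$. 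Because the smoothing in~\eqref{eq:pn} makes every entry of $p^{[n]}$ strictly positive, $p^{[n]} \in \Gamma_{I_{r,s}}$ for all $n$, and since $\pi \in \Gamma_{\text{unif}}$ has $\supp(\pi) = I_{r,s} = \supp(p^{[n]})$, Assertion~2 of Proposition~\ref{prop:conv:IPFP} guarantees that $u^{[n]} = \Uc(p^{[n]})$ exists and lies in $\Gamma_{a,b,I_{r,s}}$ with full support, where $a = u^{[1]}$ and $b = u^{[2]}$.

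I would then set $T = I_{r,s}$, $A = I_{r-1,s-1}$ and $B = I_{r,s}\setminus\{(r,s)\}$, and check the hypotheses of Proposition~\ref{prop:diff:IPFP} in this configuration: nonemptiness of $\Gamma_{a,b,T}$ (witnessed by $\pi$), recovery of the last row and column of any $y \in \Gamma_{a,b,T}$ from its $A$-entries via the margin constraints, and openness of $\Lambda_{a,b,A,T}$. Writing $\phi = \vect_A \circ \Uc \circ d$, the proposition gives that $\phi$ is differentiable at $\vect_B(p)$ with Jacobian $-[\partial_1\partial_1 I(\vect_A(u)\|\vect_B(p))]^{-1}\partial_2\partial_1 I(\vect_A(u)\|\vect_B(p))$. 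Since the recovery maps $c$ and $d$ (hence $d^{-1}$) are affine, on $\Gamma_T$ one may factor $\Uc = c \circ \phi \circ d^{-1}$; combining the chain rule, the computation of the previous paragraph (namely $\phi(\vect_B(p^{[n]})) = \vect_A(u^{[n]})$ and $u^{[n]} = c(\vect_A(u^{[n]}))$), and the delta method \citep[Theorem~3.1]{van98} then yields
$$
\sqrt n(u^{[n]} - u) = K\,\phi'\,N\,\vect\bigl(\sqrt n(p^{[n]} - p)\bigr) + o_P(1) = \Uc'_p\bigl(\sqrt n(\hat p^{[n]} - p)\bigr) + o_P(1),
$$
where the last equality uses the first paragraph together with the linearity of the (continuous) map $\Uc'_p$.

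The remaining task is to identify the four factors with the matrices in~\eqref{eq:J:u:p}. On the tangent space $\{h : \sum_{(i,j)} h_{ij} = 0\}$ the map $(d^{-1})'$ records the $B$-entries in column-major order, which is exactly the selection matrix $N$. Writing out $c(z)$ explicitly (last column $y_{is} = r^{-1} - \sum_{j<s} z_{ij}$, last row $y_{rj} = s^{-1} - \sum_{i<r} z_{ij}$ and corner $y_{rs} = r^{-1}+s^{-1}-1+\sum_{(g,h)\in A} z_{gh}$) shows that its constant Jacobian $c'$ is the block matrix $K$ assembled from $H$. For the second-order terms, since $I(z\|w) = D(c(z)\|d(w))$ with $D$ as in~\eqref{eq:KL:divergence}, the pure $z$-Hessian only involves $\log c(z)$, giving
$$
\partial_1\partial_1 I\bigl[(k,l),(i,j)\bigr] = \sum_{(g,h)\in T} \frac{1}{u_{gh}}\,\frac{\partial c_{gh}}{\partial z_{kl}}\,\frac{\partial c_{gh}}{\partial z_{ij}},
$$
and the four nonzero overlaps of $\partial c/\partial z_{kl}$ and $\partial c/\partial z_{ij}$ (at $(k,l)$, the last column, the last row, and the corner) reproduce the four terms of~\eqref{eq:Lu}, so $\partial_1\partial_1 I = L_u$. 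The mixed derivative only hits $-\log d(w)$, yielding $\partial_2\partial_1 I[(k,l),(i,j)] = -p_{ij}^{-1}\,\partial c_{ij}/\partial z_{kl} + p_{rs}^{-1}$ with $p_{rs} = 1 - \sum_{(g,h)\neq(r,s)} p_{gh}$; the sign pattern of $\partial c/\partial z_{kl}$ then matches~\eqref{eq:Mp} term by term, so $\partial_2\partial_1 I = M_p$ and $\phi' = -L_u^{-1}M_p$, whence $K\phi' N = -KL_u^{-1}M_pN = J_{u,p}$.

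Finally, in the i.i.d.\ case $\vect(\hat p^{[n]})$ is an average of i.i.d.\ indicator vectors with mean $\vect(p)$ and multinomial covariance $\Sigma_p = \diag\{\vect(p)\} - \vect(p)\vect(p)^\top$, so the central limit theorem gives $\vect(Z_p) \sim N(0,\Sigma_p)$; applying the linear map $\vect \circ \Uc'_p = J_{u,p}\,\vect$ then gives $\vect\{\Uc'_p(Z_p)\} \sim N(0, J_{u,p}\Sigma_p J_{u,p}^\top)$. I expect the main obstacle to be the bookkeeping in the previous paragraph: computing the Hessian $L_u = \partial_1\partial_1 I$ and the mixed block $M_p = \partial_2\partial_1 I$ through the affine reparametrizations $c$ and $d$ and verifying that the resulting overlap patterns match~\eqref{eq:Lu} and~\eqref{eq:Mp} exactly, including the corner contribution coming from the eliminated entry $(r,s)$.
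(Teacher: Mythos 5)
Your proposal is correct and follows essentially the same route as the paper's proof: reduce to the smoothed estimator $p^{[n]}$ via $\sqrt{n}(p^{[n]}-\hat p^{[n]}) \p 0$, apply Proposition~\ref{prop:diff:IPFP} with $T=I_{r,s}$, $A=I_{r-1,s-1}$, $B=I_{r,s}\setminus\{(r,s)\}$, invoke the delta method for $\vect_A\circ\Uc\circ d$, and identify $c'=K$, $(d^{-1})'=N$ and the two second-derivative blocks of $I$ with $L_u$ and $M_p$. Your overlap computations for $\partial_1\partial_1 I$ and $\partial_2\partial_1 I$ match~\eqref{eq:Lu} and~\eqref{eq:Mp} exactly, so the argument is complete.
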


\begin{remark}
\label{rem:min:div:est}
From~\eqref{eq:I:proj:Frechet} and~\eqref{eq:I:proj:unif}, the empirical copula p.m.f.\ $u^{[n]}$ defined in~\eqref{eq:un} can be rewritten more explicitly as
\begin{equation}
  \label{eq:un:explicit}
u^{[n]} = \arginf_{v \in \Gamma_{\text{unif}}} D(v \| p^{[n]}).
\end{equation}
As noted by a Referee, the above estimator is a \emph{minimum divergence estimator} \citep[see, e.g.,][]{ReaCre88, MorParVaj95, BasShiPar11}. Such estimators have been studied for families of divergences (such as power-divergences in \cite{ReaCre88} and $\phi$-divergences in \cite{MorParVaj95}) which contain the Kullback--Leibler divergence as a particular case.  When based on the Kullback--Leibler divergence and with our notation, such estimators can be expressed as
\begin{equation}
  \label{eq:min:div:est}
v^{[n]} = \arginf_{v \in \Pi_0} D(v \| p^{[n]}),
\end{equation}
where $\Pi_0$ is a subset of interest of $\Gamma$ in~\eqref{eq:Gamma}. The strong similarity between~\eqref{eq:un:explicit} and~\eqref{eq:min:div:est} suggests that asymptotic results for $u^{[n]}$ should follow from asymptotic results for $v^{[n]}$ upon taking $\Pi_0 =  \Gamma_{\text{unif}}$. An inspection of the appendices of \cite{ReaCre88} for instance reveals however that asymptotic results for $v^{[n]}$ are typically obtained under the \emph{null hypothesis} that $p \in \Pi_0$ (which is fully meaningful given that the aforementioned reference focuses on goodness-of-fit testing). In other words, upon taking $\Pi_0$ equal to $\Gamma_{\text{unif}}$, typical asymptotic results for minimum divergence estimators would allow us to obtain the asymptotics of $u^{[n]}$ under the assumption that $p \in \Gamma_{\text{unif}}$. The latter would however obviously not be of much interest since the approach studied in this work implicitly assumes that the unknown bivariate p.m.f.\ $p$ does not have uniform margins in general. The above connection with minimum divergence estimators reveals nonetheless that the results stated in Proposition~\ref{prop:asym:un} should be a particular case of asymptotic results for minimum divergence estimators under the \emph{alternative hypothesis} that $p \not\in \Pi_0$, that is, \emph{under misspecification}. After a literature review, we were only able to find one reference that addresses this issue: it is the work of \cite{JimPinAlbMor11}. Specifically, in principle, Theorem~1 in the aforementioned reference could be used as an important building block to obtain an alternative proof of Proposition~\ref{prop:asym:un} above. However, its proof seems incomplete as its relies on a lemma whose proof seems incomplete as already mentioned in Remark~\ref{rem:JimEtAl}. \qed
\end{remark}

Recall the definitions of the moments $\rho$, $\gamma$ and $\tau$ of $p$ (or $u$) considered at the end of Section~\ref{sec:sklar:like} in~\eqref{eq:Yule} and~\eqref{eq:gamma:tau}. Natural estimators of the latter then simply follow from the plugin principle and are
\begin{equation}
\label{eq:rho:gamma:tau:estimators}
\rho^{[n]} = \Upsilon(u^{[n]}), \qquad \gamma^{[n]} = G(u^{[n]}) \qquad \text{and} \qquad \tau^{[n]} = T(u^{[n]}),
\end{equation}
respectively, where $u^{[n]}$ is defined in~\eqref{eq:un} and the maps $\Upsilon$, $G$ and $T$ are defined in~\eqref{eq:Upsilon}, \eqref{eq:G} and~\eqref{eq:T}, respectively.

As we continue, for an arbitrary map $\eta:\Gamma \to \R$, we define its gradient $\dot \eta$ to be the usual gradient written with respect to the standard column-major vectorization of its $r \times s$ matrix argument, that is,
$$
\dot \eta = \left( \frac{\partial\eta}{\partial x_{11}}, \dots, \frac{\partial\eta}{\partial x_{r1}}, \dots, \frac{\partial\eta}{\partial x_{1s}}, \dots, \frac{\partial\eta}{\partial x_{rs}} \right).
$$
The following result is then an immediate corollary of Propositions~\ref{prop:consistency:un} and~\ref{prop:asym:un}, the continuous mapping theorem and the delta method.

\begin{cor}[Asymptotics of moment estimators]
  \label{cor:asym:moment:est}
  If $\hat p^{[n]} \p p$ in $\R^{r \times s}$, where $\hat p^{[n]}$ is defined in~\eqref{eq:hat:pn}, then $\rho^{[n]} \p \rho$, $\gamma^{[n]} \p \gamma$ and $\tau^{[n]} \p \tau$ in $\R$. If, additionally, $\sqrt{n} (\hat p^{[n]} - p)$ converges weakly in $\R^{r \times s}$, then
\begin{align*}
  \sqrt{n} (\rho^{[n]} - \rho) &= \dot \Upsilon(u)^\top  J_{u,p} \sqrt{n} \, \vect(\hat p^{[n]} - p) + o_P(1), \\
  \sqrt{n} (\gamma^{[n]} - \gamma) &=  \dot G(u)^\top  J_{u,p} \sqrt{n} \, \vect(\hat p^{[n]} - p) ) + o_P(1), \\
  \sqrt{n} (\tau^{[n]} - \tau) &=  \dot T(u)^\top  J_{u,p} \sqrt{n} \, \vect(\hat p^{[n]} - p) ) + o_P(1),
\end{align*}
where $J_{u,p}$ is defined in~\eqref{eq:J:u:p}. Consequently, when $(X_1,Y_1)$, \dots, $(X_n,Y_n)$ are independent copies of $(X,Y)$, the sequences $\sqrt{n} (\rho^{[n]} - \rho)$, $\sqrt{n} (\gamma^{[n]} - \gamma)$ and $\sqrt{n} (\tau^{[n]} - \tau)$) are asymptotically centered normal with variances
\begin{gather*}
\dot \Upsilon(u)^\top J_{u,p} \Sigma_p J_{u,p}^\top  \dot \Upsilon(u),  \\
\dot G(u)^\top J_{u,p} \Sigma_p J_{u,p}^\top  \dot G(u), \\
\dot T(u)^\top J_{u,p} \Sigma_p J_{u,p}^\top  \dot T(u),
\end{gather*}
respectively, where $\Sigma_p = \diag(\vect(p))  - \vect(p) \vect(p)^\top$. 
\end{cor}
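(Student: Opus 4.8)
The plan is to read off both assertions from the behavior of $u^{[n]}$ already established in Propositions~\ref{prop:consistency:un} and~\ref{prop:asym:un}, combined with the smoothness of the three moment maps $\Upsilon$, $G$ and $T$: the continuous mapping theorem delivers the consistency statement, while the (finite-dimensional) delta method of \citet[][Theorem~3.1]{van98} delivers the asymptotic linear expansions and, in turn, the asymptotic normality.

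For consistency, Proposition~\ref{prop:consistency:un} gives $u^{[n]} \p u$ in $\R^{r \times s}$ whenever $\hat p^{[n]} \p p$. The map $\Upsilon$ in~\eqref{eq:Upsilon} is affine in the entries of its argument, while $G$ and $T$ in~\eqref{eq:G:T} are ratios of polynomials in those entries whose denominators, $\kappa(u) + \delta(u)$ and $\sqrt{\Pr(U \neq U')\Pr(V \neq V')} = \sqrt{(r-1)(s-1)/(rs)}$, are strictly positive at every $u \in \Gamma_{\text{unif}}$ when $r,s \geq 2$. Hence all three maps are continuous at $u$, and the continuous mapping theorem yields $\rho^{[n]} = \Upsilon(u^{[n]}) \p \rho$, $\gamma^{[n]} \p \gamma$ and $\tau^{[n]} \p \tau$.

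For the linear expansions I would invoke Proposition~\ref{prop:asym:un}, which under $\sqrt{n}(\hat p^{[n]} - p) \leadsto Z_p$ supplies both the representation $\sqrt{n}\,\vect(u^{[n]} - u) = J_{u,p}\,\sqrt{n}\,\vect(\hat p^{[n]} - p) + o_P(1)$ and the weak limit $\sqrt{n}(u^{[n]} - u) \leadsto \Uc'_p(Z_p)$. Since the denominators above do not vanish at $u$, each of $\Upsilon$, $G$, $T$ extends to a map differentiable on an open neighborhood of $u$ in $\R^{rs}$, with gradient (with respect to the column-major vectorization) $\dot\Upsilon(u)$, $\dot G(u)$, $\dot T(u)$, respectively. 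The delta method then gives, for $\eta \in \{\Upsilon, G, T\}$,
$$
\sqrt{n}\bigl(\eta(u^{[n]}) - \eta(u)\bigr) = \dot\eta(u)^\top \vect\bigl(\sqrt{n}(u^{[n]} - u)\bigr) + o_P(1),
$$
and substituting the representation of $\sqrt{n}\,\vect(u^{[n]} - u)$ produces the three displayed identities, e.g.\ $\sqrt{n}(\rho^{[n]} - \rho) = \dot\Upsilon(u)^\top J_{u,p}\,\sqrt{n}\,\vect(\hat p^{[n]} - p) + o_P(1)$. In the i.i.d.\ case, Proposition~\ref{prop:asym:un} identifies $\vect(Z_p)$ as centered Gaussian with covariance $\Sigma_p = \diag\{\vect(p)\} - \vect(p)\vect(p)^\top$; each statistic is then asymptotically a fixed linear functional of $\sqrt{n}\,\vect(\hat p^{[n]} - p)$, so each converges to the corresponding linear functional of $\vect(Z_p)$, which is centered normal with variance $\dot\eta(u)^\top J_{u,p}\Sigma_p J_{u,p}^\top \dot\eta(u)$.

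The only point that is not purely mechanical --- and thus the closest thing to an obstacle --- is that the moment maps are genuinely defined on the affine slice $\Gamma_{\text{unif}}$ rather than on an open set, so one must check that the gradients $\dot G(u)$, $\dot T(u)$ are meaningful for the delta method. This is resolved by noting that both the increments $u^{[n]} - u$ and the limit $\Uc'_p(Z_p)$ lie in the tangent space of $\Gamma_{\text{unif}}$, namely the matrices with vanishing row and column sums: indeed the matrix $K$ appearing in~\eqref{eq:J:u:p}, and hence $J_{u,p}$, has range contained in that tangent space, so the first-order term $\dot\eta(u)^\top J_{u,p}(\cdot)$ is insensitive to the choice of smooth extension of $\eta$ off $\Gamma_{\text{unif}}$. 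Once this is observed, the statement is indeed an immediate consequence of Propositions~\ref{prop:consistency:un} and~\ref{prop:asym:un}.
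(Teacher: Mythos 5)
Your argument is correct and is exactly the route the paper takes: the paper offers no separate proof, simply declaring the corollary an immediate consequence of Propositions~\ref{prop:consistency:un} and~\ref{prop:asym:un} together with the continuous mapping theorem and the delta method. Your additional remark about extending $\Upsilon$, $G$, $T$ off the affine slice $\Gamma_{\text{unif}}$ and the range of $J_{u,p}$ lying in the tangent space is a legitimate refinement of a point the paper leaves implicit, not a different approach.
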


\subsection{Parametric estimation}
\label{sec:par:est}

Let
\begin{equation}
  \label{eq:Jc}
  \Jc = \{u^{[\theta]} : \theta \in \Theta\}
\end{equation}
be a bivariate parametric family of copula p.m.f.s on $I_{r,s}$, where $\Theta$ is an open subset of $\R^m$ for some strictly positive integer $m$. Because the assumption $\supp{(p)} = I_{r,s}$ implies that $\supp{(u)} = I_{r,s}$, the family $\Jc$ is naturally assumed to satisfy the following: for any $\theta \in \Theta$, $u^{[\theta]}_{ij} > 0$ for all $(i,j) \in I_{r,s}$. In this section, we assume that the unknown copula p.m.f.\ $u$ in~\eqref{eq:sklar:like} belongs to $\Jc$, that is, there exists $\theta_0 \in \Theta$ such that $u = u^{[\theta_0]}$, and our goal is to address the estimation of~$\theta_0$.

Before considering two estimation approaches, note that several examples of parametric copula p.m.f.s can be found in Section~7 of \cite{Gee20}. Under the assumption of rectangular support for $p$ (and thus $u$), it is particularly meaningful to follow one of the suggestions therein and construct the family $\Jc$ from a parametric family $\{C_\theta : \theta \in \Theta \}$ of classical bivariate copulas with strictly positive densities on $(0,1)^2$ such that, for any $\theta \in \Theta$ and $(i,j) \in I_{r,s}$,
\begin{equation}
  \label{eq:u:theta}
u^{[\theta]}_{ij} = C_\theta \left( \frac{i}{r}, \frac{j}{s} \right) - C_\theta \left( \frac{i}{r}, \frac{j-1}{s} \right) - C_\theta \left( \frac{i-1}{r}, \frac{j}{s} \right) + C_\theta \left( \frac{i-1}{r}, \frac{j-1}{s} \right).
\end{equation}
Of course, this way of proceeding is fully meaningful only if the resulting family $\Jc$ in~\eqref{eq:Jc} is identifiable, that is, $u^{[\theta]} \neq u^{[\theta']}$ whenever $\theta \neq \theta'$. To check that a family $\Jc$ is identifiable, it thus suffices to verify that, for any $\theta \neq \theta'$, there exists $(i,j) \in I_{r,s}$ such that $u^{[\theta]}_{ij} \neq u^{[\theta']}_{ij}$. Note that the quantity $u^{[\theta]}_{ij}$ in~\eqref{eq:u:theta} is actually the $C_\theta$-volume of the rectangle $((i-1)/r, i/r] \times ((j-1)/s, j/s]$  \citep[see, e.g.,][Section 2.1]{HofKojMaeYan18}. Non-identifiability thus occurs when a change in $\theta$ leaves the $C_\theta$-volumes of all the rectangles $((i-1)/r, i/r] \times ((j-1)/s, j/s]$, $(i,j) \in I_{r,s}$, unchanged. Since the construction in~\eqref{eq:u:theta} is based on classical copula families (which are identifiable), arguably, non-identifiability of the resulting family $\Jc$ in~\eqref{eq:Jc} should be the exception rather than the rule in particular as $r$ and $s$ get larger. For a parametric copula family for which $C_\theta$ has an explicit expression, identifiability of the family $\Jc$ can be checked analytically by replacing $C_\theta$ by its expression in~\eqref{eq:u:theta}. The latter is for instance done in \citet[Section 7.1]{Gee20} for the Farlie--Gumbel--Morgenstern family. For parametric copula families for which $C_\theta$ is not explicitly available (such as elliptical copula families),  identifiability could be checked numerically.

\subsubsection{Method-of-moments estimation}

We shall assume in this subsection that $\Jc$ is a one-parameter family, that is, $m=1$. Given~$\Jc$, let $g_\rho$, $g_\gamma$ and $g_\tau$ be the functions defined, for any $\theta \in \Theta$, by
\begin{equation}
  \label{eq:g:func}
g_\rho(\theta) = \Upsilon(u^{[\theta]}), \qquad g_\gamma(\theta) = G(u^{[\theta]}) \qquad \text{and} \qquad g_\tau(\theta) = T(u^{[\theta]}),
\end{equation}
where the maps $\Upsilon$, $G$ and $T$ are defined in~\eqref{eq:Upsilon}, \eqref{eq:G} and~\eqref{eq:T}, respectively. Method-of-moments estimators based on Yule's coefficient, the gamma coefficient or the tau coefficient can be used if the functions $g_\rho$, $g_\gamma$ and $g_\tau$ are one-to-one. In that case, corresponding estimators of $\theta_0$ are simply given by
\begin{equation}
  \label{eq:MoM:estimators}
\theta_{\rho}^{[n]} = g_\rho^{-1}(\rho^{[n]}), \qquad \theta_{\gamma}^{[n]} = g_\gamma^{-1}(\gamma^{[n]}) \qquad \text{and} \qquad \theta_{\tau}^{[n]} = g_\tau^{-1}(\tau^{[n]})
\end{equation}
where $\rho^{[n]}$, $\gamma^{[n]}$ and $\tau^{[n]}$ are the estimators of $\rho$, $\gamma$ and $\tau$, respectively, defined in~\eqref{eq:rho:gamma:tau:estimators}.

The following result is then an immediate consequence of Corollary~\ref{cor:asym:moment:est}, the continuous mapping theorem and the delta method.

\begin{cor}[Asymptotics of method-of-moments estimators]
  \label{cor:MoM}
Assume that the functions $g_\rho$, $g_\gamma$ and $g_\tau$ in~\eqref{eq:g:func} are one-to-one and that $g_\rho^{-1}$, $g_\gamma^{-1}$ and $g_\tau^{-1}$ are continuously differentiable at $\rho_0 = \Upsilon(u^{[\theta_0]})$, $\gamma_0 = G(u^{[\theta_0]})$ and $\tau_0 = T(u^{[\theta_0]})$. If $\hat p^{[n]} \p p$ in $\R^{r \times s}$, then $\theta_{\rho}^{[n]} \p \theta_0$,  $\theta_{\gamma}^{[n]} \p \theta_0$ and  $\theta_{\tau}^{[n]} \p \theta_0$ in $\R$. If, additionally, $\sqrt{n} (\hat p^{[n]} - p)$ converges weakly in $\R^{r \times s}$, then
\begin{align*}
  \sqrt{n} (\theta_{\rho}^{[n]} - \theta_0) &= \frac{\dot \Upsilon(u)^\top  J_{u,p} \sqrt{n} \, \vect(\hat p^{[n]} - p)}{g'_\rho(\theta_0)}  + o_P(1), \\
  \sqrt{n} (\theta_{\gamma}^{[n]} - \theta_0) &=  \frac{\dot G(u)^\top  J_{u,p} \sqrt{n} \, \vect(\hat p^{[n]} - p)}{g'_\gamma(\theta_0)} + o_P(1), \\
  \sqrt{n} ( \theta_{\tau}^{[n]} - \theta_0) &=  \frac{\dot T(u)^\top  J_{u,p} \sqrt{n} \, \vect(\hat p^{[n]} - p)}{g'_\tau(\theta_0)} + o_P(1),
\end{align*}
where $J_{u,p}$ is defined in~\eqref{eq:J:u:p}.
\end{cor}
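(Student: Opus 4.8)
The plan is to derive both assertions from Corollary~\ref{cor:asym:moment:est} by composing with the inverse moment functions $g_\rho^{-1}$, $g_\gamma^{-1}$ and $g_\tau^{-1}$, so that the whole result reduces to the scalar delta method. I shall treat $\theta_\rho^{[n]}$ in detail; the arguments for $\theta_\gamma^{[n]}$ and $\theta_\tau^{[n]}$ are identical upon replacing the pair $(\Upsilon, g_\rho)$ by $(G, g_\gamma)$ and $(T, g_\tau)$, respectively.

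For consistency, I would first observe that, because $u = u^{[\theta_0]}$, the definition of $g_\rho$ in~\eqref{eq:g:func} gives $\rho = \Upsilon(u) = \Upsilon(u^{[\theta_0]}) = g_\rho(\theta_0) = \rho_0$, whence $\theta_0 = g_\rho^{-1}(\rho_0) = g_\rho^{-1}(\rho)$. Under $\hat p^{[n]} \p p$, Corollary~\ref{cor:asym:moment:est} yields $\rho^{[n]} \p \rho$, and since $g_\rho^{-1}$ is continuously differentiable, hence continuous, at $\rho_0$, the continuous mapping theorem gives $\theta_\rho^{[n]} = g_\rho^{-1}(\rho^{[n]}) \p g_\rho^{-1}(\rho) = \theta_0$.

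For the asymptotic expansion, I would apply the one-dimensional delta method \citep[e.g.,][Theorem~3.1]{van98} to the map $g_\rho^{-1}$ at the point $\rho_0$. Corollary~\ref{cor:asym:moment:est} provides, under $\sqrt{n}(\hat p^{[n]} - p) \leadsto Z_p$, the asymptotically linear representation $\sqrt{n}(\rho^{[n]} - \rho) = \dot\Upsilon(u)^\top J_{u,p}\,\sqrt{n}\,\vect(\hat p^{[n]} - p) + o_P(1)$, which is in particular $O_P(1)$. Differentiability of $g_\rho^{-1}$ at $\rho_0 = \rho$ then gives $\sqrt{n}(\theta_\rho^{[n]} - \theta_0) = (g_\rho^{-1})'(\rho_0)\,\sqrt{n}(\rho^{[n]} - \rho) + o_P(1)$. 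It remains to identify the scalar factor: since $g_\rho^{-1}(g_\rho(\theta)) = \theta$ on $\Theta$ and both $g_\rho$ and $g_\rho^{-1}$ are differentiable, the chain rule gives $(g_\rho^{-1})'(\rho_0)\,g_\rho'(\theta_0) = 1$, so that $(g_\rho^{-1})'(\rho_0) = 1/g_\rho'(\theta_0)$. Substituting this factor together with the linear representation of $\sqrt{n}(\rho^{[n]} - \rho)$ produces exactly the displayed formula for $\sqrt{n}(\theta_\rho^{[n]} - \theta_0)$.

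The proof is essentially bookkeeping, so I do not anticipate a genuine obstacle. The only two points requiring care are the identification $\rho = \rho_0$, which ensures both that the delta method is applied at the correct point and that the probability limit is indeed $\theta_0$, and the passage from $(g_\rho^{-1})'(\rho_0)$ to $1/g_\rho'(\theta_0)$, which relies on the assumed continuous differentiability of $g_\rho^{-1}$ at $\rho_0$ to guarantee that $g_\rho'(\theta_0) \neq 0$.
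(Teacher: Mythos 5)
Your proof is correct and follows exactly the route the paper intends: the paper gives no separate proof of this corollary, stating only that it is an immediate consequence of Corollary~\ref{cor:asym:moment:est}, the continuous mapping theorem and the delta method, which is precisely the bookkeeping you carry out (including the identification $(g_\rho^{-1})'(\rho_0) = 1/g_\rho'(\theta_0)$ needed to match the displayed formulas).
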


\subsubsection{Maximum pseudo-likelihood estimation}
\label{sec:MPL}

We assume in this subsection that $\Jc$ in~\eqref{eq:Jc} is a multi-parameter family, that is, $m \geq 1$, with $\Theta$ an open subset of $\R^m$. Recall that we work under the assumption that there exists $\theta_0 \in \Theta$ such that $u = u^{[\theta_0]}$ and that our goal is to estimate the unknown parameter vector $\theta_0$. It is important to note that we do not have at our disposal observed counts from the bivariate p.m.f.\ $u = u^{[\theta_0]}$. We instead only have access to the observed counts $n \hat p^{[n]}$ from~$p$, where $\hat p^{[n]}$ is defined in~\eqref{eq:hat:pn}. To carry out maximum likelihood estimation, we would thus additionally need to postulate marginal parametric models for $p^{[1]}$ and $p^{[2]}$ and obtain an estimate of $\theta_0$ as a by-product of the estimation of all the parameters of a model for $p$ (see Remark~\ref{rem:MLE} below). Instead of considering such an intricate way of proceeding, it is conceptually simpler to consider minimum divergence estimators of the form $\check \theta^{[n]} = \arginf_{\theta \in \Theta} D(u^{[\theta]} \| u^{[n]})$
or
\begin{equation}
  \label{eq:mde}
  \theta^{[n]} = \arginf_{\theta \in \Theta} D(u^{[n]} \| u^{[\theta]}) =  \argsup_{\theta \in \Theta} \sum_{(i,j) \in I_{r,s}} u^{[n]}_{ij} \log u^{[\theta]}_{ij},
\end{equation}
where $D$ is the Kullback--Leibler divergence defined in~\eqref{eq:KL:divergence}. Notice that if the numbers in $n u^{[n]}$ were counts obtained from a random sample from $u = u^{[\theta_0]}$,
\begin{equation}
  \label{eq:PL}
\bar L^{[n]}(\theta) = n \sum_{(i,j) \in I_{r,s}} u^{[n]}_{ij} \log u^{[\theta]}_{ij}, \qquad \theta \in \Theta,
\end{equation}
would be the log-likelihood of the model. As the numbers in $n u^{[n]}$ are only a proxy to observed counts from $u^{[\theta_0]}$, the estimator in~\eqref{eq:mde}, which can be rewritten as
\begin{equation}
  \label{eq:theta:n}
\theta^{[n]} = (\theta^{[n]}_1,\dots,\theta^{[n]}_m) = \argsup_{\theta \in \Theta} \bar L^{[n]}(\theta) = \argsup_{\theta \in \Theta} \frac{1}{n} \bar L^{[n]}(\theta),
\end{equation}
is a maximum \emph{pseudo-likelihood} estimator of $\theta_0$. The aim of this section is to derive its consistency and its asymptotic normality.

\begin{remark}[Connection to minimum divergence estimators in multinomial models] Let $\Fc = \{p^{[\delta]} : \delta \in \Delta \}$ be a parametric family of bivariate p.m.f.s, where $\Delta$ is a open subset of $\R^d$ for some strictly positive integer $d$. Assume furthermore that $(X_1,Y_1),\dots,(X_n,Y_n)$ is a random sample from $p$ (which implies that $n \vect(\hat p^{[n]})$ is a multinomial random vector with parameters $n$ and $\vect(p)$, where $\hat p^{[n]}$ is defined in~\eqref{eq:hat:pn} and $\vect$ is the operator defined as in~\eqref{eq:vect:S} with $S = I_{r,s}$) and that there exists a $\delta_0 \in \Delta$ such that $p = p^{[\delta_0]}$. From~\eqref{eq:mde}, we see that the maximum pseudo-likelihood estimator in~\eqref{eq:theta:n} bears a strong resemblance with the estimator $\delta^{[n]} = \arginf_{\delta \in \Delta} D(p^{[n]} \| p^{[\delta]})$. The latter belongs to the classes of minimum divergence estimators of $\delta_0$ studied for instance in \cite{ReaCre88}, \cite{MorParVaj95} or \cite{BasShiPar11}. Because the numbers in $n u^{[n]}$ are not observed counts from $u^{[\theta_0]}$, the consistency and the asymptotic normality of~\eqref{eq:theta:n} cannot unfortunately be directly deduced from the asymptotic results stated in the aforementioned references. \qed
\end{remark}

\begin{remark}[Connection to maximum likelihood estimators]
  \label{rem:MLE}
Let $\Mc_1 = \{p^{[1,\alpha]} : \alpha \in A\}$ (resp.\ $\Mc_2 = \{p^{[2,\beta]} : \beta \in B\}$) be a univariate parametric family of p.m.f.s on $I_r$ (resp.\ $I_s$), where $A$ (resp.\ $B$) is an open subset of $\R^{m_1}$ (resp.\ $\R^{m_2}$) for some strictly positive integer $m_1$ (resp.\ $m_2$). The families $\Mc_1$ and $\Mc_2$ are further naturally assumed to satisfy the following: for any $(\alpha,\beta) \in A \times B$, $p^{[1,\alpha]}_i > 0$ for all $i \in I_r$ and $p^{[2,\beta]}_j > 0$ for all $j \in I_s$. Having~\eqref{eq:sklar:like} in mind, one can combine the previous marginal parametric assumptions with~\eqref{eq:Jc} to form a parametric model for $p$ as
\begin{equation}
  \label{eq:model}
  \Pc = \{p^{[\alpha,\beta,\theta]} = \Ic_{p^{[1,\alpha]}, p^{[2,\beta]}}(u^{[\theta]}) : (\alpha,\beta,\theta) \in A \times B \times \Theta \}.
\end{equation}
Under the assumption that there exists $(\alpha_0,\beta_0,\theta_0) \in (A,B,\Theta)$ such that $p = p^{[\alpha_0,\beta_0,\theta_0]}$, the estimation of $\theta_0$ is then a by-product of the estimation of the entire parameter vector $(\alpha_0,\beta_0,\theta_0)$. When $(X_1,Y_1),\dots,(X_n,Y_n)$ is a random sample from $p$, the latter could be obtained by maximizing the log-likelihood of the model in~\eqref{eq:model}, which can be expressed as
$$
L^{[n]}(\alpha,\beta,\theta) = n \sum_{(i,j) \in I_{r,s}} \hat p^{[n]}_{ij} \log p^{[\alpha,\beta,\theta]}_{ij} = n \sum_{(i,j) \in I_{r,s}} \hat p^{[n]}_{ij} \log \Ic_{p^{[1,\alpha]}, p^{[2,\beta]}}(u^{[\theta]})_{ij},
$$
where $\hat p_n$ is defined in~\eqref{eq:hat:pn}. We can expect that, in practice, the previous optimization would be computationally costly because it would typically require many executions of the IPFP. Furthermore, as when indirectly estimating the parameter vector of a classical parametric copula by maximum likelihood estimation \citep[see, e.g.,][Chapter~4 and the references therein]{HofKojMaeYan18}, the resulting estimate of $\theta_0$ would be affected by potential misspecification of the univariate families $\Mc_1$ and $\Mc_2$. A less computationally expensive ``two-stage'' approach would consist of first estimating $\alpha_0$ (resp.\ $\beta_0$) by $\alpha^{[n]}$ (resp.\ $\beta^{[n]}$) from the first (resp.\ second) component sample $X_1,\dots,X_n$ (resp.\ $Y_1,\dots,Y_n$) and then maximizing the following log-pseudo-likelihood:
\begin{align*}
  \tilde L^{[n]}(\theta) &= n \sum_{(i,j) \in I_{r,s}} \hat p^{[n]}_{ij} \log p^{[\alpha^{[n]},\beta^{[n]},\theta]}_{ij} \\
                         &= n \sum_{(i,j) \in I_{r,s}} \hat p^{[n]}_{ij} \log  \Ic_{p^{[1,\alpha^{[n]}]}, p^{[2,\beta^{[n]}]}}(u^{[\theta]})_{ij}.
\end{align*}
From a computational perspective, we can expect that the maximization of $\tilde L^{[n]}$ would be substantially more costly than the one of $\bar L^{[n]}$ in~\eqref{eq:PL} because the former would typically require many executions of the IPFP. Furthermore, as previously, the resulting estimate of $\theta_0$ would be affected by potential misspecification of the univariate families $\Mc_1$ and $\Mc_2$. This provides additional arguments in favor of the maximization of $\bar L^{[n]}$ in~\eqref{eq:PL} which, in spirit, is the analog of the log-pseudo-likelihood of \cite{GenGhoRiv95} in the current discrete context. \qed
\end{remark}

The following result, proven in Appendix~\ref{proof:prop:MPL} using Theorem~5.7 of \cite{van98}, provides conditions under which the estimator $\theta^{[n]}$ in~\eqref{eq:theta:n} is consistent.

\begin{prop}[Consistency of the maximum pseudo-likelihood estimator]
  \label{prop:MPL:consistency}
  Assume that the family $\Jc$ is identifiable and that there exists $\lambda \in (0,1)$ such that, for any $\theta \in \Theta$ and $(i,j) \in I_{r,s}$, $u^{[\theta]}_{ij}  \geq \lambda$. Then, if $\hat p^{[n]} \p p$ in $\R^{r \times s}$, where $\hat p^{[n]}$ is defined in~\eqref{eq:hat:pn}, $\theta^{[n]} \p \theta_0$ in $\R^m$.
\end{prop}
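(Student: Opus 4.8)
The plan is to recognize $\theta^{[n]}$ as an M-estimator and to invoke Theorem~5.7 of \cite{van98}. Writing $M_n(\theta) = n^{-1} \bar L^{[n]}(\theta) = \sum_{(i,j) \in I_{r,s}} u^{[n]}_{ij} \log u^{[\theta]}_{ij}$ for the normalized criterion and
\[
M(\theta) = \sum_{(i,j) \in I_{r,s}} u_{ij} \log u^{[\theta]}_{ij}, \qquad \theta \in \Theta,
\]
for its population counterpart (recall $u = u^{[\theta_0]}$), that theorem yields $\theta^{[n]} \p \theta_0$ as soon as two ingredients are in place: (i) the uniform convergence $\sup_{\theta \in \Theta} |M_n(\theta) - M(\theta)| \p 0$; and (ii) the well-separation of the maximum, i.e.\ $\sup_{\theta : \|\theta - \theta_0\| \geq \eps} M(\theta) < M(\theta_0)$ for every $\eps > 0$. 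Since $\theta^{[n]}$ maximizes $M_n$ by definition, the near-maximizer requirement $M_n(\theta^{[n]}) \geq M_n(\theta_0) - o_P(1)$ holds trivially.

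The uniform convergence in~(i) is precisely where the lower bound $\lambda$ does the work. For every $\theta \in \Theta$ and $(i,j) \in I_{r,s}$ we have $\lambda \leq u^{[\theta]}_{ij} \leq 1$, whence $|\log u^{[\theta]}_{ij}| \leq |\log \lambda|$. Therefore
\[
\sup_{\theta \in \Theta} |M_n(\theta) - M(\theta)| = \sup_{\theta \in \Theta} \Big| \sum_{(i,j) \in I_{r,s}} (u^{[n]}_{ij} - u_{ij}) \log u^{[\theta]}_{ij} \Big| \leq |\log \lambda| \, \| u^{[n]} - u \|_{1,1}.
\]
By Proposition~\ref{prop:consistency:un}, the assumption $\hat p^{[n]} \p p$ gives $u^{[n]} \p u$, so the right-hand side tends to $0$ in probability by the continuous mapping theorem. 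Crucially, this bound is uniform over the whole (possibly non-compact) parameter space.

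For~(ii), I would first identify $\theta_0$ as the unique maximizer of $M$. Because $u$ and every $u^{[\theta]}$ have full support $I_{r,s}$,
\[
M(\theta_0) - M(\theta) = \sum_{(i,j) \in I_{r,s}} u_{ij} \log \frac{u_{ij}}{u^{[\theta]}_{ij}} = D(u \| u^{[\theta]}) \geq 0,
\]
with $D$ the I-divergence of~\eqref{eq:KL:divergence}, and equality holds if and only if $u^{[\theta]} = u = u^{[\theta_0]}$, that is, if and only if $\theta = \theta_0$ by identifiability. Hence $M$ has a unique global maximum at $\theta_0$.

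The main obstacle is to upgrade this uniqueness to the well-separation in~(ii) on a parameter space that is merely open in $\R^m$, hence not compact: a priori there could be a sequence $\theta_k$ with $\|\theta_k - \theta_0\| \geq \eps$ along which $D(u \| u^{[\theta_k]}) \to 0$. To exclude this I would exploit that each $u^{[\theta]}$ lies in the compact set $\Gamma_{\text{unif}} \cap \{ v : v_{ij} \geq \lambda \text{ for all } (i,j) \in I_{r,s} \}$. Along such a sequence, compactness yields a subsequential limit $v^*$ in this set; since all entries stay bounded below by $\lambda$, the map $v \mapsto D(u \| v)$ is continuous there, so $D(u \| v^*) = 0$ and thus $v^* = u = u^{[\theta_0]}$, i.e.\ $u^{[\theta_k]} \to u^{[\theta_0]}$ along the subsequence. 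Using the standing regularity of the family $\Jc$ --- continuity of $\theta \mapsto u^{[\theta]}$ together with continuity of its inverse on the image, equivalently that this injective map is a homeomorphism onto $\Jc$ (a properness condition) --- this forces $\theta_k \to \theta_0$, contradicting $\|\theta_k - \theta_0\| \geq \eps$. One concludes $\inf_{\|\theta - \theta_0\| \geq \eps} D(u \| u^{[\theta]}) > 0$, which is exactly~(ii). With~(i) and~(ii) established, Theorem~5.7 of \cite{van98} delivers $\theta^{[n]} \p \theta_0$.
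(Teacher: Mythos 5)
Your proof follows essentially the same route as the paper's: both cast $\theta^{[n]}$ as an M-estimator, invoke Theorem~5.7 of \cite{van98}, and obtain the uniform convergence $\sup_{\theta\in\Theta}|M_n(\theta)-M(\theta)|\p 0$ from the bound $|\log u^{[\theta]}_{ij}|\le|\log\lambda|$ combined with Proposition~\ref{prop:consistency:un} and the continuous mapping theorem. The only point of divergence is the well-separation condition. The paper identifies $\theta_0$ as the unique maximizer of $M$ via Lemma~5.35 of \cite{van98} (the Kullback--Leibler argument you write out explicitly) and then simply asserts that uniqueness implies $\sup_{\theta:\|\theta-\theta_0\|_2>\eps}M(\theta)<M(\theta_0)$; on a merely open, possibly unbounded $\Theta$ this implication is not automatic, exactly as you observe. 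Your patch --- extracting a subsequential limit in the compact set $\{v\in\Gamma_{\text{unif}}: v_{ij}\ge\lambda \text{ for all } (i,j)\}$, concluding $u^{[\theta_k]}\to u^{[\theta_0]}$, and pulling this back to $\theta_k\to\theta_0$ --- is sound in itself, but it requires $\theta\mapsto u^{[\theta]}$ to be a homeomorphism onto its image (a properness condition), which is not among the proposition's stated hypotheses; identifiability gives injectivity only, not continuity of the inverse. So your argument is, if anything, more careful than the paper's: it makes explicit an assumption that the paper's proof uses implicitly when passing from uniqueness of the maximizer to its well-separation. Apart from this, the analytic core of the two proofs is identical.
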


\begin{remark}
  The requirement that there exists $\lambda \in (0,1)$ such that, for any $\theta \in \Theta$ and $(i,j) \in I_{r,s}$, $u^{[\theta]}_{ij}  \geq \lambda$ might appear too restrictive. For instance, some thought reveals that it will not be satisfied by families $\Jc$ in~\eqref{eq:Jc} constructed via~\eqref{eq:u:theta} when $C_\theta$ is a Gumbel--Hougaard copula \citep[see, e.g.,][Chapter~2 and the references therein]{HofKojMaeYan18} if the parameter space $\Theta$ is taken equal to $(1,\infty)$. It will however be satisfied if the parameter space is restricted to $(1,M)$, for any fixed large real $M$. \qed
\end{remark}

To state conditions under which the estimator $\theta^{[n]}$ in~\eqref{eq:theta:n} is asymptotically normal, we need to introduce additional notation. For any $\theta \in \Theta$ and $(i,j) \in I_{r,s}$, let $\ell^{[\theta]}_{ij} = \log u^{[\theta]}_{ij}$, let
\begin{equation}
  \label{eq:dot:u:ij}
  \dot u^{[\theta]}_{ij, k} = \frac{\partial u^{[\theta]}_{ij}}{\partial \theta_k}, \,k \in I_m = \{1,\dots,m\},  \qquad \qquad \dot u^{[\theta]}_{ij} = \left(\dot u^{[\theta]}_{ij, 1}, \dots, \dot u^{[\theta]}_{ij, m} \right),
\end{equation}
and let
\begin{equation}
  \label{eq:ddot:u:ij}
  \ddot u^{[\theta]}_{ij, kl} = \frac{\partial^2 u^{[\theta]}_{ij}}{\partial \theta_k \partial \theta_l}, \,k,l \in I_m,  \qquad \qquad \ddot u^{[\theta]}_{ij} =
  \begin{bmatrix}
    \ddot u^{[\theta]}_{ij, 11} & \dots & \ddot u^{[\theta]}_{ij, 1m} \\
    \vdots & & \vdots \\
    \ddot u^{[\theta]}_{ij, m1} & \dots & \ddot u^{[\theta]}_{ij, mm}
    \end{bmatrix}.
\end{equation}
Similarly, for any $\theta \in \Theta$ and $(i,j) \in I_{r,s}$, let
\begin{equation}
  \label{eq:dot:l:ij}
  \dot \ell^{[\theta]}_{ij,k} = \frac{\partial \log u^{[\theta]}_{ij}}{\partial \theta_k} = \frac{\dot u^{[\theta]}_{ij, k}}{u^{[\theta]}_{ij}}, \, k \in I_m, \quad \qquad \dot \ell^{[\theta]}_{ij} = \left(\dot \ell^{[\theta]}_{ij,1},\dots, \dot \ell^{[\theta]}_{ij,m} \right) = \frac{\dot u^{[\theta]}_{ij}}{u^{[\theta]}_{ij}},
\end{equation}
let
\begin{equation}
  \label{eq:ddot:l:ij}
  \ddot \ell^{[\theta]}_{ij, kl} = \frac{\partial^2 \ell^{[\theta]}_{ij}}{\partial \theta_k \partial \theta_l} = \frac{\ddot u^{[\theta]}_{ij, kl}}{u^{[\theta]}_{ij}} - \frac{\dot u^{[\theta]}_{ij, k} \dot u^{[\theta]}_{ij, l}}{(u^{[\theta]}_{ij})^2}, \qquad k,l \in I_m,
\end{equation}
and let
\begin{equation}
  \label{eq:ddot:l:ij:matrix}
  \qquad \qquad \ddot \ell^{[\theta]}_{ij} =
  \begin{bmatrix}
    \ddot \ell^{[\theta]}_{ij, 11} & \dots & \ddot \ell^{[\theta]}_{ij, 1m} \\
    \vdots & & \vdots \\
    \ddot \ell^{[\theta]}_{ij, m1} & \dots & \ddot \ell^{[\theta]}_{ij, mm}
  \end{bmatrix}
  =  \frac{\ddot u^{[\theta]}_{ij}}{u^{[\theta]}_{ij}} - \frac{\dot u^{[\theta]}_{ij} \dot u^{[\theta],\top}_{ij}}{(u^{[\theta]}_{ij})^2},
\end{equation}
where $\dot u^{[\theta]}_{ij}$ and $\ddot u^{[\theta]}_{ij}$ are defined in~\eqref{eq:dot:u:ij} and~\eqref{eq:ddot:u:ij}, respectively. Using the fact that $\sum_{(i,j) \in I_{r,s}} u^{[\theta]}_{ij} = 1$ implies that $\sum_{(i,j) \in I_{r,s}} \dot u^{[\theta]}_{ij,k} = 0$ and $\sum_{(i,j) \in I_{r,s}} \ddot u^{[\theta]}_{ij,kl} = 0$ for all $k,l \in I_m$, we obtain from~\eqref{eq:dot:l:ij} and~\eqref{eq:ddot:l:ij}  that
$$
\sum_{(i,j) \in I_{r,s}} u^{[\theta]}_{ij} \dot \ell^{[\theta]}_{ij,k} = 0 \text{ for all } k \in I_m,
$$
and that
$$
\sum_{(i,j) \in I_{r,s}} u^{[\theta]}_{ij} \ddot \ell^{[\theta]}_{ij,kl} = - \sum_{(i,j) \in I_{r,s}} u^{[\theta]}_{ij} \dot \ell^{[\theta]}_{ij,k} \dot \ell^{[\theta]}_{ij,l} \text{ for all } k,l \in I_m.
$$
Using the notation defined in~\eqref{eq:dot:l:ij} and~\eqref{eq:ddot:l:ij:matrix}, the previous two centered displays can be rewritten as the vector identity $\Ex_\theta(\dot \ell^{[\theta]}_{(U,V)}) = 0$ and the matrix identity
\begin{equation}
  \label{eq:Fisher}
  \Ex_\theta(\ddot \ell^{[\theta]}_{(U,V)}) = - \Ex_\theta (\dot \ell^{[\theta]}_{(U,V)}\dot \ell^{[\theta],\top}_{(U,V)}),
\end{equation}
respectively, where $(U,V)$ has p.m.f.\ $u^{[\theta]}$ and $\Ex_\theta$ denotes the expectation with respect to $u^{[\theta]}$. In other words, in the discrete setting under consideration, unsurprisingly, we recover the classical identities that occur under regularity conditions in the context of classical maximum likelihood estimation \citep[see, e.g.,][Section 5.5, p 63]{van98}.

The following result is proven in Appendix~\ref{proof:prop:MPL}  along the lines of the proof of Theorem~5.21 in \cite{van98}.

\begin{prop}[Asymptotic normality of the maximum pseudo-likelihood estimator]
  \label{prop:MPL}
  Assume that $\theta^{[n]} \p \theta_0$ in $\R^m$ and, furthermore, that, for any $(i,j) \in I_{r,s}$, $\theta \mapsto \ell^{[\theta]}_{ij}$ is twice differentiable at any $\theta \in \Theta$ and that the matrix $\Ex_{\theta_0}(\ddot \ell^{[\theta_0]}_{(U,V)})$, where $(U,V)$ has p.m.f.\ $u^{[\theta_0]}$, is invertible. Then, if $\sqrt{n} (\hat p^{[n]} - p)$ converges weakly in $\R^{r \times s}$, we have that
$$
\sqrt{n} (\theta^{[n]} - \theta_0) = \{\Ex_{\theta_0} (\dot \ell^{[\theta_0]}_{(U,V)}\dot \ell^{[\theta_0],\top}_{(U,V)}) \}^{-1}  \dot \ell^{[\theta_0]}  J_{u,p} \sqrt{n} \, \vect(\hat p^{[n]} - p)  + o_P(1),
$$
where $\dot \ell^{[\theta_0]}$ is the $m \times rs$ matrix whose column $i + r(j-1)$, $(i,j) \in I_{r,s}$, $\dot \ell^{[\theta_0]}_{ij}$ is defined as in~\eqref{eq:dot:l:ij} with $\theta = \theta_0$ and $J_{u,p}$ is defined in~\eqref{eq:J:u:p}. Consequently, when $(X_1,Y_1),\dots,(X_n,Y_n)$ are independent copies of $(X,Y)$, the sequence $\sqrt{n} (\theta^{[n]} - \theta_0)$ is asymptotically centered normal with covariance matrix
$$
\{\Ex_{\theta_0} (\dot \ell^{[\theta_0]}_{(U,V)}\dot \ell^{[\theta_0],\top}_{(U,V)}) \}^{-1} \dot \ell^{[\theta_0]}  J_{u,p} \Sigma_p J_{u,p}^\top \dot \ell^{[\theta_0],\top} [\{ \Ex_{\theta_0} (\dot \ell^{[\theta_0]}_{(U,V)}\dot \ell^{[\theta_0],\top}_{(U,V)}) \}^{-1}]^\top,
$$
where $\Sigma_p = \diag(\vect(p))  - \vect(p) \vect(p)^\top$.
\end{prop}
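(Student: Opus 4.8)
The plan is to run the classical asymptotic-normality argument for a smooth $Z$-estimator, adapting it to the fact that the ``observations'' driving the estimating equation are the entries of $u^{[n]}$ rather than genuine counts drawn from $u$. Since $\theta^{[n]} \p \theta_0$ and $\Theta$ is open, with probability tending to one $\theta^{[n]}$ is an interior maximizer of $\bar L^{[n]}$ in~\eqref{eq:PL}, hence a zero of the score
$$
\Psi^{[n]}(\theta) = \nabla_\theta \tfrac{1}{n} \bar L^{[n]}(\theta) = \sum_{(i,j) \in I_{r,s}} u^{[n]}_{ij} \, \dot \ell^{[\theta]}_{ij},
$$
with $\dot \ell^{[\theta]}_{ij}$ as in~\eqref{eq:dot:l:ij}. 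Using the assumed twice differentiability of $\theta \mapsto \ell^{[\theta]}_{ij}$ at $\theta_0$, I would expand $\dot \ell^{[\theta^{[n]}]}_{ij} = \dot \ell^{[\theta_0]}_{ij} + \ddot \ell^{[\theta_0]}_{ij}(\theta^{[n]} - \theta_0) + \|\theta^{[n]} - \theta_0\|\, r^{[n]}_{ij}$, where, with the convention $r_{ij}(\theta_0) = 0$ and since $\theta^{[n]} \p \theta_0$, each remainder satisfies $r^{[n]}_{ij} = o_P(1)$. Multiplying by $u^{[n]}_{ij}$, summing, and using $\Psi^{[n]}(\theta^{[n]}) = 0$ gives the estimating identity
$$
0 = \Psi^{[n]}(\theta_0) + \hat V^{[n]}(\theta^{[n]} - \theta_0) + \|\theta^{[n]} - \theta_0\|\, o_P(1), \qquad \hat V^{[n]} = \sum_{(i,j)} u^{[n]}_{ij} \, \ddot \ell^{[\theta_0]}_{ij},
$$
where $\sum_{(i,j)} u^{[n]}_{ij} = 1$ keeps the remainder at order $\|\theta^{[n]} - \theta_0\|\, o_P(1)$.

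The two ingredients are then the score at the truth and the Hessian. For the score, the crucial observation is that, because $u = u^{[\theta_0]}$, the score identity $\Ex_{\theta_0}(\dot \ell^{[\theta_0]}_{(U,V)}) = 0$ established just before~\eqref{eq:Fisher} yields $\sum_{(i,j)} u_{ij}\, \dot \ell^{[\theta_0]}_{ij} = 0$. This lets me center the score and write $\sqrt{n}\, \Psi^{[n]}(\theta_0) = \dot \ell^{[\theta_0]} \sqrt{n}\, \vect(u^{[n]} - u)$, with $\dot \ell^{[\theta_0]}$ the $m \times rs$ matrix of the statement. Substituting the linear representation of $\sqrt{n}(u^{[n]} - u)$ from Proposition~\ref{prop:asym:un} then gives $\sqrt{n}\, \Psi^{[n]}(\theta_0) = \dot \ell^{[\theta_0]} J_{u,p} \sqrt{n}\, \vect(\hat p^{[n]} - p) + o_P(1)$. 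This is exactly where the argument departs from a textbook maximum likelihood proof: the asymptotic covariance of the score is the sandwich $\dot \ell^{[\theta_0]} J_{u,p} \Sigma_p J_{u,p}^\top \dot \ell^{[\theta_0],\top}$ and \emph{not} the Fisher information, reflecting that $n u^{[n]}$ is an IPFP transform of $p^{[n]}$ rather than a contingency table drawn from $u$, precisely as anticipated in the discussion around~\eqref{eq:Fisher}.

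For the Hessian, I would use the consistency $u^{[n]} \p u$ of Proposition~\ref{prop:consistency:un} together with the fact that the matrices $\ddot \ell^{[\theta_0]}_{ij}$ in~\eqref{eq:ddot:l:ij:matrix} are fixed (evaluated at $\theta_0$), so that $\hat V^{[n]} \p \sum_{(i,j)} u_{ij}\, \ddot \ell^{[\theta_0]}_{ij} = \Ex_{\theta_0}(\ddot \ell^{[\theta_0]}_{(U,V)})$; expanding at the fixed point $\theta_0$ conveniently avoids any need for continuity of $\theta \mapsto \ddot \ell^{[\theta]}_{ij}$. By the information identity~\eqref{eq:Fisher} this limit equals $-\Ex_{\theta_0}(\dot \ell^{[\theta_0]}_{(U,V)} \dot \ell^{[\theta_0],\top}_{(U,V)})$, invertible by assumption. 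A standard self-bounding argument on the estimating identity first upgrades consistency to $\|\theta^{[n]} - \theta_0\| = O_P(n^{-1/2})$, which renders $\|\theta^{[n]} - \theta_0\|\, o_P(1)$ negligible at the $\sqrt{n}$ scale; solving and applying Slutsky's lemma then gives
$$
\sqrt{n}(\theta^{[n]} - \theta_0) = \{ \Ex_{\theta_0}(\dot \ell^{[\theta_0]}_{(U,V)} \dot \ell^{[\theta_0],\top}_{(U,V)}) \}^{-1} \dot \ell^{[\theta_0]} J_{u,p} \sqrt{n}\, \vect(\hat p^{[n]} - p) + o_P(1),
$$
and the final centered-normal conclusion with the stated sandwich covariance follows from $\sqrt{n}\, \vect(\hat p^{[n]} - p) \leadsto \vect(Z_p) \sim N(0, \Sigma_p)$ in the i.i.d.\ case and the linearity of the representation.

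I expect the main obstacle to be the control of the Taylor remainder rather than any single computation: since only twice differentiability \emph{at} $\theta_0$ is assumed, the remainder enters as $\|\theta^{[n]} - \theta_0\|\, o_P(1)$, coupling it to the a priori unknown rate of $\theta^{[n]} - \theta_0$, and disentangling this requires the self-bounding argument that simultaneously establishes $\sqrt{n}$-consistency and the asymptotic linearity, in the spirit of the proof of Theorem~5.21 in \citet{van98}. Everything else is bookkeeping, the one genuinely new input being the substitution of Proposition~\ref{prop:asym:un} for the score, which replaces the usual inverse-Fisher-information covariance by the IPFP-induced sandwich form.
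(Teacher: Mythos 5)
Your proposal is correct and follows essentially the same route as the paper's proof, which is likewise modeled on Theorem~5.21 of \cite{van98}: both linearize the pseudo-score around $\theta_0$, exploit the identities $\Ex_{\theta_0}(\dot \ell^{[\theta_0]}_{(U,V)})=0$ and~\eqref{eq:Fisher}, and substitute the asymptotic representation of $\sqrt{n}(u^{[n]}-u)$ from Proposition~\ref{prop:asym:un} in place of a genuine multinomial central limit theorem. The only organizational difference is that you Taylor-expand the empirical score $\Psi^{[n]}$ directly (obtaining a $u^{[n]}$-weighted Hessian that converges by Proposition~\ref{prop:consistency:un}), whereas the paper expands the population score $\Psi$ at $\theta_0$ and separately controls the cross term $\Delta^{[n]}$ via Cauchy--Schwarz and a mean-value bound; both variants hinge on the same self-bounding step that upgrades consistency to $\sqrt{n}$-consistency, which you in fact make more explicit than the paper does.
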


\begin{remark}
The conditions on the hypothesized family $\Jc$ in~\eqref{eq:Jc} in the previous proposition are inspired by some of the weakest ones in the literature \citep[see, e.g.,][Chapter~5]{van98}. As such, we expect them to hold for many families $\Jc$. \qed
\end{remark}

\subsection{Monte Carlo experiments}
\label{sec:MC:est}

\begin{table}[t!]
\centering
\caption{For $(r,s) \in \{(3,3), (3,10), (10,10)\}$, bias and mean squared error (MSE) of the three method-of-moment estimators in~\eqref{eq:MoM:estimators} and the maximum pseudo-likelihood (PL) estimator in~\eqref{eq:theta:n} estimated from 1000 random samples of size $n \in \{100,500,1000\}$ generated, as explained in Section~\ref{sec:MC:est}, from p.m.f.s whose copula p.m.f.\ is of the form~\eqref{eq:u:theta} with $C_\theta$ the Clayton copula with a Kendall's tau of $0.33$. The column `m' gives the marginal scenario. The column `$\Uc$' reports the number of times the IPFP did not numerically converge in 1000 steps. The column `ni' report the number of numerical issues related to fitting.} 
\label{tab:fit:1:0.33}
\begingroup\scriptsize
\begin{tabular}{rrrrrrrrrrrrr}
  \hline
  \multicolumn{5}{c}{} & \multicolumn{4}{c}{Bias} & \multicolumn{4}{c}{MSE} \\ \cmidrule(lr){6-9} \cmidrule(lr){10-13} $(r,s)$ & m & n & $\Uc$ & ni & $\rho$ & $\gamma$ & $\tau$ & PL & $\rho$ & $\gamma$ & $\tau$ & PL \\ \hline
(3,3) & 1 & 100 & 0 & 0 & 0.06 & 0.05 & 0.06 & 0.06 & 0.13 & 0.13 & 0.13 & 0.12 \\ 
   &  & 500 & 0 & 0 & 0.01 & 0.01 & 0.01 & 0.01 & 0.02 & 0.02 & 0.02 & 0.02 \\ 
   &  & 1000 & 0 & 0 & 0.01 & 0.00 & 0.01 & 0.01 & 0.01 & 0.01 & 0.01 & 0.01 \\ 
   \\ & 2 & 100 & 0 & 0 & 0.06 & 0.04 & 0.05 & 0.06 & 0.16 & 0.15 & 0.16 & 0.16 \\ 
   &  & 500 & 0 & 0 & 0.01 & 0.00 & 0.01 & 0.01 & 0.03 & 0.03 & 0.03 & 0.03 \\ 
   &  & 1000 & 0 & 0 & 0.00 & 0.00 & 0.00 & 0.00 & 0.01 & 0.01 & 0.01 & 0.01 \\ 
   \\ & 3 & 100 & 0 & 0 & 0.08 & 0.06 & 0.07 & 0.07 & 0.20 & 0.19 & 0.19 & 0.18 \\ 
   &  & 500 & 0 & 0 & 0.01 & 0.01 & 0.01 & 0.01 & 0.03 & 0.03 & 0.03 & 0.03 \\ 
   &  & 1000 & 0 & 0 & 0.01 & 0.01 & 0.01 & 0.01 & 0.02 & 0.02 & 0.02 & 0.02 \\ 
   \\(3,10) & 1 & 100 & 0 & 0 & 0.05 & 0.03 & 0.05 & 0.05 & 0.11 & 0.11 & 0.11 & 0.10 \\ 
   &  & 500 & 0 & 0 & 0.01 & 0.00 & 0.01 & 0.01 & 0.02 & 0.02 & 0.02 & 0.02 \\ 
   &  & 1000 & 0 & 0 & 0.01 & 0.00 & 0.01 & 0.01 & 0.01 & 0.01 & 0.01 & 0.01 \\ 
   \\ & 2 & 100 & 0 & 0 & -0.07 & -0.10 & -0.08 & -0.09 & 0.17 & 0.17 & 0.17 & 0.22 \\ 
   &  & 500 & 0 & 0 & -0.01 & -0.01 & -0.01 & -0.00 & 0.03 & 0.03 & 0.03 & 0.04 \\ 
   &  & 1000 & 0 & 0 & 0.00 & -0.00 & 0.00 & 0.00 & 0.01 & 0.01 & 0.01 & 0.02 \\ 
   \\ & 3 & 100 & 0 & 0 & -0.58 & -0.58 & -0.57 & -0.64 & 0.45 & 0.45 & 0.45 & 0.51 \\ 
   &  & 500 & 0 & 0 & -0.26 & -0.27 & -0.26 & -0.29 & 0.24 & 0.24 & 0.24 & 0.29 \\ 
   &  & 1000 & 0 & 0 & -0.10 & -0.12 & -0.11 & -0.11 & 0.15 & 0.16 & 0.16 & 0.17 \\ 
   \\(10,10) & 1 & 100 & 0 & 0 & 0.04 & 0.01 & 0.03 & 0.03 & 0.09 & 0.08 & 0.09 & 0.07 \\ 
   &  & 500 & 0 & 0 & 0.01 & 0.00 & 0.01 & 0.01 & 0.01 & 0.01 & 0.01 & 0.01 \\ 
   &  & 1000 & 0 & 0 & 0.00 & 0.00 & 0.00 & 0.00 & 0.01 & 0.01 & 0.01 & 0.01 \\ 
   \\ & 2 & 100 & 0 & 1 & -0.12 & -0.14 & -0.12 & -0.13 & 0.14 & 0.14 & 0.14 & 0.19 \\ 
   &  & 500 & 0 & 0 & -0.02 & -0.02 & -0.02 & -0.02 & 0.03 & 0.03 & 0.03 & 0.04 \\ 
   &  & 1000 & 0 & 0 & -0.00 & -0.01 & -0.00 & -0.00 & 0.01 & 0.01 & 0.01 & 0.01 \\ 
   \\ & 3 & 100 & 0 & 1 & -0.86 & -0.83 & -0.83 & -0.82 & 0.76 & 0.72 & 0.71 & 0.68 \\ 
   &  & 500 & 0 & 0 & -0.68 & -0.65 & -0.64 & -0.66 & 0.51 & 0.48 & 0.47 & 0.50 \\ 
   &  & 1000 & 0 & 0 & -0.56 & -0.54 & -0.53 & -0.58 & 0.39 & 0.37 & 0.37 & 0.43 \\ 
   \hline
\end{tabular}
\endgroup
\end{table}

\begin{table}[t!]
\centering
\caption{For $(r,s) \in \{(3,3), (3,10), (10,10)\}$, bias and mean squared error (MSE) of the three method-of-moment estimators in~\eqref{eq:MoM:estimators} and the maximum pseudo-likelihood (PL) estimator in~\eqref{eq:theta:n} estimated from 1000 random samples of size $n \in \{100,500,1000\}$ generated, as explained in Section~\ref{sec:MC:est}, from p.m.f.s whose copula p.m.f.\ is of the form~\eqref{eq:u:theta} with $C_\theta$ the Clayton copula with a Kendall's tau of $0.66$. The column `m' gives the marginal scenario. The column `$\Uc$' reports the number of times the IPFP did not numerically converge in 1000 steps. The column `ni' report the number of numerical issues related to fitting.} 
\label{tab:fit:1:0.66}
\begingroup\scriptsize
\begin{tabular}{rrrrrrrrrrrrr}
  \hline
  \multicolumn{5}{c}{} & \multicolumn{4}{c}{Bias} & \multicolumn{4}{c}{MSE} \\ \cmidrule(lr){6-9} \cmidrule(lr){10-13} $(r,s)$ & m & n & $\Uc$ & ni & $\rho$ & $\gamma$ & $\tau$ & PL & $\rho$ & $\gamma$ & $\tau$ & PL \\ \hline
(3,3) & 1 & 100 & 0 & 0 & 0.12 & 0.04 & 0.12 & 0.12 & 0.85 & 0.77 & 0.88 & 0.81 \\ 
   &  & 500 & 0 & 0 & 0.02 & 0.01 & 0.02 & 0.02 & 0.13 & 0.14 & 0.13 & 0.13 \\ 
   &  & 1000 & 0 & 0 & 0.02 & 0.01 & 0.02 & 0.02 & 0.06 & 0.07 & 0.06 & 0.06 \\ 
   \\ & 2 & 100 & 0 & 0 & 0.12 & -0.00 & 0.13 & 0.12 & 0.93 & 0.73 & 0.96 & 0.90 \\ 
   &  & 500 & 0 & 0 & 0.02 & -0.00 & 0.02 & 0.02 & 0.14 & 0.14 & 0.14 & 0.15 \\ 
   &  & 1000 & 0 & 0 & 0.02 & 0.01 & 0.02 & 0.02 & 0.07 & 0.07 & 0.07 & 0.07 \\ 
   \\ & 3 & 100 & 0 & 0 & 0.07 & -0.00 & 0.07 & 0.07 & 0.80 & 0.77 & 0.82 & 0.76 \\ 
   &  & 500 & 0 & 0 & 0.02 & 0.01 & 0.02 & 0.02 & 0.15 & 0.16 & 0.14 & 0.15 \\ 
   &  & 1000 & 0 & 0 & 0.02 & 0.01 & 0.02 & 0.02 & 0.07 & 0.08 & 0.07 & 0.07 \\ 
   \\(3,10) & 1 & 100 & 0 & 0 & 0.03 & -0.04 & 0.04 & 0.03 & 0.58 & 0.58 & 0.61 & 0.54 \\ 
   &  & 500 & 0 & 0 & -0.00 & -0.02 & -0.00 & -0.00 & 0.10 & 0.10 & 0.10 & 0.09 \\ 
   &  & 1000 & 0 & 0 & 0.00 & -0.01 & -0.00 & 0.00 & 0.05 & 0.05 & 0.05 & 0.04 \\ 
   \\ & 2 & 100 & 0 & 0 & -0.49 & -0.52 & -0.44 & -0.64 & 0.94 & 0.96 & 0.92 & 1.28 \\ 
   &  & 500 & 0 & 0 & -0.04 & -0.05 & -0.03 & -0.05 & 0.11 & 0.11 & 0.11 & 0.12 \\ 
   &  & 1000 & 0 & 0 & -0.01 & -0.02 & -0.01 & -0.01 & 0.05 & 0.05 & 0.05 & 0.06 \\ 
   \\ & 3 & 100 & 0 & 0 & -2.63 & -2.59 & -2.55 & -2.83 & 7.27 & 7.09 & 6.89 & 8.41 \\ 
   &  & 500 & 0 & 0 & -1.44 & -1.40 & -1.36 & -1.59 & 3.06 & 3.00 & 2.87 & 3.62 \\ 
   &  & 1000 & 0 & 0 & -0.68 & -0.65 & -0.61 & -0.80 & 1.26 & 1.28 & 1.22 & 1.46 \\ 
   \\(10,10) & 1 & 100 & 19 & 0 & -0.10 & -0.16 & -0.04 & -0.04 & 0.40 & 0.39 & 0.39 & 0.31 \\ 
   &  & 500 & 0 & 0 & -0.01 & -0.03 & -0.01 & -0.00 & 0.07 & 0.07 & 0.07 & 0.05 \\ 
   &  & 1000 & 0 & 0 & -0.00 & -0.01 & -0.00 & 0.00 & 0.04 & 0.03 & 0.03 & 0.03 \\ 
   \\ & 2 & 100 & 6 & 0 & -0.50 & -0.53 & -0.41 & -0.52 & 0.70 & 0.71 & 0.64 & 0.98 \\ 
   &  & 500 & 28 & 0 & -0.04 & -0.05 & -0.03 & -0.02 & 0.07 & 0.07 & 0.07 & 0.08 \\ 
   &  & 1000 & 2 & 0 & -0.01 & -0.02 & -0.00 & 0.00 & 0.04 & 0.03 & 0.03 & 0.04 \\ 
   \\ & 3 & 100 & 0 & 0 & -3.30 & -3.11 & -3.08 & -3.23 & 11.01 & 9.83 & 9.70 & 10.71 \\ 
   &  & 500 & 0 & 0 & -2.37 & -2.10 & -2.05 & -2.09 & 6.27 & 5.11 & 4.95 & 5.30 \\ 
   &  & 1000 & 0 & 0 & -1.72 & -1.50 & -1.44 & -1.40 & 3.61 & 2.87 & 2.73 & 2.73 \\ 
   \hline
\end{tabular}
\endgroup
\end{table}

The asymptotic results stated in Corollary~\ref{cor:MoM} and Proposition~\ref{prop:MPL} do not provide any information on the finite-sample behavior of the three method-of-moments estimators in~\eqref{eq:MoM:estimators} and the maximum pseudo-likelihood estimator in~\eqref{eq:theta:n}. To compare these four estimators, we carried out Monte Carlo experiments in the situation when the parametric family $\Jc$ in~\eqref{eq:Jc} is constructed from a one-parameter family of classical copulas as in~\eqref{eq:u:theta}. For $(r,s) \in \{(3,3), (3,10), (10,10)\}$, the bias and the mean squared error (MSE) of the three method-of-moments estimators and the maximum pseudo-likelihood estimator were estimated from 1000 random samples of size $n \in \{100,500,1000\}$ generated from a p.m.f.\ with copula p.m.f.\ of the form~\eqref{eq:u:theta} with $C_\theta$ either the Clayton or the Gumbel--Hougaard copula with a Kendall's tau in $\{0.33, 0.66\}$, or the Frank copula with a Kendall's tau in $\{-0.5, 0, 0.5\}$ \citep[see, e.g.,][Chapter~2 for the definitions of these copula families and the definition of Kendall's tau]{HofKojMaeYan18}. For each of the above seven copula p.m.f.s, three marginal p.m.f.\ scenarios were considered:
\begin{enumerate}
\item $(1/r,\dots,1/r)$ (resp.\ $(1/s,\dots,1/s)$) as values of the first (resp.\ second) marginal p.m.f.,
\item $(1,\dots,r) / (r (r+1)/2)$ (resp.\ $(1,\dots,s) / (s (s+1)/2)$) as values of the first (resp.\ second) marginal p.m.f.,
\item the values of the p.m.f.\ of the binomial distribution with parameters $r-1$ and $1/2$ (resp.\ $s-1$ and $1/2$) as values of the first (resp.\ second) marginal p.m.f. 
\end{enumerate}
For the second and third marginal scenarios, the resulting p.m.f.s of the form~\eqref{eq:sklar:like} were computed using the IPFP. For each generated sample from one of the 21 data generating p.m.f.s, we first computed the nonparametric estimate $p^{[n]}$ in~\eqref{eq:pn:q}, then, using the IPFP, the corresponding empirical copula p.m.f.\ $u^{[n]}$ in~\eqref{eq:un} and, finally, the four estimates of $\theta_0$ using~\eqref{eq:MoM:estimators} and~\eqref{eq:theta:n}. All the computations were carried out using the \textsf{R} statistical environment \citep{Rsystem} and its packages \texttt{mipfp} \citep{mipfp} and \texttt{copula} \citep{copula}. In particular, the IPFP was computed using the function \texttt{Ipfp()} of the package \texttt{mipfp} with its default parameter values (at most 1000 iterations and $\eps$ in~\eqref{eq:IPFP:criterion} equal to $10^{-10}$), the inverses of the (one-to-one) functions $g_\rho$, $g_\gamma$ and $g_\tau$ in~\eqref{eq:g:func} were computed by numerical root finding using the \texttt{uniroot()} function while the maximization of the log-pseudo-likelihood in~\eqref{eq:PL} was carried out using the \texttt{optim()} function with $\theta^{[n]}_\gamma$ in~\eqref{eq:MoM:estimators} as starting value.

The results when the data generating p.m.f.\ is based on the copula p.m.f.~\eqref{eq:u:theta} with $C_\theta$ a Clayton copula are given in Tables~\ref{tab:fit:1:0.33} and~\ref{tab:fit:1:0.66}. Note that the column `$\Uc$' gives, for each data generating scenario, the number of times the IPFP did not numerically converge in 1000 steps in the sense of~\eqref{eq:IPFP:criterion}. The next column, `ni', reports the number of numerical issues related to fitting (either related to numerical root finding for the method-of-moments estimators or to numerical optimization for the maximum pseudo-likelihood). An examination of all the fitting simulation results revealed that, overall, the number of numerical issues was very small. The latter concern essentially the Clayton model with strongest dependence (see Table~\ref{tab:fit:1:0.66}) when $r = s = 10$, and could be explained by the higher probability of 0 counts in the bivariate contingency tables obtained from the generated samples in this case. In terms of estimation precision, as one can see, reassuringly, for each data generating scenario, the larger $n$, the smaller the absolute value of the bias and the MSE. Unsurprisingly, the worse results are obtained for the third marginal scenario since it is this scenario that leads to the largest number of very small probabilities for some of the cells of the data generating p.m.f. For instance, note the large negative biases when $r$ or $s$ are equal to $10$ as the smallest marginal probability is then approximately equal to 0.002 leading frequently to zero counts in several cells of the contingency tables obtained from the generated samples. Roughly speaking, in this case, the very non-uniform margins ``hide'' certain features of $u$ and make the dependence look substantially weaker on average than it actually is. From a numerical perspective, note also that, for such data generating scenarios, without the smoothing considered in~\eqref{eq:pn}, the IPFP would converge substantially less often in less than 1000 steps.

The results when the data generating p.m.f.\ is based on the copula p.m.f.~\eqref{eq:u:theta} with $C_\theta$ a Gumbel--Hougaard or a Frank copula are not qualitatively different and are not reported. In terms of MSE, the experiments did not reveal a uniformly better estimator.

\section{Goodness-of-fit testing}
\label{sec:GOF}

Recall the definition of the parametric family of copula p.m.f.s $\Jc$ in~\eqref{eq:Jc}. The three method-of-moments estimators in~\eqref{eq:MoM:estimators} and the maximum pseudo-likelihood estimator in~\eqref{eq:theta:n} were theoretically and empirically studied in Sections~\ref{sec:par:est} and~\ref{sec:MC:est} under the hypothesis
\begin{equation}
  \label{eq:H0}
  H_0 : u \in \Jc, \text{ that is, there exists } \theta_0 \in \Theta \text{ such that } u = u^{[\theta_0]}.
\end{equation}
Clearly, parameter estimates obtained for a given family $\Jc$ will be meaningful only if $H_0$ actually holds. It is the aim of goodness-of-fit testing to formally assess the latter hypothesis. To derive goodness-of-fit tests, we consider, as classically done in the literature, approaches based on comparing a nonparametric estimator of $u$ with a parametric one under $H_0$.

In the rest of this section, we first define a relevant class of chi-square statistics and provide their asymptotic null distributions. Next, we empirically study the finite-sample performance of the resulting asymptotic goodness-of-fit tests. Finally, we propose a semi-parametric bootstrap procedure that provides an alternative way of computing p-values.

\subsection{Asymptotic chi-square-type goodness-of-fit tests}
\label{sec:GOF:asym}

One approach to test $H_0:u \in \Jc$ versus $H_1:u \not \in \Jc$ consists of constructing test statistics as norms of the \emph{goodness-of-fit process}
\begin{equation}
  \label{eq:GOF:process}
  \sqrt{n} (u^{[n]} - u^{[\hat \theta^{[n]}]}),
\end{equation}
where $u^{[n]}$ is defined in~\eqref{eq:un} and $\hat \theta^{[n]}$ is an estimator of $\theta_0$ computed under~$H_0$. As classically in the goodness-of-fit testing literature, the process in~\eqref{eq:GOF:process} compares a nonparametric estimator of $u$ which is consistent whether $H_0$ is true or not, with a parametric estimator which is only consistent under $H_0$. The rationale behind this construction is that any norm of~\eqref{eq:GOF:process} should be typically smaller under $H_0$ than it is under $H_1$.

The following result, proven in Appendix~\ref{proofs:GOF}, describes the asymptotic null behavior of the goodness-of-fit process in~\eqref{eq:GOF:process}.

\begin{prop}
  \label{prop:GOF}
Assume that $H_0$ in~\eqref{eq:H0} holds and that
\begin{enumerate}
\item $\sqrt{n} (\hat p^{[n]} - p)$ converges weakly in $\R^{r \times s}$, where $\hat p^{[n]}$ is defined in~\eqref{eq:hat:pn},
\item the map from $\Theta \subset \R^m$ to $\R^{r \times s}$ defined by $\theta \mapsto u^{[\theta]}$ is differentiable at~$\theta_0$ and let $\dot u^{[\theta_0]}$ be the $rs \times m$ matrix whose row $i + r(j-1)$, $(i,j) \in I_{r,s}$, is $\dot u^{[\theta]}_{ij}$ in~\eqref{eq:dot:u:ij},
\item there exists a $m \times rs$ matrix $V_{u,p}^{[\theta_0]}$ such that
  $$
  \sqrt{n} (\hat \theta^{[n]} - \theta_0) = V_{u,p}^{[\theta_0]} \sqrt{n} \, \vect(\hat p^{[n]} - p) + o_P(1),
  $$
  where $\vect$ is the operator defined as in~\eqref{eq:vect:S} with $S = I_{r,s}$.
\end{enumerate}
Then,
\begin{equation*}
  \sqrt{n} \, \vect(u^{[n]} - u^{[\hat \theta^{[n]}]}) =  (J_{u,p} - \dot u^{[\theta_0]} V_{u,p}^{[\theta_0]}) \sqrt{n} \, \vect(\hat p^{[n]} - p) + o_P(1),
\end{equation*}
where $J_{u,p}$ is defined in~\eqref{eq:J:u:p}.
\end{prop}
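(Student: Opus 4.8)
The plan is to split the goodness-of-fit process into a purely nonparametric part and a parametric estimation part, linearize each in $\sqrt{n}\,\vect(\hat p^{[n]} - p)$, and recombine. Since $H_0$ in~\eqref{eq:H0} holds, we have $u = u^{[\theta_0]}$, so I would write
\begin{equation*}
\sqrt{n}\,\vect(u^{[n]} - u^{[\hat \theta^{[n]}]}) = \sqrt{n}\,\vect(u^{[n]} - u^{[\theta_0]}) - \sqrt{n}\,\vect(u^{[\hat \theta^{[n]}]} - u^{[\theta_0]}),
\end{equation*}
and treat the two terms in turn.

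The first term is immediate from Proposition~\ref{prop:asym:un}. Under the first assumption, that result gives $\sqrt{n}(u^{[n]} - u) = \Uc'_p(\sqrt{n}(\hat p^{[n]} - p)) + o_P(1)$; applying $\vect$ and recalling that $\vect\{\Uc'_p(h)\} = J_{u,p}\vect(h)$ yields
\begin{equation*}
\sqrt{n}\,\vect(u^{[n]} - u^{[\theta_0]}) = J_{u,p}\sqrt{n}\,\vect(\hat p^{[n]} - p) + o_P(1).
\end{equation*}

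For the parametric term, I would first observe that the third assumption forces consistency of $\hat \theta^{[n]}$: since $\sqrt{n}\,\vect(\hat p^{[n]} - p) \leadsto Z_p$ is asymptotically tight, the expansion $\sqrt{n}(\hat \theta^{[n]} - \theta_0) = V_{u,p}^{[\theta_0]}\sqrt{n}\,\vect(\hat p^{[n]} - p) + o_P(1)$ converges weakly and in particular $\hat \theta^{[n]} \p \theta_0$. The differentiability of $\theta \mapsto \vect(u^{[\theta]})$ at $\theta_0$ (second assumption, with $rs \times m$ Jacobian $\dot u^{[\theta_0]}$) together with this weak convergence is exactly the hypothesis of the delta method \citep[Theorem~3.1]{van98}, which supplies the linearization
\begin{equation*}
\sqrt{n}\,\vect(u^{[\hat \theta^{[n]}]} - u^{[\theta_0]}) = \dot u^{[\theta_0]}\sqrt{n}(\hat \theta^{[n]} - \theta_0) + o_P(1).
\end{equation*}
Substituting the expansion of $\sqrt{n}(\hat \theta^{[n]} - \theta_0)$ from the third assumption and using that $\dot u^{[\theta_0]}$ is a fixed matrix (so it absorbs the $o_P(1)$) turns the right-hand side into $\dot u^{[\theta_0]} V_{u,p}^{[\theta_0]}\sqrt{n}\,\vect(\hat p^{[n]} - p) + o_P(1)$. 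Collecting the common factor $\sqrt{n}\,\vect(\hat p^{[n]} - p)$ in the two linearizations then gives the claimed identity with coefficient matrix $J_{u,p} - \dot u^{[\theta_0]} V_{u,p}^{[\theta_0]}$.

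The argument is essentially a two-fold application of the delta method together with Slutsky's lemma, so I do not expect a serious obstacle. The only point requiring genuine care is the parametric linearization: it rests on mere pointwise differentiability at $\theta_0$ rather than continuous differentiability, so one must invoke the delta method in its sharp form (valid precisely because $\sqrt{n}(\hat \theta^{[n]} - \theta_0)$ is asymptotically tight) rather than a crude Taylor-with-remainder bound. Everything else is bookkeeping with the vectorization conventions and the shapes of $J_{u,p}$, $\dot u^{[\theta_0]}$ and $V_{u,p}^{[\theta_0]}$.
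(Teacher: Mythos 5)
Your proposal is correct and follows essentially the same route as the paper's proof: the same decomposition of the goodness-of-fit process into $\sqrt{n}(u^{[n]}-u)$ and $\sqrt{n}(u^{[\hat\theta^{[n]}]}-u)$, the delta method applied to $\theta\mapsto u^{[\theta]}$ for the parametric term, and Proposition~\ref{prop:asym:un} for the nonparametric term. Your additional remarks on the tightness-based consistency of $\hat\theta^{[n]}$ and the sharp form of the delta method are sound but merely make explicit what the paper leaves implicit.
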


Note that Corollary~\ref{cor:MoM} and Proposition~\ref{prop:MPL} provide conditions under which the third assumption in the previous proposition holds for the three method-of-moments estimators in~\eqref{eq:MoM:estimators} and the maximum pseudo-likelihood estimator in~\eqref{eq:theta:n}, respectively.

One natural test statistic that can be constructed from the goodness-of-fit process in~\eqref{eq:GOF:process} is the chi-square statistic
\begin{equation}
  \label{eq:S:n}
  S^{[n]} = n \sum_{(i,j) \in I_{r,s}} \frac{(u^{[n]}_{ij} - u^{[\hat \theta^{[n]}]}_{ij})^2}{u^{[\hat \theta^{[n]}]}_{ij}}.
\end{equation}
As is well-known in the statistical literature, a widely accepted rule of thumb to protect the level of a classical chi-square test is to regroup low observed counts such that eventually all observed counts are above 5 \citep[see, e.g.,][Chapter~17]{van98}. To be able to perform similar groupings in our context, we shall consider the following generalization of $S^{[n]}$ in~\eqref{eq:S:n}:
\begin{equation}
  \label{eq:S:n:G}
  S_G^{[n]} = n \| \diag(G \, \vect(u^{[\hat \theta^{[n]}]}))^{-1/2} G \, \vect(u^{[n]} - u^{[\hat \theta^{[n]}]}) \|_2^2,
\end{equation}
where $G$ is a chosen ``grouping matrix'', that is, a $q \times rs$ matrix with $q \in \{1, \dots, rs\}$ whose elements are in $\{0,1\}$ with exactly $rs$ of them equal to 1 and a unique 1 per column. Some thought reveals that, in~\eqref{eq:S:n:G}, each row of $G$ sums one or more copula p.m.f.\ values. Clearly, $S_G^{[n]}$ coincides with $S^{[n]}$ in~\eqref{eq:S:n} when $G$ is the $rs \times rs$ identity matrix.

The following result, proven in Appendix~\ref{proofs:GOF}, gives the asymptotic null distribution of $S_G^{[n]}$ in~\eqref{eq:S:n:G} when the latter is computed from a random sample.

\begin{prop}
  \label{prop:S:n:G}
  Under the conditions of Proposition~\ref{prop:GOF}, when $(X_1,Y_1)$, \dots, $(X_n,Y_n)$ are independent copies of $(X,Y)$, $S_G^{[n]} \leadsto \Lc_G \text{ in } \R$, where $\Lc_G$ is distributed as $\sum_{k=1}^q \lambda_k Z_k^2$ for i.i.d.\ $N(0,1)$-distributed random variables $Z_1,\dots,Z_q$ and $\lambda_1,\dots,\lambda_q$ the eigenvalues of
  \begin{multline}
    \label{eq:cov}
  \Sigma_{G,u,p}^{[\theta_0]} = \diag(G \, \vect(u^{[\theta_0]}))^{-1/2} G (J_{u,p} - \dot u^{[\theta_0]} V_{u,p}^{[\theta_0]}) \\ \times \Sigma_p (J_{u,p} - \dot u^{[\theta_0]} V_{u,p}^{[\theta_0]})^\top G^\top \diag(G \, \vect(u^{[\theta_0]}))^{-1/2},
\end{multline}
with $J_{u,p}$ defined in~\eqref{eq:J:u:p} and $\Sigma_p = \diag(\vect(p))  - \vect(p) \vect(p)^\top$. Furthermore, provided that $(\theta,u',p') \mapsto \Sigma_{G,u',p'}^{[\theta]}$ is continuous at $(\theta_0, u, p)$, we have that $\Sigma_{G,u^{[n]},p^{[n]}}^{[\hat \theta^{[n]}]} \p \Sigma_{G,u,p}^{[\theta_0]}$ in $\R^{q \times q}$, where $p^{[n]}$ is defined in~\eqref{eq:pn}.
\end{prop}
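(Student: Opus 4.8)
The plan is to reduce the statement to a standard Gaussian quadratic-form limit theorem. Abbreviate the normalized goodness-of-fit vector by
$$
W^{[n]} = \sqrt{n}\, \diag(G\, \vect(u^{[\hat \theta^{[n]}]}))^{-1/2} G\, \vect(u^{[n]} - u^{[\hat \theta^{[n]}]}),
$$
so that $S_G^{[n]} = \| W^{[n]} \|_2^2$. By Proposition~\ref{prop:GOF}, under the three stated assumptions,
$$
\sqrt{n}\, \vect(u^{[n]} - u^{[\hat \theta^{[n]}]}) = (J_{u,p} - \dot u^{[\theta_0]} V_{u,p}^{[\theta_0]})\, \sqrt{n}\, \vect(\hat p^{[n]} - p) + o_P(1),
$$
while Assumption~1 gives $\sqrt{n}\,\vect(\hat p^{[n]} - p) \leadsto \vect(Z_p)$, which, in the i.i.d.\ case and as recorded in Proposition~\ref{prop:asym:un}, is centered Gaussian with covariance $\Sigma_p$.

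First I would show that the random normalizing matrix converges in probability to its deterministic limit. Assumption~3 forces $\sqrt{n}(\hat \theta^{[n]} - \theta_0) = O_P(1)$, hence $\hat \theta^{[n]} \p \theta_0$; combining this with the differentiability (thus continuity) of $\theta \mapsto u^{[\theta]}$ at $\theta_0$ from Assumption~2 and the continuous mapping theorem yields $u^{[\hat \theta^{[n]}]} \p u^{[\theta_0]} = u$ under $H_0$. Since the strict positivity of the family $\Jc$ guarantees that each entry of $G\,\vect(u^{[\theta_0]})$ is a sum of strictly positive values and hence positive, the map $x \mapsto \diag(G x)^{-1/2}$ is continuous at $\vect(u^{[\theta_0]})$, so that $\diag(G\,\vect(u^{[\hat \theta^{[n]}]}))^{-1/2} \p \diag(G\,\vect(u^{[\theta_0]}))^{-1/2}$.

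Next I would apply Slutsky's theorem to the product of this converging matrix with the weakly convergent linearized process, obtaining
$$
W^{[n]} \leadsto \diag(G\,\vect(u^{[\theta_0]}))^{-1/2} G (J_{u,p} - \dot u^{[\theta_0]} V_{u,p}^{[\theta_0]}) Z_p =: W.
$$
As a linear image of the centered Gaussian vector $Z_p$ (covariance $\Sigma_p$), the vector $W$ is centered Gaussian on $\R^q$ with covariance matrix exactly $\Sigma_{G,u,p}^{[\theta_0]}$ in~\eqref{eq:cov}. The continuous mapping theorem then gives $S_G^{[n]} = \| W^{[n]} \|_2^2 \leadsto \| W \|_2^2$. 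To finish the first part I would invoke the classical spectral representation of a Gaussian quadratic form: writing $\Sigma_{G,u,p}^{[\theta_0]} = P \diag(\lambda_1,\dots,\lambda_q) P^\top$ with $P$ orthogonal, the vector $P^\top W$ is centered Gaussian with diagonal covariance, whence $\| W \|_2^2 = \| P^\top W \|_2^2 = \sum_{k=1}^q \lambda_k Z_k^2$ for i.i.d.\ $N(0,1)$ variables $Z_k$, which is the law $\Lc_G$.

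For the consistency of the covariance estimator I would combine $u^{[n]} \p u$ (Proposition~\ref{prop:consistency:un}), $p^{[n]} \p p$ (immediate from~\eqref{eq:pn:pi} and $\hat p^{[n]} \p p$), and $\hat \theta^{[n]} \p \theta_0$ with the assumed continuity of $(\theta,u',p') \mapsto \Sigma_{G,u',p'}^{[\theta]}$ at $(\theta_0,u,p)$, applying the continuous mapping theorem to conclude $\Sigma_{G,u^{[n]},p^{[n]}}^{[\hat \theta^{[n]}]} \p \Sigma_{G,u,p}^{[\theta_0]}$. The only genuinely delicate point is the Slutsky step: one must confirm that the random scaling matrix and the weakly converging linearized process combine into a single weakly convergent vector, which is legitimate precisely because the scaling matrix converges in probability to a constant. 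Everything else is bookkeeping of the covariance and the standard diagonalization of a Gaussian quadratic form.
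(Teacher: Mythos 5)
Your proposal is correct and follows essentially the same route as the paper's proof: linearize the normalized goodness-of-fit vector via Proposition~\ref{prop:GOF}, replace the random scaling matrix by its probability limit via Slutsky, identify the Gaussian limit with covariance~\eqref{eq:cov}, and conclude with the spectral representation of a Gaussian quadratic form (which the paper obtains by citing Lemma~17.1 of \cite{van98} rather than re-deriving the diagonalization). The extra detail you supply on why $\hat \theta^{[n]} \p \theta_0$ and why the normalizing matrix is continuous at $\vect(u^{[\theta_0]})$ is consistent with, and slightly more explicit than, the paper's argument.
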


Note that the eigenvalues $\lambda_1,\dots,\lambda_q$ of $\Sigma_{G,u,p}^{[\theta_0]}$ are not in general equal to 0 or 1 so that $\Lc_G$ does not have a chi-square distribution in general. The previous proposition however suggests that a goodness-of-fit test based on $S_G^{[n]}$ in~\eqref{eq:S:n:G} could be carried out in practice as follows:
\begin{enumerate}
\item For the hypothesized parametric family of copula p.m.f.s $\Jc$ in~\eqref{eq:Jc}, estimate $\theta_0$ by $\hat \theta^{[n]}$, where $\hat \theta^{[n]}$ is one the three method-of-moments estimators in~\eqref{eq:MoM:estimators} or the maximum pseudo-likelihood estimator in~\eqref{eq:theta:n}, and compute $S_G^{[n]}$ in~\eqref{eq:S:n:G}.
\item Compute the eigenvalues $\hat \lambda_1,\dots, \hat \lambda_q$ of $\Sigma_{G,u^{[n]},p^{[n]}}^{[\hat \theta^{[n]}]}$ which estimate the eigenvalues $\lambda_1,\dots,\lambda_q$ of $\Sigma_{G,u,p}^{[\theta_0]}$ in~\eqref{eq:cov}.
\item For some large integer $M$, compute $S_G^{[n],l} = \sum_{k=1}^q \hat \lambda_k Z_{kl}^2$, $l \in \{1,\dots,M\}$, for i.i.d.\ $N(0,1)$-distributed random variables $Z_{kl}$, $k \in \{1,\dots,q\}$, $l \in \{1,\dots,M\}$, and estimate the p-value of the test as
  $$
  \frac{1}{M} \sum_{l = 1}^M \1(S_G^{[n],l} \geq S_G^{[n]}).
  $$
\end{enumerate}

As a proof of concept, we shall focus on the case when $\hat \theta^{[n]}$ is $\theta^{[n]}_\rho$ in~\eqref{eq:MoM:estimators}, the estimator of $\theta_0$ based on the inversion of Yule's coefficient. From Corollary~\ref{cor:MoM}, we then know that, in this case,
$$
V_{u,p}^{[\theta_0]} = \frac{1}{g'_\rho(\theta_0)} \dot \Upsilon(u)^\top  J_{u,p}.
$$
Standard calculations show that
$$
\dot \Upsilon(u) = \frac{12}{\sqrt{(r+1)(s+1)(r-1)(s-1)}} \big(  (i-1)(j-1)  \big)_{(i,j) \in I_{r,s}}
$$
and that
$$
g'_\rho(\theta_0) = \frac{12}{\sqrt{(r+1)(s+1)(r-1)(s-1)}} \sum_{(i,j) \in I_{r,s}} (i-1)(j-1) \, \dot u^{[\theta_0]}_{ij}.
$$
The previous formulas can be used to obtain the expression of the covariance matrix $\Sigma_{G,u,p}^{[\theta_0]}$ in~\eqref{eq:cov} in terms of $\dot u^{[\theta_0]}$. When the hypothesized family $\Jc$ in~\eqref{eq:Jc} is defined from a parametric copula family as in~\eqref{eq:u:theta}, $\dot u^{[\theta_0]}$ can be obtained by differentiating~\eqref{eq:u:theta}. The latter takes the form of standard calculations for parametric copula families for which $C_\theta$ has an explicit expression. However, it can also be done for so-called implicit copula families such as the normal or the~$t$ using the expressions obtained in \cite{KojYan11}.

\subsection{Monte Carlo experiments}
\label{sec:MC:gof}

\begin{table}[t!]
\centering
\caption{For $(r,s) \in \{(3,3), (3,5), (5,5)\}$, rejection percentages of the goodness-of-fit test based on $S^{[n]}$ in~\eqref{eq:S:n} and Yule's coefficient computed from 1000 random samples of size $n \in \{100,500,1000\}$ generated, as explained in Section~\ref{sec:MC:gof}, from p.m.f.s whose copula p.m.f.\ is of the form~\eqref{eq:u:theta} with $C_\theta$ the Clayton copula with a Kendall's tau in $\{0.33, 0.66\}$. The column `m' gives the marginal scenario. The integer between parentheses is the number of numerical issues encountered out of 1000 executions.} 
\label{tab:gof:1}
\begingroup\footnotesize
\begin{tabular}{rrrrrrrrrr}
  \hline
  \multicolumn{4}{c}{} & \multicolumn{3}{c}{$\tau=0.33$} & \multicolumn{3}{c}{$\tau=0.66$} \\ \cmidrule(lr){5-7} \cmidrule(lr){8-10} $r$ & $s$ & m & n & \multicolumn{1}{c}{Cl} & \multicolumn{1}{c}{GH} & \multicolumn{1}{c}{F} & \multicolumn{1}{c}{Cl} & \multicolumn{1}{c}{GH} & \multicolumn{1}{c}{F}  \\ \hline
3 & 3 & 1 & 100 & 4.5 (0) & 23 (0) & 11.7 (0) & 1.5 (0) & 21.9 (0) & 6.9 (0) \\ 
   &  &  & 500 & 5.4 (0) & 93.5 (0) & 58.8 (0) & 3 (0) & 97 (0) & 76.3 (0) \\ 
   &  &  & 1000 & 4.8 (0) & 99.9 (0) & 89.1 (0) & 3.9 (0) & 100 (0) & 99.5 (0) \\ 
   \\ &  & 2 & 100 & 2.6 (0) & 18.6 (0) & 7.2 (0) & 1 (0) & 13.6 (0) & 4.6 (0) \\ 
   &  &  & 500 & 4.4 (0) & 84.3 (0) & 42.6 (0) & 2 (0) & 95.1 (0) & 67.2 (0) \\ 
   &  &  & 1000 & 4.3 (0) & 99.3 (0) & 80.8 (0) & 3.4 (0) & 100 (0) & 97.5 (0) \\ 
   \\ &  & 3 & 100 & 2.9 (0) & 20.3 (0) & 7.6 (0) & 0.7 (0) & 16 (0) & 5.2 (0) \\ 
   &  &  & 500 & 5.2 (0) & 88.7 (0) & 55.3 (0) & 1.7 (0) & 96.7 (0) & 67.1 (0) \\ 
   &  &  & 1000 & 5.6 (0) & 99.9 (0) & 84.8 (0) & 2.4 (0) & 100 (0) & 97.1 (0) \\ 
   \\3 & 5 & 1 & 100 & 2 (0) & 19.8 (0) & 6.2 (0) & 1.2 (0) & 7.9 (0) & 0.5 (0) \\ 
   &  &  & 500 & 4.6 (0) & 96.9 (0) & 70 (0) & 2.6 (0) & 95.5 (0) & 81.3 (0) \\ 
   &  &  & 1000 & 4.6 (0) & 100 (0) & 96.2 (0) & 3.1 (0) & 99.8 (0) & 99.1 (0) \\ 
   \\ &  & 2 & 100 & 0.6 (0) & 5.4 (0) & 2 (0) & 0.5 (0) & 2.5 (0) & 0 (0) \\ 
   &  &  & 500 & 4.9 (0) & 89.2 (0) & 45.5 (0) & 3.8 (0) & 90 (0) & 35.7 (0) \\ 
   &  &  & 1000 & 5 (0) & 99.9 (0) & 83.4 (0) & 3 (0) & 98.6 (0) & 89 (0) \\ 
   \\ &  & 3 & 100 & 0.2 (0) & 1.9 (0) & 0.5 (0) & 0.6 (0) & 0.6 (0) & 0 (0) \\ 
   &  &  & 500 & 2.8 (0) & 64.7 (0) & 26.2 (0) & 2.2 (0) & 63.2 (0) & 12.6 (0) \\ 
   &  &  & 1000 & 3.4 (0) & 97.4 (0) & 65.1 (0) & 3.2 (0) & 94.5 (0) & 73.7 (0) \\ 
   \\5 & 5 & 1 & 100 & 0.5 (0) & 7.1 (0) & 1.4 (0) & 1.8 (2) & 7.2 (0) & 0.5 (1) \\ 
   &  &  & 500 & 4.1 (0) & 97.9 (0) & 76.6 (0) & 2.9 (0) & 87.6 (0) & 55.3 (0) \\ 
   &  &  & 1000 & 5.1 (0) & 100 (0) & 99.6 (0) & 4.3 (0) & 98.9 (0) & 90.6 (0) \\ 
   \\ &  & 2 & 100 & 0.1 (0) & 0.9 (0) & 0.2 (0) & 0.7 (0) & 4.1 (1) & 0 (0) \\ 
   &  &  & 500 & 2 (0) & 87.1 (0) & 43.8 (0) & 5.2 (0) & 93.8 (0) & 49.2 (0) \\ 
   &  &  & 1000 & 4.4 (0) & 100 (0) & 87.2 (0) & 5.1 (0) & 100 (0) & 86.2 (0) \\ 
   \\ &  & 3 & 100 & 0 (0) & 0 (0) & 0 (0) & 0.2 (0) & 0 (0) & 0 (0) \\ 
   &  &  & 500 & 0.5 (0) & 23.8 (0) & 4.9 (0) & 1.2 (0) & 40.4 (0) & 7.1 (0) \\ 
   &  &  & 1000 & 1.1 (0) & 77.1 (0) & 29.4 (0) & 3.3 (0) & 84.5 (0) & 67.7 (0) \\ 
   \hline
\end{tabular}
\endgroup
\end{table}

\begin{table}[t!]
\centering
\caption{For $(r,s) \in \{(3,3), (3,5), (5,5)\}$, rejection percentages of the goodness-of-fit test based on $S^{[n]}$ in~\eqref{eq:S:n} and Yule's coefficient computed from 1000 random samples of size $n \in \{100,500,1000\}$ generated, as explained in Section~\ref{sec:MC:gof}, from p.m.f.s whose copula p.m.f.\ is of the form~\eqref{eq:u:theta} with $C_\theta$ the Gumbel--Hougaard copula with a Kendall's tau in $\{0.33, 0.66\}$. The column `m' gives the marginal scenario. The integer between parentheses is the number of numerical issues encountered out of 1000 executions.} 
\label{tab:gof:2}
\begingroup\footnotesize
\begin{tabular}{rrrrrrrrrr}
  \hline
  \multicolumn{4}{c}{} & \multicolumn{3}{c}{$\tau=0.33$} & \multicolumn{3}{c}{$\tau=0.66$} \\ \cmidrule(lr){5-7} \cmidrule(lr){8-10} $r$ & $s$ & m & n & \multicolumn{1}{c}{Cl} & \multicolumn{1}{c}{GH} & \multicolumn{1}{c}{F} & \multicolumn{1}{c}{Cl} & \multicolumn{1}{c}{GH} & \multicolumn{1}{c}{F}  \\ \hline
3 & 3 & 1 & 100 & 27.8 (0) & 3.8 (0) & 3.9 (0) & 29.5 (0) & 0.9 (0) & 2.1 (0) \\ 
   &  &  & 500 & 94.8 (0) & 4.1 (0) & 18 (0) & 99.2 (0) & 3.8 (0) & 18.5 (0) \\ 
   &  &  & 1000 & 99.8 (0) & 6.1 (0) & 32.7 (0) & 100 (0) & 3.4 (0) & 41.7 (0) \\ 
   \\ &  & 2 & 100 & 17.6 (0) & 2.4 (0) & 4.6 (0) & 20 (0) & 0.6 (0) & 1.8 (0) \\ 
   &  &  & 500 & 88.2 (0) & 3.7 (0) & 14.3 (0) & 97.4 (0) & 2.2 (0) & 13.7 (0) \\ 
   &  &  & 1000 & 99.5 (0) & 4.6 (0) & 28.6 (0) & 100 (0) & 3.6 (0) & 32.7 (0) \\ 
   \\ &  & 3 & 100 & 20 (0) & 2.6 (0) & 4.2 (0) & 23.1 (0) & 0.9 (0) & 1.9 (0) \\ 
   &  &  & 500 & 89 (0) & 4.2 (0) & 17.2 (0) & 98.2 (0) & 2.1 (0) & 9.5 (0) \\ 
   &  &  & 1000 & 99.8 (0) & 5 (0) & 31.7 (0) & 100 (0) & 2.9 (0) & 29.8 (0) \\ 
   \\3 & 5 & 1 & 100 & 21.9 (0) & 2.4 (0) & 3.4 (0) & 23.7 (0) & 0.4 (0) & 0.1 (0) \\ 
   &  &  & 500 & 97 (0) & 4.8 (0) & 24.1 (0) & 99.8 (0) & 2 (0) & 18.8 (0) \\ 
   &  &  & 1000 & 100 (0) & 4.3 (0) & 53.3 (0) & 100 (0) & 2.7 (0) & 48.6 (0) \\ 
   \\ &  & 2 & 100 & 9.8 (0) & 0.3 (0) & 0.9 (0) & 16.9 (0) & 0 (0) & 0.1 (0) \\ 
   &  &  & 500 & 89.9 (0) & 6.3 (0) & 15.8 (0) & 98.6 (0) & 1.8 (0) & 8.2 (0) \\ 
   &  &  & 1000 & 100 (0) & 5.9 (0) & 34.7 (0) & 100 (0) & 2.9 (0) & 28.7 (0) \\ 
   \\ &  & 3 & 100 & 4.4 (0) & 0.1 (0) & 0.1 (0) & 3.1 (0) & 0 (0) & 0 (0) \\ 
   &  &  & 500 & 68.5 (0) & 2 (0) & 8.8 (0) & 90.2 (0) & 0.8 (0) & 1.3 (0) \\ 
   &  &  & 1000 & 97.3 (0) & 3.9 (0) & 22.4 (0) & 99.8 (0) & 1.5 (0) & 13.7 (0) \\ 
   \\5 & 5 & 1 & 100 & 10.6 (0) & 0.3 (0) & 0.9 (0) & 32.7 (0) & 0.1 (0) & 0.2 (0) \\ 
   &  &  & 500 & 98.8 (0) & 4.5 (0) & 31.9 (0) & 99.6 (0) & 1.7 (0) & 29.2 (0) \\ 
   &  &  & 1000 & 100 (0) & 5.1 (0) & 72.1 (0) & 100 (0) & 1.9 (0) & 83.9 (0) \\ 
   \\ &  & 2 & 100 & 2 (0) & 0 (0) & 0 (0) & 15.2 (0) & 0.1 (0) & 0.1 (0) \\ 
   &  &  & 500 & 88.3 (0) & 2.9 (0) & 14.7 (0) & 98.1 (0) & 1 (0) & 14.4 (0) \\ 
   &  &  & 1000 & 100 (0) & 5.2 (0) & 42.2 (0) & 100 (0) & 3 (0) & 62.5 (0) \\ 
   \\ &  & 3 & 100 & 0.2 (0) & 0 (0) & 0 (0) & 0.2 (0) & 0 (0) & 0 (0) \\ 
   &  &  & 500 & 37.5 (0) & 0.6 (0) & 0.8 (0) & 88.8 (0) & 0.6 (0) & 1.9 (0) \\ 
   &  &  & 1000 & 81.7 (0) & 0.5 (0) & 8.4 (0) & 99.5 (0) & 1.3 (0) & 16.5 (0) \\ 
   \hline
\end{tabular}
\endgroup
\end{table}

\begin{table}[t!]
\centering
\caption{For $(r,s) \in \{(3,3), (3,5), (5,5)\}$, rejection percentages of the goodness-of-fit test based on $S^{[n]}$ in~\eqref{eq:S:n} and Yule's coefficient computed from 1000 random samples of size $n \in \{100,500,1000\}$ generated, as explained in Section~\ref{sec:MC:gof}, from p.m.f.s whose copula p.m.f.\ is of the form~\eqref{eq:u:theta} with $C_\theta$ the Frank copula with a Kendall's tau in $\{0.33, 0.66\}$. The column `m' gives the marginal scenario. The integer between parentheses is the number of numerical issues encountered out of 1000 executions.} 
\label{tab:gof:3}
\begingroup\footnotesize
\begin{tabular}{rrrrrrrrrr}
  \hline
  \multicolumn{4}{c}{} & \multicolumn{3}{c}{$\tau=0.33$} & \multicolumn{3}{c}{$\tau=0.66$} \\ \cmidrule(lr){5-7} \cmidrule(lr){8-10} $r$ & $s$ & m & n & \multicolumn{1}{c}{Cl} & \multicolumn{1}{c}{GH} & \multicolumn{1}{c}{F} & \multicolumn{1}{c}{Cl} & \multicolumn{1}{c}{GH} & \multicolumn{1}{c}{F}  \\ \hline
3 & 3 & 1 & 100 & 13.7 (0) & 5.1 (0) & 4 (0) & 16 (0) & 1.4 (0) & 1.5 (0) \\ 
   &  &  & 500 & 61.7 (0) & 21 (0) & 4.3 (0) & 78.8 (0) & 12.5 (0) & 2.4 (0) \\ 
   &  &  & 1000 & 92 (0) & 38.6 (0) & 4.4 (0) & 99 (0) & 34.8 (0) & 3.5 (0) \\ 
   \\ &  & 2 & 100 & 12.7 (0) & 3.5 (0) & 3.1 (0) & 8.9 (0) & 1.8 (0) & 1.4 (0) \\ 
   &  &  & 500 & 49.2 (0) & 17.8 (0) & 5.1 (0) & 66 (0) & 8.9 (0) & 3.4 (0) \\ 
   &  &  & 1000 & 83.5 (0) & 32.7 (0) & 4.4 (0) & 97.2 (0) & 26.6 (0) & 4.2 (0) \\ 
   \\ &  & 3 & 100 & 7.8 (0) & 3.7 (0) & 1.6 (0) & 8.4 (0) & 1 (0) & 0.9 (0) \\ 
   &  &  & 500 & 58.3 (0) & 15.8 (0) & 5.1 (0) & 66.5 (0) & 5.4 (0) & 1 (0) \\ 
   &  &  & 1000 & 88.2 (0) & 38.5 (0) & 5 (0) & 97.4 (0) & 21.7 (0) & 1.7 (0) \\ 
   \\3 & 5 & 1 & 100 & 10.8 (0) & 3 (0) & 1 (0) & 19.9 (0) & 0.7 (0) & 0.1 (0) \\ 
   &  &  & 500 & 77.1 (0) & 29.8 (0) & 3.8 (0) & 96.6 (0) & 28.7 (0) & 2.4 (0) \\ 
   &  &  & 1000 & 98.8 (0) & 56.6 (0) & 4.9 (0) & 99.9 (0) & 64.2 (0) & 3.7 (0) \\ 
   \\ &  & 2 & 100 & 5.8 (0) & 1.3 (0) & 1.1 (0) & 16.8 (0) & 0.3 (0) & 0.1 (0) \\ 
   &  &  & 500 & 57.4 (0) & 18.9 (0) & 4.1 (0) & 81.5 (0) & 11.7 (0) & 1.4 (0) \\ 
   &  &  & 1000 & 89.7 (0) & 38.8 (0) & 6.5 (0) & 98.4 (0) & 44.6 (0) & 2.2 (0) \\ 
   \\ &  & 3 & 100 & 3 (0) & 0.6 (0) & 0.2 (0) & 2 (0) & 0 (0) & 0 (0) \\ 
   &  &  & 500 & 35.3 (0) & 10.5 (0) & 1.3 (0) & 68.7 (0) & 10.5 (0) & 1.3 (0) \\ 
   &  &  & 1000 & 71.8 (0) & 28.8 (0) & 3.5 (0) & 96.9 (0) & 29.3 (0) & 4 (0) \\ 
   \\5 & 5 & 1 & 100 & 5.1 (0) & 0.5 (0) & 0.6 (0) & 19.7 (0) & 0.4 (0) & 0.1 (0) \\ 
   &  &  & 500 & 82.9 (0) & 36.5 (0) & 5.6 (0) & 99.8 (0) & 38.6 (0) & 2.1 (0) \\ 
   &  &  & 1000 & 99.5 (0) & 71.8 (0) & 6 (0) & 100 (0) & 86.9 (0) & 2 (0) \\ 
   \\ &  & 2 & 100 & 1.2 (0) & 0.1 (0) & 0 (0) & 11 (0) & 0 (0) & 0 (0) \\ 
   &  &  & 500 & 53 (0) & 16 (0) & 4.2 (0) & 97.6 (0) & 17.7 (0) & 1.5 (0) \\ 
   &  &  & 1000 & 91.9 (0) & 46.4 (0) & 4.1 (0) & 100 (0) & 70 (0) & 1.5 (0) \\ 
   \\ &  & 3 & 100 & 0 (0) & 0 (0) & 0 (0) & 0.6 (0) & 0 (0) & 0 (0) \\ 
   &  &  & 500 & 12.2 (0) & 1.7 (0) & 0.2 (0) & 74.1 (0) & 2.9 (0) & 1.3 (0) \\ 
   &  &  & 1000 & 39.4 (0) & 10.1 (0) & 0.6 (0) & 99.2 (0) & 27.8 (0) & 1.7 (0) \\ 
   \hline
\end{tabular}
\endgroup
\end{table}

To study the finite-sample performance of the asymptotic chi-square goodness-of-fit tests described in the previous section, we consider similar data generating scenarios as in Section~\ref{sec:MC:est}. Table~\ref{tab:gof:1} (resp. Table~\ref{tab:gof:2}, Table~\ref{tab:gof:3}) reports rejection percentages of the test based on $S^{[n]}$ in~\eqref{eq:S:n} and Yule's coefficient computed from 1000 random samples of size $n \in \{100,500,1000\}$ generated from p.m.f.s whose copula p.m.f.\ is of the form~\eqref{eq:u:theta} with $C_\theta$ the Clayton (resp.\ Gumbel-Hougaard, Frank) copula with a Kendall's tau in $\{0.33, 0.66\}$ and whose margins are as in the marginal scenarios listed in Section~\ref{sec:MC:est}. In all the tables, the columns `Cl' (resp.\ `GH', `F') report rejection percentages when $\Jc$ in $H_0$ in~\eqref{eq:H0} is constructed from a Clayton (resp.\ Gumbel--Hougaard, Frank) copula. The integer between parentheses next to each rejection percentage is the number of numerical issues (either related to the convergence of the IPFP, numerical root finding or the necessary eigenvalue decomposition) out of 1000 executions. An inspection of the tables shows that such issues were very rarely encountered. In terms of rejection percentages, we see that the tests were never too liberal and that they tend to be too conservative for scenarios in which the probability of zero counts in the contingency tables of the generated samples is large. In a related way, the tests display generally good power when they are not overly conservative.

The lowest empirical levels and powers in Tables~\ref{tab:gof:1},~\ref{tab:gof:2} and~\ref{tab:gof:3} are frequently observed for the third marginal scenario. For instance, for that marginal scenario, $C_\theta$ the Clayton copula with a Kendall's tau of 0.66, $(r,s) = (5,5)$ and $n=500$, a realization of the empirical copula p.m.f.\ $u^{[n]}$ in~\eqref{eq:un} multiplied by $n$ and rounded to the nearest integer is:
$$
\begin{bmatrix}
80 &  16 &   3  &  0 &   0 \\
18 &  53 &  21  &  9 &   0 \\
2 &  18 &  40 &  27 &  12 \\
0 &   9 &  19 &  47 &  26 \\
0 &   3  & 17 &  17 &  62 \\
\end{bmatrix}.
$$
Note that, as already mentioned in Section~\ref{sec:MPL}, the latter could be interpreted as observed counts from the unknown copula p.m.f.\ $u$. The low counts in the lower-left and upper right-corners might be the reason for the conservative behavior of the goodness-of-fit tests based on $S^{[n]}$ in~\eqref{eq:S:n}. For that reason, in the next experiment, we focused on the $(r,s) = (5,5)$ case and the third marginal scenario, and considered groupings according to the following matrix:
\begin{equation}
  \label{eq:G55}
\begin{bmatrix}
 &  &  1 & 2 & 2 \\
 & &  1 & 2 & 2 \\
3 & 3 & & &  \\
4 & 4  &  &   & \\
4 & 4 &  &  &  \\
\end{bmatrix}
\end{equation}
where elements with the same integer are to be regrouped and from which we can form the 25 by 25 grouping matrix $G$ to be used in the statistic $S^{[n]}_G$ in~\eqref{eq:S:n:G}. Notice that the resulting groupings may not always guarantee that all aggregated counts are above 5. The latter will certainly not be true in general for $n=100$. The rejection percentages of the goodness-of-fit test based on $S_G^{[n]}$ and Yule's coefficient are reported in Table~\ref{tab:gof:G55}. A comparison with the horizontal blocks of Tables~\ref{tab:gof:1},~\ref{tab:gof:2} and~\ref{tab:gof:3} corresponding to $(r,s) = (5,5)$ and the third marginal scenario shows a clear improvement of the empirical levels and an increase of the powers when $n \geq 500$.

\begin{table}[t!]
\centering
\caption{For $(r,s) = (5,5)$, rejection percentages of the goodness-of-fit test based on $S_G^{[n]}$ in~\eqref{eq:S:n:G} and Yule's coefficient with $G$ formed according to~\eqref{eq:G55} computed from 1000 random samples of size $n \in \{100,500,1000,2000\}$ generated from a p.m.f.\ whose copula p.m.f.\ is of the form~\eqref{eq:u:theta} with $C_\theta$ the Clayton (Cl), Gumbel--Hougaard (GH) or Frank copula (F) with a Kendall's tau in $\{0.33, 0.66\}$ and whose margins are binomial with parameters 4 and 0.5.} 
\label{tab:gof:G55}
\begingroup\footnotesize
\begin{tabular}{rrrrrrrrrrr}
  \hline
  \multicolumn{2}{c}{} & \multicolumn{3}{c}{$C_\theta =$ Cl} & \multicolumn{3}{c}{$C_\theta = $ GH} & \multicolumn{3}{c}{$C_\theta =$ F} \\ \cmidrule(lr){3-5} \cmidrule(lr){6-8} \cmidrule(lr){9-11} $\tau$ & n & Cl & GH & F & Cl & GH & F & Cl & GH & F  \\ \hline
0.33 & 100 & 0.0 & 1.7 & 0.2 & 1.6 & 0.2 & 0.1 & 0.5 & 0.6 & 0.1 \\ 
   & 500 & 3.9 & 71.1 & 28.9 & 67.1 & 4.0 & 13.7 & 30.7 & 22.8 & 3.2 \\ 
   & 1000 & 4.0 & 97.3 & 67.9 & 95.3 & 3.6 & 35.7 & 61.1 & 43.5 & 5.6 \\ 
   & 2000 & 3.3 & 100.0 & 93.8 & 100.0 & 4.6 & 66.9 & 92.7 & 73.8 & 4.2 \\ 
   \\0.66 & 100 & 0.1 & 17.3 & 2.3 & 1.2 & 0.1 & 0.0 & 0.8 & 1.5 & 0.1 \\ 
   & 500 & 3.5 & 99.9 & 94.0 & 84.7 & 3.3 & 29.7 & 84.9 & 51.3 & 3.6 \\ 
   & 1000 & 3.6 & 100.0 & 100.0 & 99.3 & 4.4 & 78.4 & 99.7 & 88.6 & 5.3 \\ 
   & 2000 & 4.1 & 100.0 & 100.0 & 100.0 & 3.9 & 99.5 & 100.0 & 99.9 & 4.6 \\ 
   \hline
\end{tabular}
\endgroup
\end{table}

In a last experiment, we focused on the $(r,s) = (10,10)$ case and second marginal scenario listed in Section~\ref{sec:MC:est}, and decided to form the grouping matrix $G$ according to the following matrix:
\begin{equation}
  \label{eq:G1010}
\begin{bmatrix}
1  & 1 & 1 &  2 & 2 & 2& 2& 3& 3& 3\\
1  & 1 & 1 &  2& 2& 2& 2& 3& 3& 3\\
1  & 1 & 1 &  2& 2& 2& 2& 3& 3& 3\\
4  & 4 & 4 &  & & & &5 &5 &5 \\
4  & 4 & 4 &  & & & &5 &5 &5 \\
4  & 4 & 4 &  & & & &5 &5 &5 \\
4  & 4 & 4 &  & & & &5 &5 &5 \\
6  &6  &6  &7  &7 &7 &7 &8 &8 &8 \\
6  &6  &6  &7  &7 &7 &7 &8 &8 &8 \\
6  &6  &6  &7  &7 &7 &7 &8 &8 &8 \\
\end{bmatrix}.
\end{equation}
The rejection percentages are reported in Table~\ref{tab:gof:G1010} for $n \in \{500,1000,2000,4000\}$. The results seem to confirm that the goodness-of-fit tests based on $S^{[n]}_G$ in~\eqref{eq:S:n:G} can be well-behaved in many scenarios provided groupings are performed to avoid ``very low counts'' in the matrix $n u^{[n]}$.

\begin{table}[t!]
\centering
\caption{For $(r,s) = (10,10)$, rejection percentages of the goodness-of-fit test based on $S_G^{[n]}$ in~\eqref{eq:S:n:G} and Yule's coefficient with $G$ formed according to~\eqref{eq:G1010} computed from 1000 random samples of size $n \in \{500,1000,2000,4000\}$ generated from a p.m.f.\ whose copula p.m.f.\ is of the form~\eqref{eq:u:theta} with $C_\theta$ the Clayton (Cl), Gumbel--Hougaard (GH) or Frank copula (F) with a Kendall's tau in $\{0.33, 0.66\}$ and whose margins are as in the second marginal scenario of Section~\ref{sec:MC:est}.} 
\label{tab:gof:G1010}
\begingroup\footnotesize
\begin{tabular}{rrrrrrrrrrr}
  \hline
  \multicolumn{2}{c}{} & \multicolumn{3}{c}{$C_\theta =$ Cl} & \multicolumn{3}{c}{$C_\theta = $ GH} & \multicolumn{3}{c}{$C_\theta =$ F} \\ \cmidrule(lr){3-5} \cmidrule(lr){6-8} \cmidrule(lr){9-11} $\tau$ & n & Cl & GH & F & Cl & GH & F & Cl & GH & F  \\ \hline
0.33 & 500 & 6.0 & 66.4 & 29.8 & 73.7 & 8.6 & 17.8 & 35.6 & 11.1 & 7.7 \\ 
   & 1000 & 3.6 & 97.9 & 61.0 & 97.5 & 6.2 & 26.4 & 68.1 & 18.3 & 5.5 \\ 
   & 2000 & 6.1 & 100.0 & 94.9 & 100.0 & 6.4 & 45.9 & 95.8 & 38.7 & 5.5 \\ 
   & 4000 & 5.0 & 100.0 & 100.0 & 100.0 & 4.7 & 81.7 & 100.0 & 78.5 & 5.4 \\ 
   \\0.66 & 500 & 2.2 & 91.0 & 54.1 & 94.8 & 2.8 & 16.5 & 57.9 & 6.4 & 3.0 \\ 
   & 1000 & 3.3 & 100.0 & 98.1 & 99.9 & 3.8 & 39.8 & 97.4 & 20.7 & 4.9 \\ 
   & 2000 & 5.3 & 100.0 & 100.0 & 100.0 & 4.3 & 78.0 & 100.0 & 62.1 & 4.6 \\ 
   & 4000 & 4.3 & 100.0 & 100.0 & 100.0 & 5.3 & 98.4 & 100.0 & 97.1 & 4.1 \\ 
   \hline
\end{tabular}
\endgroup
\end{table}

\subsection{Chi-square tests based on a semi-parametric bootstrap}
\label{sec:GOF:pb}

When $(X_1,Y_1),\dots,(X_n,Y_n)$ is a random sample, the goodness-of-fit tests based $S^{[n]}_G$ in~\eqref{eq:S:n:G} can also be carried out using an appropriate adaptation of the so-called parametric bootstrap \citep[see, e.g.,][]{StuGonPre93,GenRem08}. Specifically, in our case, the latter could be called a semi-parametric bootstrap. The testing procedure that we propose is as follows:
\begin{enumerate}
\item For the hypothesized parametric family of copula p.m.f.s $\Jc$ in~\eqref{eq:Jc}, estimate $\theta_0$ by $\hat \theta^{[n]}$, where $\hat \theta^{[n]}$ is one the three method-of-moments estimators in~\eqref{eq:MoM:estimators} or the maximum pseudo-likelihood estimator in~\eqref{eq:theta:n}, and compute $S_G^{[n]}$ in~\eqref{eq:S:n:G}.

\item Let $p^{[n,1]}$ and $p^{[n,2]}$ denote the margins of $p^{[n]}$ in~\eqref{eq:pn}, and, using~\eqref{eq:I:proj:Frechet} with $a = p^{[n,1]}$ and $b = p^{[n,2]}$, form $p^{[n,\hat \theta^{[n]}]} = \Ic_{p^{[n,1]},p^{[n,2]}}(u^{[\hat \theta^{[n]}]})$, which is a consistent semi-parametric estimate of the p.m.f.\ of $(X,Y)$ under $H_0$ in~\eqref{eq:H0}.

\item For some large integer $M$, repeat the following steps for $l \in \{1,\dots,M\}$:
  \begin{enumerate}
  \item Generate a random sample $(X_1^{(l)},Y_1^{(l)}),\dots,(X_n^{(l)},Y_n^{(l)})$ from $p^{[n,\hat \theta^{[n]}]}$.
  \item From the generated sample, compute the version $p^{[n],l}$ of $p^{[n]}$ in~\eqref{eq:pn}, the version $u^{[n],l} = \Uc(p^{[n],l})$ of $u^{[n]}$, where $\Uc$ is defined in~\eqref{eq:I:proj:unif}, and the version $\hat \theta^{[n],l}$ of $\hat \theta^{[n]}$.
  \item Form an approximate realization under $H_0$ of $S_G^{[n]}$ in~\eqref{eq:S:n:G} as
    $$
    S_G^{[n],l} = n \| \diag(G \, \vect(u^{[\hat \theta^{[n],l}]}))^{-1/2} G \, \vect(u^{[n],l} - u^{[\hat \theta^{[n],l}]}) \|_2^2.
    $$
  \end{enumerate}
\item Finally, estimate the p-value of the test as
  $$
  \frac{1}{M} \sum_{l = 1}^M \1(S_G^{[n],l} \geq S_G^{[n]}).
  $$
\end{enumerate}

\begin{remark}
From a theoretical perspective, to attempt to show the null asymptotic validity of the proposed procedure, one could try to start from the work of \cite{GenRem08}. From an empirical perspective, studying its finite-sample behavior is unfortunately substantially more computationally costly than studying the finite-sample performance of the asymptotic test described in Section~\ref{sec:GOF:asym}. \qed
\end{remark}

\section{Data example}
\label{sec:data:example}

We briefly illustrate the proposed methodology on a bivariate data set initially analyzed in \cite{Goo79}. The underlying discrete random variables are $X$, the occupational status of a British male subject, and $Y$ the occupational status of the subject's father. The occupational status can take $r = s = 8$ different ordered values encoded as the first 8 integers. \cite{Goo79} does not explain why the underlying possible values are considered to be ordered and refers the reader to \cite{Mil60}, among others. We note that there seems to be no reasons to consider that certain values in $I_{r,s}$ cannot be realizations of $(X,Y)$. In other words, it seems meaningful to assume that the unknown p.m.f.\ $p$ of $(X,Y)$ has support $I_{r,s}$ and thus that the decomposition in~\eqref{eq:sklar:like} holds. The random vector $(X,Y)$ was observed on $n = 3498$ subjects. The resulting bivariate contingency table can be obtained in $\textsf{R}$ by typing \texttt{data(occupationalStatus)} and is reproduced below for convenience:
\begin{equation}
  \label{eq:cont:tab}
\begin{bmatrix}
  50 & 19 & 26 &  8 &  7 & 11 &  6 &  2 \\
  16 & 40 & 34 & 18 & 11 & 20 &  8 &  3 \\
  12 & 35 & 65 & 66 & 35 & 88 & 23 & 21 \\
  11 & 20 & 58 &110 & 40 &183 & 64 & 32 \\
  2  & 8 & 12&  23 & 25 & 46 & 28 & 12 \\
  12&  28 &102 &162 & 90& 554& 230& 177 \\
  0 &  6&  19 & 40 & 21 &158& 143 & 71 \\
  0 &  3 & 14 & 32 & 15& 126 & 91 &106 \\
\end{bmatrix}.
\end{equation}

\begin{figure}[t!]
\begin{center}
  \includegraphics*[width=0.4\linewidth]{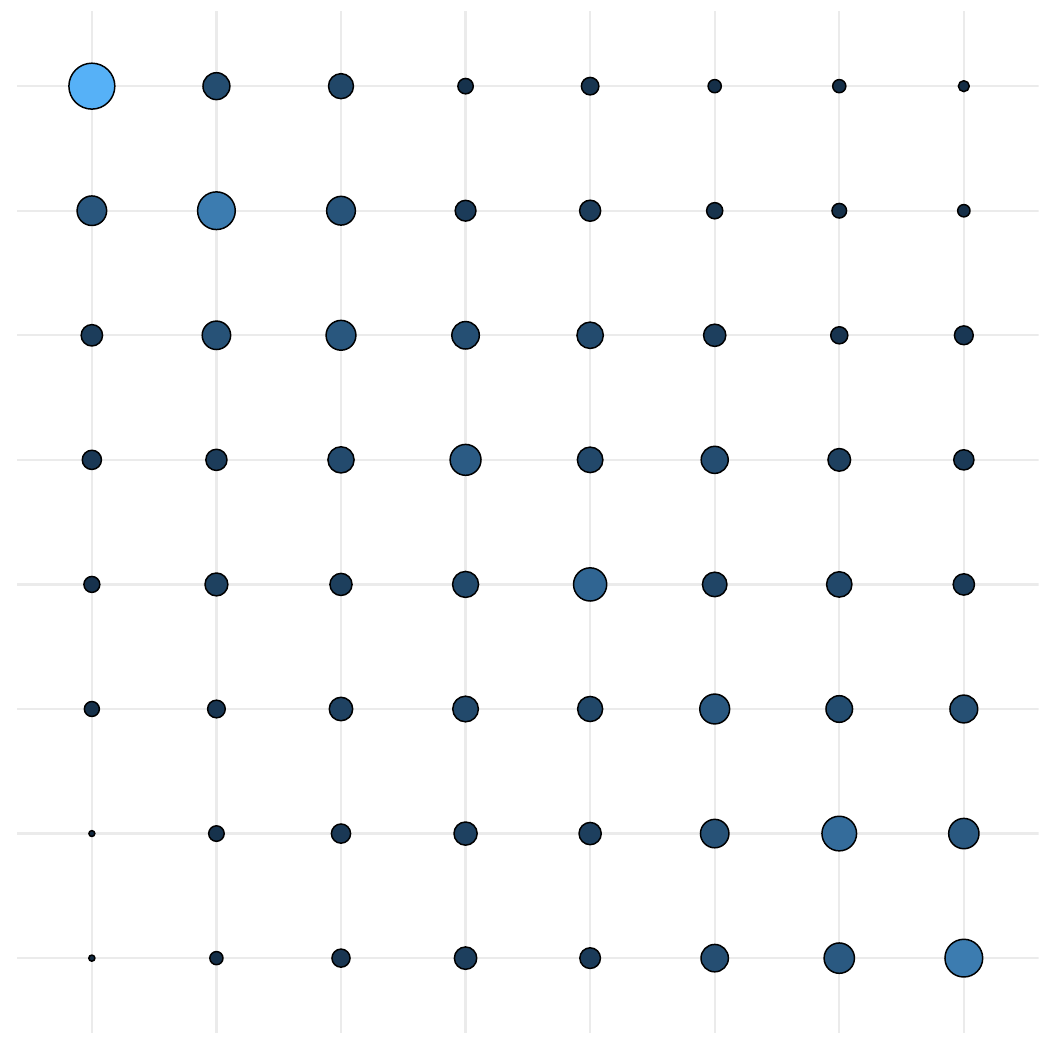}
  \caption{\label{fig:illus:un} Ballon plot of the empirical copula p.m.f.\ $u^{[n]}$ computed using~\eqref{eq:pn:q} and~\eqref{eq:un} from the contingency table given in~\eqref{eq:cont:tab}.}
\end{center}
\end{figure}

To estimate the unknown copula p.m.f.\ $u = \Uc(p)$ of $(X,Y)$, we first computed $p^{[n]}$ in~\eqref{eq:pn:q} from the above contingency table and then the empirical copula p.m.f.\ $u^{[n]}$ using~\eqref{eq:un}. The latter is represented in Figure~\ref{fig:illus:un} under the form of a \emph{ballon plot} created using the \textsf{R} package \texttt{ggpubr} \citep{ggpubr}. To complement Figure~\ref{fig:illus:un}, we also provide the matrix $n u^{[n]}$ rounded to the nearest integer,
\begin{equation}
  \label{eq:n:un}
  \begin{bmatrix}
    253 &  70&   58 &  14 &  21 &   8 &   8 &   4 \\
    88 & 160 &  82 &  35 &  36  & 17 &  12  &  7 \\
    38 &  80 &  90 &  74 &  66  & 42 &  20 &  26 \\
    28 &  37 &  65 &  99 &  61 &  71 &  44 &  32 \\
    16  & 46 &  42 &  64 & 118 &  55  & 60 &  37 \\
    13 &  22 &  48  & 62 &  58 &  91 &  68 &  76 \\
    0  & 15  & 28 &  47  & 42 &  80 & 130 &  94 \\
    0  &  8 &  24 &  43 &  34 &  73  & 95&  160 \\
  \end{bmatrix},
\end{equation}
which, as already mentioned, could be interpreted as observed counts from the unknown copula p.m.f.\ $u$. Estimates of Yule's coefficient $\rho$ in~\eqref{eq:Yule} as well as the gamma coefficient $\gamma$ and the tau coefficient $\tau$ in~\eqref{eq:gamma:tau} can be computed using~\eqref{eq:rho:gamma:tau:estimators} and are $\rho^{[n]} \simeq 0.63$, $\gamma^{[n]} \simeq 0.56$ and $\tau^{[n]} \simeq 0.5$, respectively. The latter seems to indicate a moderately strong dependence between $X$ and $Y$.

\begin{table}[t!]
\centering
\caption{Estimates of the parameter $\theta$ for families $\Jc$ in~\eqref{eq:Jc} constructed via~\eqref{eq:u:theta}, where $C_\theta$ is either a Clayton (Cl), Gumbel--Hougaard (GH), Frank (F), Plackett (P), survival Clayton (sCl), survival Gumbel--Hougaard (sGH) or survival Joe (sJ) copula. The first three columns give the three method-of-moment estimates defined in~\eqref{eq:MoM:estimators}. The fourth and fifth colums report the maximum pseudo-likelihood estimate in~\eqref{eq:theta:n} and the value of $- \bar L^{[n]}(\theta^{[n]})/n$, respectively, where $\bar L$ is defined in~\eqref{eq:PL}.} 
\label{tab:illus:fit}
\begingroup\small
\begin{tabular}{lrrrrr}
  \hline
  $C_\theta$ & $\theta_{\rho}^{[n]}$ & $\theta_{\gamma}^{[n]}$ & $\theta_{\tau}^{[n]}$ & $\theta^{[n]}$ & $- \bar L^{[n]}(\theta^{[n]})/n$ \\ \hline
Cl & 1.712 & 1.724 & 1.722 & 1.548 & 3.906 \\ 
  GH & 1.865 & 1.861 & 1.863 & 1.789 & 3.940 \\ 
  J & 2.591 & 2.596 & 2.592 & 2.070 & 3.998 \\ 
  F & 4.886 & 4.945 & 4.968 & 5.001 & 3.914 \\ 
  P & 9.147 & 8.915 & 8.961 & 8.717 & 3.909 \\ 
  sCl & 1.712 & 1.724 & 1.722 & 1.208 & 3.987 \\ 
  sGH & 1.865 & 1.861 & 1.863 & 1.861 & 3.896 \\ 
  sJ & 2.591 & 2.596 & 2.592 & 2.393 & 3.909 \\ 
   \hline
\end{tabular}
\endgroup
\end{table}

Let us next turn to the parametric modeling of $u$. We first fitted eight one-parameter families $\Jc$ in~\eqref{eq:Jc} constructed via~\eqref{eq:u:theta}, where $C_\theta$ is either a Clayton, Gumbel--Hougaard, Frank, Plackett, survival Clayton, survival Gumbel--Hougaard or survival Joe copula \citep[see, e.g.,][Section~2.5 for the definition of a survival copula]{HofKojMaeYan18}. Method-of-moments and maximum pseudo-likelihood estimates are given in Table~\ref{tab:illus:fit}, the last column of which also reports the scaled maximized negative log-pseudo-likelihood $- \bar L^{[n]}(\theta^{[n]})/n$,  where $\bar L$ is defined in~\eqref{eq:PL}. The latter seems to indicate that a model based on a survival Gumbel--Hougaard copula fits best. This may not be surprising since survival Gumbel--Hougaard copulas are lower-tail dependent and an inspection of Figure~\ref{fig:illus:un} and the ``observed counts'' in~\eqref{eq:n:un} seems to reveal such a ``lower-tail'' dependence in the upper left-corner.

\begin{table}[t!]
\centering
\caption{Results of the goodness-of-fit tests based on $S_G^{[n]}$ in~\eqref{eq:S:n:G} and Yule's coefficient with $G$ formed such that the four values in the lower-left and upper-right corners of the copula p.m.f.s are grouped. The hypothesized family $\Jc$ in~\eqref{eq:H0} is constructed via~\eqref{eq:u:theta}, where $C_\theta$ is either a Clayton (Cl), Gumbel--Hougaard (GH), Frank (F), Plackett (P), survival Clayton (sCl), survival Gumbel--Hougaard (sGH) or survival Joe (sJ) copula. The first row gives the values of $S_G^{[n]}$. The second (resp.\ third) row reports the p-values obtained via the asymptotic procedure (resp.\ semi-parametric bootstrap) described in Section~\ref{sec:GOF:asym} (resp.\ Section~\ref{sec:GOF:pb}) with $M=10^4$.} 
\label{tab:illus:gof}
\begingroup\small
\begin{tabular}{lrrrrrrrr}
  \hline
   & Cl & GH & J & F & P & sCl & sGH & sJ \\ \hline
$S^{[n]}_G$ &  260.4 &  523.1 & 1374.9 &  279.9 &  245.4 & 1200.2 &  156.4 &  303.0 \\ 
  Asymptotic & 0.000 & 0.000 & 0.000 & 0.000 & 0.000 & 0.000 & 0.004 & 0.000 \\ 
  Semi-p.\ boot.\ & 0.000 & 0.000 & 0.000 & 0.000 & 0.000 & 0.000 & 0.011 & 0.000 \\ 
   \hline
\end{tabular}
\endgroup
\end{table}

To further assess the fit of the aforementioned eight models, we carried out goodness-of-fit tests based $S_G^{[n]}$ in~\eqref{eq:S:n:G} and Yule's coefficient with $G$ formed such that, following~\eqref{eq:n:un}, the four values in the lower-left and upper-right corners of the copula p.m.f.s are grouped. P-values are given in Table~\ref{tab:illus:gof} and were computed using both the asymptotic procedure of Section~\ref{sec:GOF:asym} and the semi-parametric bootstrap of Section~\ref{sec:GOF:pb} with $M=10^4$. As one can see, all models are rejected, say, at the 2\% level, but the least-rejected one is the one based on the survival Gumbel--Hougaard copula family. A detailed inspection of Figure~\ref{fig:illus:un} and~\eqref{eq:n:un} seems to indicate the presence of asymmetry with respect to the diagonal in the unknown copula p.m.f.\ $u$. The latter suggests that models based on suitable non-exchangeable generalizations of the survival Gumbel--Hougaard copula family might provide a better fit. This is left for future research as explained in the forthcoming concluding remarks.

\section{Concluding remarks}

Inspired by the seminal work of \cite{Gee20}, we investigated a copula-like modeling approach for discrete bivariate distributions based on $I$-projections on Fréchet classes and the IPFP. The starting point of the investigated methodology is the copula-like decomposition of bivariate p.m.f.s stated in Proposition~\ref{prop:sklar:like}. Focusing our attention on discrete bivariate distributions with rectangular supports, we proposed nonparametric and parametric estimation procedures as well as goodness-of-fit tests for the underlying copula p.m.f. Related asymptotic results were provided thanks to a differentiability result for $I$-projections on Fréchet classes which can be of independent interest. Monte-Carlo experiments were carried out to study the finite-sample performance of some of the investigated inference procedures and the proposed methodology was finally illustrated on a data example.

We end this section by stating a few remarks:
\begin{itemize}

\item As already hinted at in the introduction, the assumption of rectangular support could be replaced by an alternative postulate for the support of~$p$ (based for instance on domain knowledge). Let us briefly describe the practical difficulties that would need to be overcome to adapt the statistical modeling methodology put forward in this work. Following Proposition~\ref{prop:sklar:like}, one would first need to verify that there exists a bivariate p.m.f.\ with uniform margins that has the same support as $p$. Then, a suitable smoothed estimator of $p$, playing the role of~\eqref{eq:pn:q}, would need to be proposed so that the corresponding empirical copula p.m.f.\ of the form of~\eqref{eq:un} could be computed in practice via the IPFP. Finally, one would need to come up with parametric models of the form of~\eqref{eq:Jc} whose support matches the one postulated for $p$. Such investigations are left for future work.

\item The proposed inference procedures rely on the initial smoothing in~\eqref{eq:pn:q} whose practical aim is to ensure that the IPFP will numerically converge with ``high probability''. Our proposal is arbitrary and alternative smoothing strategies would certainly need to be empirically investigated.

\item From the point of view of statistical practice, the asymptotic results stated in Section~\ref{sec:est} could be further used to obtain asymptotic confidence intervals for the various nonparametric and parametric estimates. Furthermore, with additional implementation work, fitting and goodness-of-fit testing could be extended to multiparameter parametric families $\Jc$ in~\eqref{eq:Jc}. This is left for a future project. The fact that we considered only families $\Jc$ built via~\eqref{eq:u:theta} is only a matter of practical convenience because of the amount of available implemented functionalities on copulas in \textsf{R}.  As already mentioned in Section~\ref{sec:par:est}, some other classes of parametric families $\Jc$ were proposed in Section~7 of \cite{Gee20}.

\item Finally, we believe that the proposed methodology can be made fully multivariate as the key underlying results related to $I$-projections are not restricted to the bivariate case. One would however still need to carefully verify that the fact that the IPFP is usually studied only in the bivariate case is merely due to notational convenience.

\end{itemize}

\section*{Acknowledgments}

The authors would like to warmly thank two anonymous Referees for their very constructive and insightful comments on an earlier version of this manuscript.

\appendix

\section{Proof of Proposition~\ref{prop:diff:I:proj}}
\label{proof:prop:diff:I:proj}

\begin{proof}[\bf Proof of Proposition~\ref{prop:diff:I:proj}]
  The proof is based on the implicit function theorem \citep[see, e.g.,][Theorem 17.6, p 450]{Fit09}. Before applying this result, we need to define the underlying function and verify a certain number of related assumptions.

Since it is assumed that $\Gamma_{a,b,T} \neq \emptyset$, from Proposition~\ref{prop:I:proj:diag}, for any $x \in \Gamma_T$, $\Ic_{a,b}(x) = y_x^* = \arginf_{y \in \Gamma_{a,b}} D(y \|x)$ with $\supp{(y_x^*)} = T$.  It follows that, for any $x \in \Gamma_T$, $\Ic_{a,b}(x) = \arginf_{y \in \Gamma_{a,b,T}} D(y \|x)$ and thus
\begin{equation}
  \label{eq:I:proj:open}
  \vect_A(\Ic_{a,b}(x)) = \arginf_{z \in \Lambda_{a,b,A,T}} H(z \| \vect_B(x)),
\end{equation}
where $H$ is defined in~\eqref{eq:H:def} and $\Lambda_{a,b,A,T}$ in~\eqref{eq:Lambda:ab:A:T} is, by assumption, an open subset of $\R^{|A|}$. Notice that, from~\eqref{eq:KL:divergence}, for any $(z,w) \in \Lambda_{a,b,A,T} \times \Lambda_{B,T}$,
\begin{equation}
  \label{eq:H:sum}
H(z \| w) = \sum_{(i,j) \in T} c(z)_{ij} \log \frac{c(z)_{ij}}{d(w)_{ij}}
\end{equation}
and $\Lambda_{B,T}$ in~\eqref{eq:Lambda:B:T} is an open subset of $\R^{|B|}$ under the considered assumptions on $B$. Next, let $F:\Lambda_{a,b,A,T} \times \Lambda_{B,T}  \to \R^{|A|}$ be defined by
$$
F(z,w) = \partial_1 H(z \| w), \qquad z \in \Lambda_{a,b,A,T} \subset (0,1)^{|A|}, w \in \Lambda_{B,T} \subset (0,1)^{|B|}.
$$
From~\eqref{eq:H:sum}, $F$ is differentiable at any $(z,w) \in \Lambda_{a,b,A,T} \times \Lambda_{B,T}$. Furthermore, from the definition of~$F$, \eqref{eq:I:proj:open} and first-order necessary optimality conditions, we have that
\begin{equation}
  \label{eq:F:0}
  F(\vect_A(\Ic_{a,b}(x)), \vect_B(x)) = 0_{\R^{|A|}}, \qquad \text{ for all } x \in \Gamma_T.
\end{equation}
Finally, to be able to apply the implicit function theorem, we shall verify that
\begin{equation}
  \label{eq:invertible}
  \det [ \partial_1 F(\vect_A(\Ic_{a,b}(x)), \vect_B(x)) ] \neq 0, \qquad \text{ for all } x \in \Gamma_T.
\end{equation}
Consider the set $\Lambda_{T,T} \subset (0,1)^{|T|}$ defined in~\eqref{eq:Lambda:B:T} with $B = T$ and let $\tilde D$ be the map from $\Lambda_{T,T} \times \Lambda_{T,T}$ to $[0,\infty)$ defined by $\tilde D(s \| s') = D(\vect^{-1}_T(s) \| \vect^{-1}_T(s'))$, $s,s' \in \Lambda_{T,T}$. Then, from~\eqref{eq:KL:divergence}, for any $s,s' \in \Lambda_{T,T}$,
$$
\tilde D(s \| s') = \sum_{i = 1}^{|T|} s_i \log \frac{s_i}{s_i'}.
$$
Lemma~\ref{lem:strong:convex} below then implies that, for any $r \in \Lambda_{T,T}$, $\tilde D(\cdot \| r)$ is strongly convex with constant 1 on $\Lambda_{T,T}$ in the sense of \citet[Section~2.1.3]{Nes04}. From Theorem~2.1.9 in the previous reference, the latter is equivalent to the fact that, for any $r \in \Lambda_{T,T}$, $s,s' \in \Lambda_{T,T}$ and $t \in [0,1]$,
\begin{equation}
  \label{eq:strong:convex}
  t \tilde D(s \| r) + (1 - t) \tilde D(s' \| r) \geq \tilde D(t s + (1-t) s' \| r) + t(1-t) \frac{1}{2} \| s - s' \|_2^2.
\end{equation}
From~\eqref{eq:H:def}, we further have that, for any $z \in \Lambda_{a,b,A,T}$ and $w \in \Lambda_{B,T}$,
$$
H(z \| w ) = D(c(z) || d(w)) = \tilde D(\vect_T \circ c(z) || \vect_T \circ d(w)).
$$
Fix $w \in  \Lambda_{B,T}$, $z, z' \in  \Lambda_{a,b,A,T}$ and $t \in [0,1]$. Using the fact that, by definition, $\vect_T \circ c$ is a linear map and~\eqref{eq:strong:convex}, we obtain
\begin{align*}
  t H(z \|  w)& + (1 - t) H(z' \| w)  - H(t z + (1-t) z' \| w) \\
  =&  t \tilde D(\vect_T \circ c(z) \| \vect_T \circ d(w)) + (1 - t) \tilde D(\vect_T \circ c(z') \| \vect_T \circ d(w)) \\
              &- \tilde D(\vect_T \circ c(t z + (1-t) z') \| \vect_T \circ d(w)) \\
  =&  t \tilde D(\vect_T \circ c(z) \| \vect_T \circ d(w)) + (1 - t) \tilde D(\vect_T \circ c(z') \| \vect_T \circ d(w))  \\
              &- \tilde D(t \vect_T \circ c(z) + (1-t) \vect_T \circ c(z') \| \vect_T \circ d(w)) \\
              \geq& t(1-t) \frac{1}{2} \| \vect_T \circ c(z') - \vect_T \circ c(z) \|_2^2 \geq t(1-t) \frac{1}{2} \| z' - z \|_2^2.
\end{align*}
Hence, by Theorem~2.1.9 in \cite{Nes04}, for any $w \in \Lambda_{B,T}$, $H(\cdot \| w )$ is strongly convex with constant 1 on $\Lambda_{a,b,A,T}$. From Theorem~2.1.11 in the same reference, the latter is equivalent to the fact that, for any $w \in \Lambda_{B,T}$ and $z \in \Lambda_{a,b,A,T}$, $\partial_1 \partial_1 H(z \| w) - I_{|A| \times |A|}$ is positive semi-definite, where $I_{|A| \times |A|}$ is the $|A| \times |A|$ identity matrix. This implies that, for any $w \in \Lambda_{B,T}$ and $z \in \Lambda_{a,b,A,T}$, $\partial_1 \partial_1 H(z \| w)$ is positive definite.
 Using the definition of $F$, the latter is equivalent to the fact that, for any $w \in \Lambda_{B,T}$ and $z \in \Lambda_{a,b,A,T}$, $\partial_1 F(z \| w)$ is positive definite, which implies~\eqref{eq:invertible}. 

Fix $x \in \Gamma_T$. We can now apply the implicit function theorem \citep[see, e.g.,][Theorem 17.6, p 450]{Fit09} to conclude that there exists a scalar $r > 0$ and a differentiable function $G:\Bc_r(x) \to (0,1)^{|A|}$, where $\Bc_r(x)$ is an open ball of radius $r$ centered at $\vect_B(x)$, such that,
whenever $\| z - \vect_A(\Ic_{a,b}(x)) \|_2 < r$, $\| w - \vect_B(x) \|_2 < r$ and $F(z,w) = 0$, then $G(w) = z$. The latter and~\eqref{eq:F:0} imply that, for any $x' \in \Gamma_T$ such that $\| \vect_A(\Ic_{a,b}(x')) - \vect_A(\Ic_{a,b}(x)) \|_2 < r$ and $\| \vect_B(x') - \vect_B(x) \|_2 < r$, we have $G(\vect_B(x')) = \vect_A(\Ic_{a,b}(x'))$. From the continuity of $\Ic_{a,b}$ stated in Lemma~\ref{lem:continuity} below, we thus obtain that $G(\vect_B(x')) = \vect_A(\Ic_{a,b}(x'))$ for $x'$ in a neighborhood of $x$, or, equivalently, that $G(w) = \vect_A \circ \Ic_{a,b} \circ d(w)$ for $w$ in a neighborhood of $\vect_B(x)$.

Another consequence of the implicit function theorem is that
$$
\partial_1 F(G(w), w) J_G(w)  + \partial_2 F(G(w), w)) = 0_{\R^{|A| \times |B|}} \text{ for all } w \in \Bc_r(x),
$$
where $J_G(w)$ is the Jacobian matrix of $G$ evaluated at $w$. From~\eqref{eq:invertible}, the previous centered display implies that
$$
J_G(w) = - [\partial_1 F(G(w), w)]^{-1} \partial_2 F(G(w), w))
$$
for all $w$ in a neighborhood of $\vect_B(x)$.
\end{proof}

\begin{lem}
  \label{lem:strong:convex}
For $k \in \N$, let $\Delta_k = \{u \in (0,1)^k: \sum_{i=1}^k u_i = 1 \}$. Furthermore, for any $u,v \in (0,1)^k$, let
\begin{equation*}
  \label{eq:KL:univ}
  \bar D(v \| u) = \sum_{i=1}^k v_i \log{\frac{v_i}{u_i}}.
\end{equation*}
Then, for any $u \in \Delta_k$, the function $\bar D(\cdot \| u)$ is strongly convex with constant $1$ on $\Delta_k$ in the sense of \citet[Section~2.1.3]{Nes04}.
\end{lem}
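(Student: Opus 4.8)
The plan is to derive the strong-convexity inequality directly from a Hessian lower bound, but to organize the argument as a one-dimensional calculation along line segments so as to sidestep the fact that $\Delta_k$ is not an open subset of $\R^k$. First I would discard the affine part of $\bar D(\cdot \| u)$: writing $\bar D(v \| u) = \phi(v) - \sum_{i=1}^k v_i \log u_i$ with $\phi(v) = \sum_{i=1}^k v_i \log v_i$, the second summand is linear in $v$ and hence contributes nothing to the second-order behaviour, so it suffices to control $\phi$. The heart of the matter is the elementary computation $\nabla^2 \phi(v) = \diag(1/v_1, \dots, 1/v_k)$ on the positive orthant.

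Fix $v, v' \in \Delta_k$ and $t \in [0,1]$; by convexity of $\Delta_k$ the segment $s \mapsto w(s) = (1-s)v + s v'$ stays in $\Delta_k$. I would introduce $\psi(s) = \bar D(w(s) \| u)$, which is $C^2$ on $[0,1]$, and compute, with $h = v' - v$,
\[
\psi''(s) = h^\top \nabla^2 \phi(w(s)) \, h = \sum_{i=1}^k \frac{h_i^2}{(1-s)v_i + s v_i'}.
\]
Since each coordinate of a point of $\Delta_k$ lies in $(0,1)$ (here we may assume $k \geq 2$, as $\Delta_1 = \emptyset$), the convex combination $(1-s)v_i + s v_i'$ also lies in $(0,1)$, so each reciprocal exceeds $1$ and therefore $\psi''(s) \geq \sum_{i=1}^k h_i^2 = \|v' - v\|_2^2$ for every $s \in [0,1]$.

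It then remains to turn this uniform lower bound on $\psi''$ into the desired inequality. Setting $\mu = \|v' - v\|_2^2$, the auxiliary function $s \mapsto \psi(s) - \tfrac{\mu}{2} s^2$ has nonnegative second derivative, hence is convex on $[0,1]$; applying ordinary convexity at $t = (1-t)\cdot 0 + t \cdot 1$ and rearranging yields
\[
\psi(t) \leq (1-t)\psi(0) + t\,\psi(1) - \tfrac{1}{2}\mu\, t(1-t),
\]
which is precisely
\[
\bar D\big((1-t)v + t v' \,\big\|\, u\big) \leq (1-t)\bar D(v \| u) + t\,\bar D(v' \| u) - \tfrac{1}{2} t(1-t)\|v' - v\|_2^2.
\]
By the inequality characterization of strong convexity, this establishes that $\bar D(\cdot \| u)$ is strongly convex with constant $1$ on $\Delta_k$ in the sense of \citet[Section~2.1.3]{Nes04}.

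The only genuine subtlety, and the point I would be most careful about, is that one cannot invoke the Hessian criterion of \citet[Theorem~2.1.11]{Nes04} verbatim, since that result presupposes an open domain whereas $\Delta_k$ has empty interior in $\R^k$; the one-dimensional reduction above circumvents this while using nothing beyond the positivity and unit-sum constraints defining $\Delta_k$. Everything else is routine differentiation.
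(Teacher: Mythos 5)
Your proof is correct, but it takes a genuinely different route from the paper's. The paper works directly with the first-order definition of strong convexity \citep[Definition 2.1.2]{Nes04}: it computes $\partial_1 \bar D(v \| u)^\top(w-v) = \bar D(w\|u) - \bar D(w\|v) - \bar D(v\|u)$, so that the exact second-order remainder of $\bar D(\cdot\|u)$ at $v$ is the divergence $\bar D(w\|v)$ itself, and then lower-bounds that remainder by $\tfrac12\|w-v\|_1^2 \geq \tfrac12\|w-v\|_2^2$ via Pinsker's inequality. You instead strip off the affine term, observe that the Hessian of the negative entropy is $\diag(1/v_1,\dots,1/v_k) \succeq I$ because every coordinate of a point of $(0,1)^k$ is below $1$, and integrate this bound along segments to reach the equivalent combination inequality of \citet[Theorem~2.1.9]{Nes04}. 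Your argument is more elementary (no information-theoretic input) and your care about the domain is sound, although the worry is slightly overstated: the Hessian bound holds on all of the open cube $(0,1)^k$, so one could apply the Hessian criterion there and restrict to the convex subset $\Delta_k$; the unit-sum constraint plays no role beyond keeping the segment in the domain. What the paper's route buys in exchange is a strictly stronger intermediate estimate, namely $1$-strong convexity with respect to the $\ell_1$ norm (the remainder is bounded below by $\tfrac12\|w-v\|_1^2$), of which the stated $\ell_2$ bound is a weakening; your Hessian computation delivers the $\ell_2$ bound directly and cannot see the $\ell_1$ refinement. Both proofs are complete and correct for the lemma as stated.
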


\begin{proof}
Fix $u \in \Delta_k$. According to \citet[Definition 2.1.2, p 63]{Nes04}, we need to prove that, for any $v, w \in \Delta_k$,
$$
\bar D(w \| u) \geq \bar D(v \| u) + \partial_1 \bar D(v \| u)^T(w-v) + \frac{1}{2} \| w - v \|_2^2.
$$
Note that, for any $v \in (0,1)^k$, $\partial_1 \bar D(v \| u) = (\log (v_1/u_1) + 1,\dots, \log (v_k/u_k) + 1)$ and thus that, for any $v, w \in \Delta_k$,
\begin{align*}
  \partial_1 \bar D(v \| u)^T(w-v) &= \sum_{i=1}^k (\log (v_i/u_i) + 1)(w_i - v_i) =  \sum_{i=1}^k w_i \log (v_i/u_i) - \bar D(v \| u) \\
                                   & =  \sum_{i=1}^k w_i \log \left(\frac{w_i v_i}{u_i w_i} \right) - \bar D(v \| u) = \bar D(w \| u) - \bar D(w \| v) - \bar D(v \| u).
\end{align*}
Hence,
\begin{align*}
  \bar D(w \| u) &= \bar D(v \| u) + \partial_1 \bar D(v \| u)^T(w-v) + \bar D(w \| v) \\
                 &\geq \bar D(v \| u) + \partial_1 \bar D(v \| u)^T(w-v) + \frac{1}{2} \| w - v \|_1^2,
\end{align*}
as a consequence of Pinsker's inequality \citep[see, e.g.,][Lemma 2.5, p 88]{Tsy08} and the fact that the total variation distance coincides with the $L_1$ distance in the considered finite setting. The proof is complete since $\|\cdot\|_1 \geq \|\cdot\|_2$.
\end{proof}

\begin{lem}
  \label{lem:continuity}
  For any $T \subset I_{r,s}$, $T \neq \emptyset$, such that there exists $y \in \Gamma_{a,b}$ with $\supp{(y)} \subset T$, the function $\Ic_{a,b}:\Gamma_T \to \Gamma_{a,b}$ is continuous.
\end{lem}

\begin{proof}
The result is a direct consequence of Theorem 3.3~(iii) in \citet{GieRef17}.
\end{proof}

\section{Proofs of Propositions~\ref{prop:sklar:like} and~\ref{prop:G:T}}
\label{proof:prop:sklar:like}

\begin{proof}[\bf Proof of Proposition~\ref{prop:sklar:like}]
  We first prove that Assertion~1 implies Assertion~2. Since there exists $v \in \Gamma_{\text{unif}}$ such that $\supp{(v)} = \supp{(p)}$, from Proposition~\ref{prop:I:proj}, the $I$-projection of $p$ on $\Gamma_{\text{unif}}$ exists and is unique, that is, $u = \Uc(p)$ exists and is unique. Furthermore, from Proposition~\ref{prop:I:proj:diag}, $u$ is diagonally equivalent to~$p$. Since there exists $q \in \Gamma_{p^{[1]}, p^{[2]}}$ such that $\supp{(q)} = \supp{(u)}$ (take $q = p$), $p' = \Ic_{p^{[1]},p^{[2]}}(u)$ exists and is unique, and $p'$ is diagonally equivalent to $u$. Since, by transitivity via $u$, $p'$ is diagonally equivalent to $p$, and since $p'$ and $p$ have the same margins, we can conclude from Property~1 of \cite{Pre80} \citep[see also][Lemma~27]{BroLeu18} that $p' = p$.

  We shall now prove that Assertion 2 implies Assertion 1. Since $u = \Uc(p) = \arginf_{y \in \Gamma_{\text{unif}}} D(y \| p)$, we have that $\supp{(u)} \subset \supp{(p)}$ by the definition of $D$ in~\eqref{eq:KL:divergence}. Similarly, $p = \Ic_{p^{[1]},p^{[2]}}(u) = \arginf_{y \in \Gamma_{p^{[1]},p^{[2]}}} D(y \| u)$ which implies that $\supp{(p)} \subset \supp{(u)}$. Assertion~1 thus holds since $u \in \Gamma_{\text{unif}}$ and $\supp{(u)} = \supp{(p)}$.
\end{proof}

\begin{proof}[\bf Proof of Proposition~\ref{prop:G:T}]
Let $(U,V)$ have p.m.f.\ $v \in \Gamma_{\text{unif}}$. Then, Goodman's and Kruskal's gamma \citep{GooKru54} and Kendall's tau~$b$ \citep[see, e.g.,][]{KenGib90} of $(U,V)$ are respectively defined by
$$
G(v) = \frac{\kappa(v) - \delta(v)}{\kappa(v) + \delta(v)} \qquad \text{and} \qquad T(v) = \frac{\kappa(v) - \delta(v)}{\sqrt{\Pr(U \neq U')\Pr(V \neq V')}},
$$
where $(U',V')$ is an independent copy of $(U,V)$ and
\begin{align*}
  \kappa(v) = \Pr\{(U-U')(V-V') > 0 \} \qquad \text{and} \qquad \delta(v) = \Pr\{(U-U')(V-V') < 0 \}.
\end{align*}
For $\kappa(v)$, we have that
\begin{align*}
  \kappa(v) &= \sum_{(i,j) \in I_{r,s}} \sum_{(i',j') \in I_{r,s}} \1 \{ (i - i') (j - j') > 0 \} \Pr(U = i, U' = i', V = j, V' = j') \\
  &= 2 \sum_{\substack{i \in \{1,\dots,r-1\} \\ j \in \{1,\dots,s-1\}}} \sum_{\substack{i' \in \{i+1,\dots,r\} \\ j' \in \{j+1,\dots,s\}}} v_{ij} v_{i'j'}.
\end{align*}
For $\delta(v)$, we can use the fact that $\delta(v) = \Pr\{(U-U')(V-V') \neq 0 \} - \kappa(v)$ and the fact that $\Pr\{(U-U')(V-V') \neq 0 \}$ can be expressed as
\begin{align*}
                        \sum_{(i,j) \in I_{r,s}} &\sum_{(i',j') \in I_{r,s}} \1 \{ (i - i') (j - j') \neq 0 \} \Pr(U = i, U' = i', V = j, V' = j') \\
                        &= \sum_{(i,j) \in I_{r,s}} \sum_{(i',j') \in I_{r,s}} \1 (i \neq i') \1(j \neq j') v_{ij} v_{i'j'} = \sum_{(i,j) \in I_{r,s}} v_{ij} \sum_{\substack{i' = 1 \\ i' \neq i}}^r \sum_{\substack{j' = 1 \\ j' \neq j}}^s v_{i'j'} \\
                        &= \sum_{(i,j) \in I_{r,s}} v_{ij} \sum_{\substack{i' = 1 \\ i' \neq i}}^r \left( \frac{1}{r} - v_{i'j} \right) = \sum_{(i,j) \in I_{r,s}} v_{ij}  \left( \frac{r-1}{r} - \sum_{\substack{i' = 1 \\ i' \neq i}}^r v_{i'j} \right) \\
                        &= \sum_{(i,j) \in I_{r,s}} v_{ij}  \left( \frac{r-1}{r} - \frac{1}{s} + v_{ij} \right) = 1 - \frac{1}{r} - \frac{1}{s} + \sum_{(i,j) \in I_{r,s}} v_{ij}^2.
\end{align*}
To obtain the expression $T(v)$, it remains to obtain the expressions of $\Pr(U \neq U')$ and $\Pr(V \neq V')$. We have
$$
\Pr(U \neq U') = \sum_{i,i' = 1}^r \1(i \neq i') \Pr(U = i, U' = i') = \frac{1}{r^2} \sum_{i=1}^r \sum_{\substack{i' = 1 \\ i' \neq i}}^r 1 = \frac{r-1}{r}
$$
and, similarly, $\Pr(V \neq V') = (s-1)/s$.
\end{proof}

\section{Proof of Proposition~\ref{prop:asym:un}}
\label{proof:prop:asym:un}

\begin{proof}[\bf Proof of Proposition~\ref{prop:asym:un}]
  We only prove the first claim as the other claims are immediate consequences of well-known results. We shall first apply Proposition~\ref{prop:diff:I:proj} with $a = u^{[1]}$, $b = u^{[2]}$, $T = I_{r,s}$, $A = I_{r-1} \times I_{s-1}$ and $B = I_{r,s} \setminus \{(r,s)\}$. Note that, with some abuse of notation, the map $d$ from $\Lambda_{B,T}$ in~\eqref{eq:Lambda:B:T} to $\Gamma_T$ in~\eqref{eq:Gamma:T} mentioned in its statement can be defined, for any $w \in \Lambda_{B,T}$, by
\begin{equation}
\label{eq:map:d}
d(w_{11}, w_{21}, \dots, w_{r-2,s}, w_{r-1,s}) = \begin{bmatrix}
w_{11}  & \dots & w_{1s} \\
\vdots &  & \vdots \\
w_{r1}  & \dots & 1 - \sum_{(i,j) \in B} w_{ij} \\
\end{bmatrix}
\end{equation}
and that, with some abuse of notation, the map $c$ from $\Lambda_{a,b,A,T}$ in~\eqref{eq:Lambda:ab:A:T} to $\Gamma_{a,b,T}$ in \eqref{eq:Gamma:ab:T} can be defined, for any $z \in \Lambda_{a,b,A,T}$, by
\begin{multline}
\label{eq:map:c}
c(z_{11}, z_{21}, \dots, z_{r-1,1}, \dots, z_{1,s-1}, z_{2,s-1}, \dots, z_{r-1,s-1}) \\ = \begin{bmatrix}
z_{11}  & \dots & z_{1,s-1} & 1/r - \sum_{j=1}^{s-1} z_{1j} \\
\vdots &  & \vdots & \vdots \\
z_{r-1,1}  & \dots & z_{r-1,s-1} & 1/r - \sum_{j=1}^{s-1} z_{r-1,j}  \\
1/s - \sum_{i=1}^{r-1} z_{i1} & \dots & 1/s - \sum_{i=1}^{r-1} z_{i,s-1} & 1/r + 1/s - 1 + \sum_{(i,j) \in A} z_{ij}
\end{bmatrix}.
\end{multline}

Recall the definition of $\Uc$ in~\eqref{eq:I:proj:unif} and let $u = \Uc(p)$. Proposition~\ref{prop:diff:I:proj} then implies that the map $\vect_A \circ \Uc \circ d$ is differentiable at $\vect_B(p)$ with Jacobian matrix at $\vect_B(p)$ equal to $- L_u^{-1} M_p$, where
$$
L_u = \partial_1 \partial_1 H(\vect_A(u) \| \vect_B(p)) \text{ and }  M_p = \partial_2  \partial_1 H(\vect_A(u) \| \vect_B(p)),
$$
the map $H$ is defined in~\eqref{eq:H:def} and, as we shall see below, the first (resp.\ second) matrix only depends on $u$ (resp.\ $p$). With some abuse of notation, let us denote the $(r-1)(s-1)$ components of $\partial_1 H(\vect_A(u) \| \vect_B(p))$ by
$$
(H'_{11},H'_{21},\dots,H'_{r-1,1},\dots,H'_{1,s-1}, H'_{2,s-1}, \dots, H'_{r-1,s-1}).
$$
Standard calculations give
$$
H'_{kl} = \log \left( \frac{c_{kl}(u)}{d_{kl}(p)} \right) - \log \left( \frac{c_{rl}(u)}{d_{rl}(p)} \right) - \log \left( \frac{c_{ks}(u)}{d_{ks}(p)} \right) + \log \left( \frac{c_{rs}(u)}{d_{rs}(p)} \right), \qquad (k,l) \in A,
$$
where $d_{kl}$ and $c_{kl}$ are the component maps of the maps $d$ and $c$ defined in~\eqref{eq:map:d} and~\eqref{eq:map:c}, respectively. Additional differentiation then leads to the expressions of the elements of the matrices $L_u$ and $M_p$ given in~\eqref{eq:Lu} and~\eqref{eq:Mp}, respectively.

Next, from the assumption that $\sqrt{n} (\hat p^{[n]} - p) \leadsto Z_p$ in $\R^{r \times s}$ and since
\begin{equation}
  \label{eq:asym:equiv}
  \sqrt{n} (p^{[n]} - \hat p^{[n]}) \p 0 \text{ in } \R^{r \times s},
\end{equation}
where $p^{[n]}$ is defined in~\eqref{eq:pn}, we immediately obtain that $\sqrt{n} (\vect_B(p^{[n]}) - \vect_B(p)) \leadsto \vect_B(Z_p)$ in $\R^{|B|}$, which, combined with the delta method \cite[see, e.g.,][Theorem 3.1]{van98} for the map $\vect_A \circ \Uc \circ d$ and, again,~\eqref{eq:asym:equiv}, implies that
\begin{multline*}
\sqrt{n} (\vect_A \circ \Uc \circ d \circ \vect_B(p^{[n]}) - \vect_A \circ \Uc \circ d \circ \vect_B(p)) \\ + L_u^{-1} M_p \sqrt{n} (\vect_B(\hat p^{[n]}) - \vect_B(p)) \p 0 \text{ in } \R^{|A|}.
\end{multline*}
Since, for the considered choices of $T$ and $B$, for any $y \in \Gamma_T$, $d \circ \vect_B(y) = y$, the latter is equivalent to
\begin{equation*}
\sqrt{n} \, \vect_A (u^{[n]} - u)  + L_u^{-1} M_p \sqrt{n} \, \vect_B(\hat p^{[n]} - p) \p 0  \text{ in } \R^{|A|}.
\end{equation*}
Notice from~\eqref{eq:map:c} that, for any $z \in \Lambda_{a,b,A,T}$, $c(z) = \vect^{-1}(K z) + C$, where $C$ is a constant $r \times s$ matrix. Hence, for any $y,y' \in  \Gamma_{a,b,T}$, $K \vect_A(y - y') = \vect(y-y')$ and, by the continuous mapping theorem, we obtain that
\begin{equation*}
\sqrt{n} \, \vect(u^{[n]} - u)  + K L_u^{-1} M_p \sqrt{n} \, \vect_B(\hat p^{[n]} - p) \p 0  \text{ in } \R^{rs}.
\end{equation*}
The desired result finally follows from the fact that $N \vect(y) = \vect_B(y)$, $y \in \R^{r \times s}$.
\end{proof}

\section{Proofs of the results of Section~\ref{sec:MPL}}
\label{proof:prop:MPL}

\begin{proof}[\bf Proof of Proposition~\ref{prop:MPL:consistency}]
To prove the consistency of $\theta^{[n]}$ , we shall use Theorem~5.7 in \cite{van98}. For any $\theta \in \Theta$, let $\ell^{[\theta]}_{ij} = \log u^{[\theta]}_{ij}$, $(i,j) \in I_{r,s}$, and let
\begin{equation}
  \label{eq:Mn:M}
  M^{[n]}(\theta) =  \frac{\bar L^{[n]}(\theta)}{n} = \sum_{(i,j) \in I_{r,s}} u^{[n]}_{ij} \ell^{[\theta]}_{ij} \quad \text{ and } \quad M(\theta) =  \sum_{(i,j) \in I_{r,s}} u_{ij} \ell^{[\theta]}_{ij}.
\end{equation}
From the definition of $\theta^{[n]}$ in~\eqref{eq:theta:n}, we have that $M^{[n]}(\theta^{[n]}) = \sup_{\theta \in \Theta} M^{[n]}(\theta)$, which implies that $M^{[n]}(\theta^{[n]}) \geq M^{[n]}(\theta_0)$. Furthermore, from the triangle inequality,
\begin{align*}
  \sup_{\theta \in \Theta} |M^{[n]}(\theta) - M(\theta) |  &\leq \sup_{\theta \in \Theta}  \sum_{(i,j) \in I_{r,s}} | \ell^{[\theta]}_{ij} (u^{[n]}_{ij} - u_{ij}) | \\ &= \sup_{\theta \in \Theta}  \sum_{(i,j) \in I_{r,s}} | \ell^{[\theta]}_{ij} | |u^{[n]}_{ij} - u_{ij}| \\
  &\leq | \log \lambda|   \sum_{(i,j) \in I_{r,s}} |u^{[n]}_{ij} - u_{ij}| \p 0 \text{ in } \R
\end{align*}
by Proposition~\ref{prop:consistency:un} and the continuous mapping theorem. Moreover, from the identifiability of the family $\Jc$ and Lemma~5.35 of \cite{van98}, we have that $\theta_0$ is the unique maximizer of $M$. The latter implies that, for every $\eps > 0$, $\sup_{\theta \in \Theta : \| \theta - \theta_0 \|_2 > \eps} M(\theta) < M(\theta_0)$. The consistency of $\theta^{[n]}$ finally follows from Theorem~5.7 in \cite{van98}.
\end{proof}

\begin{proof}[\bf Proof of Proposition~\ref{prop:MPL}]
We proceed along the lines of the proof of Theorem~5.21 in \cite{van98}. First, for any $\theta \in \Theta$, let
$$
\Psi^{[n]}_k(\theta) =  \frac{\partial M^{[n]}}{\partial \theta_k}(\theta), \,k \in I_m,
$$
where $M^{[n]}$ is defined in~\eqref{eq:Mn:M}, and
$$
\Psi^{[n]}(\theta) = \left(\Psi^{[n]}_1(\theta),\dots, \Psi^{[n]}_m(\theta) \right) = \sum_{(i,j) \in I_{r,s}} u^{[n]}_{ij} \dot \ell^{[\theta]}_{ij},
$$
where $\dot \ell^{[\theta]}_{ij}$ is defined in~\eqref{eq:dot:l:ij}. Since $\Theta$ is assumed to be open, the estimator $\theta^{[n]}$ in~\eqref{eq:theta:n} is a zero of $\Psi^{[n]}$. Also, let
\begin{equation}
\label{eq:Psi}
  \Psi(\theta) = \Ex_{\theta_0}(\dot \ell^{[\theta]}_{(U,V)}) = \sum_{(i,j) \in I_{r,s}} u_{ij} \dot \ell^{[\theta]}_{ij},
\end{equation}
where $(U,V)$ has p.m.f.\ $u$ and $\dot \ell^{[\theta]}_{ij}$ is defined in~\eqref{eq:dot:l:ij}. Then, from the discussion preceding the statement of Proposition~\ref{prop:MPL}, we have that $\theta_0$ is a zero of $\Psi$.

Since, for any $(i,j) \in I_{r,s}$, $\theta \mapsto \ell^{[\theta]}_{ij}$ is twice differentiable at any $\theta \in \Theta$, by the mean value theorem \citep[see, e.g.,][Theorem~15.29, p 408]{Fit09}, there exists $\eta > 0$ and a positive matrix $q \in \R^{r \times s}$ such that, whenever $\| \theta  - \theta_0 \|_2 < \eta$, for any $(i,j) \in I_{r,s}$,
\begin{equation}
  \label{eq:lipschitz}
\| \dot \ell^{[\theta]}_{ij} - \dot \ell^{[\theta_0]}_{ij} \|_2 \leq q_{ij} \| \theta - \theta_0 \|_2.
\end{equation}

Furthermore, from the triangle inequality and the inequality of Cauchy-Schwarz, we have
\begin{equation}
  \label{eq:Delta:n}
  \begin{split}
 \Delta^{[n]} &=  \left\| \sum_{(i,j) \in I_{r,s}} \dot \ell^{[\theta^{[n]}]}_{ij} \sqrt{n} ( u^{[n]}_{ij} - u_{ij} ) - \sum_{(i,j) \in I_{r,s}} \dot \ell^{[\theta_0]}_{ij} \sqrt{n} ( u^{[n]}_{ij} - u_{ij} ) \right\|_2^2 \\
 &\leq  \left\{ \sum_{(i,j) \in I_{r,s}} \| \dot \ell^{[\theta^{[n]}]}_{ij} - \dot \ell^{[\theta_0]}_{ij} \|_2 \times  | \sqrt{n} ( u^{[n]}_{ij} - u_{ij} ) |  \right\}^2 \leq A^{[n]} \times B^{[n]},
 \end{split}
\end{equation}
where
\begin{align}
  \label{eq:An:Bn}
  A^{[n]} =  \sum_{(i,j) \in I_{r,s}} \| \dot \ell^{[\theta^{[n]}]}_{ij} - \dot \ell^{[\theta_0]}_{ij} \|_2^2 \quad  \text{and} \quad B^{[n]} =  \sum_{(i,j) \in I_{r,s}} | \sqrt{n} ( u^{[n]}_{ij} - u_{ij} ) |^2.
\end{align}

To show that $\Delta^{[n]} \p 0$, let us first prove that $A^{[n]} \p 0$. Fix $\eps > 0$. Then, $\Pr( A^{[n]} > \eps) \leq I^{[n]} + J^{[n]}$, where
\begin{align*}
  I^{[n]} &= \Pr \left( \sum_{(i,j) \in I_{r,s}} \| \dot \ell^{[\theta^{[n]}]}_{ij} - \dot \ell^{[\theta_0]}_{ij} \|_2^2 > \eps, \| \theta^{[n]} - \theta_0 \|_2 < \eta \right), \\
  J^{[n]} &= \Pr \left( \| \theta^{[n]} - \theta_0 \|_2 \geq \eta \right).
\end{align*}
The fact that that $J^{[n]}$ converges to zero is a consequence of the consistency of~$\theta^{[n]}$ and the continuous mapping theorem. For $I^{[n]}$, using~\eqref{eq:lipschitz}, we obtain
\begin{align*}
  I^{[n]} &\leq \Pr \left(  \| \theta^{[n]} - \theta_0 \|_2^2 \sum_{(i,j) \in I_{r,s}} q_{ij}^2  > \eps, \| \theta^{[n]} - \theta_0 \|_2 < \eta \right) \\
  &\leq \Pr \left(  \| \theta^{[n]} - \theta_0 \|_2^2 \sum_{(i,j) \in I_{r,s}} q_{ij}^2  > \eps \right) \to 0,
\end{align*}
again as a consequence of the consistency of $\theta^{[n]}$. Hence, $A^{[n]} \p 0$.

Let us next verify that $\Delta^{[n]}$ in~\eqref{eq:Delta:n} converges in probability to zero. From Proposition~\ref{prop:asym:un}, we know that the weak convergence of $\sqrt{n} (\hat p^{[n]} - p)$ in $\R^{r \times s}$ implies the weak convergence of $\sqrt{n} (u^{[n]} - u)$ in $\R^{r \times s}$. Hence, from the continuous mapping theorem, $B^{[n]}$ in~\eqref{eq:An:Bn} is bounded in probability, which implies that $\Delta^{[n]}$ in~\eqref{eq:Delta:n} converges to zero in probability.

Moreover, since $\Psi^{[n]}(\theta^{[n]}) = 0$ and $\Psi(\theta_0) = 0$, we have that
\begin{equation}
  \begin{split}
  \label{eq:two}
  \sum_{(i,j) \in I_{r,s}} \dot \ell^{[\theta^{[n]}]}_{ij} \sqrt{n} ( u^{[n]}_{ij} - u_{ij} ) &= - \sum_{(i,j) \in I_{r,s}} \dot \ell^{[\theta^{[n]}]}_{ij} \sqrt{n} u_{ij}  \\
  &= \sum_{(i,j) \in I_{r,s}} \sqrt{n} ( \dot \ell^{[\theta_0]}_{ij} - \dot \ell^{[\theta^{[n]}]}_{ij} ) u_{ij}.
  \end{split}
\end{equation}
Since, for any $(i,j) \in I_{r,s}$, $\theta \mapsto \ell^{[\theta]}_{ij}$ is twice differentiable at any $\theta \in \Theta$, the map $\Psi$ in~\eqref{eq:Psi} is differentiable at its zero $\theta_0$ with Jacobian matrix at $\theta_0$ given by
$$
 \sum_{(i,j) \in I_{r,s}} u_{ij} \ddot \ell^{[\theta_0]}_{ij} = \Ex_{\theta_0}(\ddot \ell^{[\theta_0]}_{(U,V)}),
$$
where $\ddot \ell^{[\theta]}_{ij}$ is defined in~\eqref{eq:ddot:l:ij:matrix}. We then obtain from the delta method \citep[see, e.g.,][Theorem 3.1]{van98} that
\begin{equation*}
  \label{eq:three}
  \sqrt{n} \left( \sum_{(i,j) \in I_{r,s}} \dot \ell^{[\theta^{[n]}]}_{ij} u_{ij} - \sum_{(i,j) \in I_{r,s}} \dot \ell^{[\theta_0]}_{ij} u_{ij} \right) = \Ex_{\theta_0}(\ddot \ell^{[\theta_0]}_{(U,V)}) \sqrt{n} (\theta^{[n]} - \theta_0) + o_P(1).
\end{equation*}
Combining the latter display with~\eqref{eq:two} and the fact that $\Delta^{[n]}$ in~\eqref{eq:Delta:n} converges to zero in probability, we obtain that
$$
\Ex_{\theta_0}(\ddot \ell^{[\theta_0]}_{(U,V)})  \sqrt{n} (\theta^{[n]} - \theta_0) = - \sum_{(i,j) \in I_{r,s}} \dot \ell^{[\theta_0]}_{ij} \sqrt{n} ( u^{[n]}_{ij} - u_{ij} ) + o_P(1).
$$
The continuous mapping theorem,~\eqref{eq:Fisher} and Proposition~\ref{prop:asym:un} finally imply that
\begin{align*}
  \sqrt{n} (\theta^{[n]} - \theta_0) &= - \{ \Ex_{\theta_0}(\ddot \ell^{[\theta_0]}_{(U,V)}) \} ^{-1} \sum_{(i,j) \in I_{r,s}} \dot \ell^{[\theta_0]}_{ij} \sqrt{n} ( u^{[n]}_{ij} - u_{ij} ) + o_P(1), \\
                                     &= \{\Ex_{\theta_0} (\dot \ell^{[\theta_0]}_{(U,V)}\dot \ell^{[\theta_0],\top}_{(U,V)}) \} ^{-1}  \dot \ell^{[\theta_0]}  \sqrt{n} \, \vect ( u^{[n]} - u )  + o_P(1), \\
                                     &= \{\Ex_{\theta_0} (\dot \ell^{[\theta_0]}_{(U,V)}\dot \ell^{[\theta_0],\top}_{(U,V)}) \} ^{-1}  \dot \ell^{[\theta_0]}  J_{u,p} \sqrt{n} \, \vect(\hat p^{[n]} - p)  + o_P(1).
\end{align*}
\end{proof}

\section{Proofs of the results of Section~\ref{sec:GOF}}
\label{proofs:GOF}

\begin{proof}[\bf Proof of Proposition~\ref{prop:GOF}]
Under $H_0$ in~\eqref{eq:H0}, we can decompose the goodness-of-fit process as
\begin{align*}
  \sqrt{n} (u^{[n]} - u^{[\hat \theta^{[n]}]}) = \sqrt{n} (u^{[n]} - u) - \sqrt{n} (u^{[\hat \theta^{[n]}]} - u).
\end{align*}
From the delta method \cite[see, e.g.,][Theorem 3.1]{van98}, we then obtain that, under $H_0$,
$$
\sqrt{n} \, \vect(u^{[\hat \theta^{[n]}]} - u) = \sqrt{n} \, \vect(u^{[\hat \theta^{[n]}]} - u^{[\theta_0]}) = \dot u^{[\theta_0]} \sqrt{n} (\hat \theta^{[n]} - \theta_0) + o_P(1).
$$
From the assumptions, it follows that, under $H_0$, $\sqrt{n} \, \vect(u^{[n]} - u^{[\hat \theta^{[n]}]})$ has the same limiting distribution as
$$
\sqrt{n} \, \vect(u^{[n]} - u) - \dot u^{[\theta_0]} V_{u,p}^{[\theta_0]} \sqrt{n} \, \vect(\hat p^{[n]} - p).
$$
Finally, from Proposition~\ref{prop:asym:un}, we obtain that
\begin{align*}
  \sqrt{n} \, \vect(u^{[n]} - u^{[\hat \theta^{[n]}]}) &=  J_{u,p} \sqrt{n} \, \vect(\hat p^{[n]} - p) - \dot u^{[\theta_0]} V_{u,p}^{[\theta_0]} \sqrt{n} \, \vect(\hat p^{[n]} - p) + o_P(1).
\end{align*}
\end{proof}

\begin{proof}[\bf Proof of Proposition~\ref{prop:S:n:G}]
  From Proposition~\ref{prop:GOF}, the continuous mapping theorem and the fact that $u^{[\hat \theta^{[n]}]} \p u^{[\theta_0]}$ in $\R^{r \times s}$, we obtain that
  \begin{multline*}
    \diag(G \, \vect(u^{[\hat \theta^{[n]}]}))^{-1/2} \sqrt{n} \, G \, \vect(u^{[n]} - u^{[\hat \theta^{[n]}]}) \\
    =  \diag(G \, \vect(u^{[\theta_0]}))^{-1/2} G (J_{u,p} - \dot u^{[\theta_0]} V_{u,p}^{[\theta_0]}) \sqrt{n} \, \vect(\hat p^{[n]} - p) + o_P(1).
  \end{multline*}
  Consequently, when $(X_1,Y_1)$, \dots, $(X_n,Y_n)$ are independent copies of $(X,Y)$, the sequence
  $$
  \diag(G \, \vect(u^{[\hat \theta^{[n]}]}))^{-1/2} \sqrt{n} \, G \, \vect(u^{[n]} - u^{[\hat \theta^{[n]}]})
  $$
  is asymptotically centered normal with covariance matrix given by~\eqref{eq:cov}. The first claim is finally a consequence of Lemma~17.1 in \cite{van98}. The second claim immediately follows from the continuous mapping theorem.
\end{proof}

\bibliographystyle{chicago}
\bibliography{biblio}

\end{document}